%% file: main_arXiv.tex
\documentclass[twocolumn]{article}
\usepackage{arxiv}

\usepackage[utf8]{inputenc} 
\usepackage[T1]{fontenc}    
\usepackage[cmex10]{amsmath}
\usepackage{amssymb}
\usepackage{amsthm}
\usepackage{amsfonts}
\usepackage{bm}
\usepackage{hyperref}       
\usepackage{url}            
\usepackage{booktabs}       
\usepackage{float}
\usepackage{subcaption}
\usepackage{amsfonts}       
\usepackage{nicefrac}       
\usepackage{microtype}      
\usepackage{lipsum}         
\usepackage{graphicx}
\usepackage[numbers]{natbib}
\usepackage{doi}
\usepackage[textsize=small]{todonotes}
\usepackage{comment}
\usepackage{IEEEtrantools}
\usepackage{enumitem}
\usepackage[numbers]{natbib}

\setlist[itemize]{
  leftmargin=1.2em,
  labelsep=0.4em,
  itemsep=2pt,
  topsep=4pt
}

\usepackage{xcolor}

\definecolor{llm}{RGB}{235,242,255}
\definecolor{REPLCOLOR}{RGB}{236,255,240}
\definecolor{statecolor}{RGB}{60,60,60}
\definecolor{codecolor}{RGB}{120,60,160}
\definecolor{subrlmcolor}{RGB}{0,140,140}

\usepackage{tabularx,array}
\newcolumntype{Y}{>{\raggedright\arraybackslash}X} 
\usepackage[ruled,vlined,linesnumbered]{algorithm2e}
\usepackage{listings} 
\SetAlgoVlined
\DontPrintSemicolon

\input{math_definitions.def}

\title{Coded Task Offloading for Fluid Computing: A Privacy-Aware Approach under D2D Networks}

\date{}

\newif\ifuniqueAffiliation
\uniqueAffiliationtrue

\ifuniqueAffiliation 
\author{ 
    \href{https://orcid.org/0009-0002-1110-2099}{\includegraphics[scale=0.06]{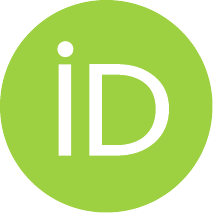}\hspace{1mm}Diego Cajaraville-Aboy}, 
    \href{https://orcid.org/0000-0002-5088-0881}{\includegraphics[scale=0.06]{orcid.pdf}\hspace{1mm}Manuel Fernández-Veiga},
    \href{https://orcid.org/0000-0003-1047-2143}{\includegraphics[scale=0.06]{orcid.pdf}\hspace{1mm}Ana Fernández-Vilas},
    \href{https://orcid.org/0000-0002-2367-2219}{\includegraphics[scale=0.06]{orcid.pdf}\hspace{1mm}Rebeca P. Díaz-Redondo}\\
	atlanTTic -- ICLAB, Universidade de Vigo\\
	Escuela de Ingeniería de Telecomunicación, Vigo, 36310, Spain \\
	\texttt{\{dcajaraville,mveiga,avilas,rebeca\}@det.uvigo.es}
}
\else
\usepackage{authblk}

\newbox{\orcid}\sbox{\orcid}{\includegraphics[scale=0.06]{orcid.pdf}} 

\author[1]{%
	\href{https://orcid.org/0009-0002-1110-2099}{\usebox{\orcid}\hspace{1mm}Diego Cajaraville-Aboy\thanks{\texttt{dcajaraville@det.uvigo.es}}}%
}
\author[1]{%
	\href{https://orcid.org/0000-0000-0000-0000}{\usebox{\orcid}\hspace{1mm}Nombre Supervisor 1\thanks{\texttt{sup1@det.uvigo.es}}}%
}
\author[1]{%
	\href{https://orcid.org/0000-0000-0000-0000}{\usebox{\orcid}\hspace{1mm}Nombre Supervisor 2\thanks{\texttt{sup2@det.uvigo.es}}}%
}

\affil[1]{atlanTTic -- ICLAB, Universidade de Vigo, Escuela de Ingeniería de Telecomunicación, Vigo, 36310, Spain}
\fi

\renewcommand{\headeright}{}
\renewcommand{\undertitle}{}
\renewcommand{\shorttitle}{Coded Task Offloading for Fluid Computing: A Privacy-Aware Approach under D2D Networks}

\hypersetup{
hidelinks,
pdftitle={Coded Task Offloading for Fluid Computing: A Privacy-Aware Approach under D2D Networks},
pdfauthor={Diego Cajaraville-Aboy},
pdfkeywords={},
}

\begin{document}

\twocolumn[{%
\begin{@twocolumnfalse}

\maketitle

\begin{abstract}
Fluid Computing aims to support distributed applications execution across heterogeneous cloud, edge, and device resources, motivating task execution mechanisms that adapt to dynamic and privacy-sensitive environments under runtime conditions. In this context, current task offloading schemes rarely address privacy risks and information leakage under adversarial execution settings; furthermore, most coded computing proposals focus on straggler mitigation without considering system-level objectives such as energy awareness. This paper proposes a coded task offloading scheme for D2D networks under stochastic task arrivals and queue-based dynamics. The proposal combines task offloading techniques with linear secret sharing schemes, where tasks are encoded into redundant shares to support threshold-based recovery, straggler mitigation, and privacy preservation while enhancing system performance. Then, we formulate a privacy-aware offloading problem that jointly optimizes delay and energy while penalizing the theoretical privacy leakage of coded tasks under noisy leakage observations. The problem is solved using a branch-and-bound solver alongside a lightweight heuristic scheduler, both of which are evaluated through a discrete-event simulator. Results show that coded offloading improves the delay--energy trade-off with respect to classical full and parallel offloading schemes, while the heuristic achieves near-optimal performance, outperforming baseline and state-of-the-art solvers. The results also show how privacy leakage penalties reshape offloading decisions, exposing an inherent delay--energy--privacy trade-off.
\end{abstract}

\keywords{Coded Task Offloading \and Privacy Leakage \and Linear Secret Sharing \and D2D Networks \and Fluid Computing }

\vspace{1em}

\end{@twocolumnfalse}
}]

\section{Introduction}
\label{sec:introduction}

Current digital systems are increasingly surrounded by distributed and data-driven applications, such as IoT or AI services, running across heterogeneous resources ranging from cloud-native infrastructures and edge servers to resource-constrained devices. These deployments aim to improve response delay, leverage nearby computational capabilities, and adapt execution to dynamic application requirements. Enhancing the deployment and performance of distributed applications is therefore a relevant research challenge, especially in future 6G and D2D scenarios, where communication, computation, and sensing resources are expected to be jointly managed under massive and dynamic environments~\cite{liu2025integrated,wen2024integrated}. Moreover, distributed applications often involve the exchange or processing of sensitive information, which makes privacy and data-management requirements essential, especially in applications such as distributed AI, federated learning, etc.

Several works in the state of the art have addressed this problem within the Cloud-to-Edge Continuum~\cite{gkonis2023survey,ullah2023orchestration}, where computation, storage, and networking resources are distributed across multiple computational tiers to exploit the benefits of each tier and improve latency, energy consumption, or resource utilization. However, the literature has also highlighted that these systems are still limited by fragmented infrastructures, weak interoperability, centralized control assumptions, and poor adaptation to highly dynamic workloads~\cite{marchese2023application,rosmaninho2025edge}. These limitations may lead to suboptimal placements, inefficient communications between tiers and energy consumption, thereby increasing the digital carbon footprint~\cite{patel2024modeling}. In addition, privacy requirements are not always integrated as a first-class design dimension, despite being essential when computation is moved across heterogeneous and potentially untrusted resources.

In recent years, the Fluid Computing paradigm~\cite{lopez2022depth,iorio2022computing} has emerged as a novel approach to overcome these drawbacks. This paradigm abstracts the continuum as a unified compute--storage--network fabric, allowing applications to be deployed according to their requirements while hiding the complexity of the underlying infrastructure. This approach can support fluid deployments in which applications dynamically adapt their placement and execution according to runtime conditions, such as resource availability, mobility, latency, or energy constraints. Even though research in this field is still scarce, especially in terms of algorithmic solutions and orchestration architectures, two promising mechanisms can support and enhance fluid deployments: task offloading and coded computing.

Task offloading~\cite{peng2024survey,pournazari2025computation} has been widely explored as a useful mechanism that allows devices to delegate computational workload to nearby edge servers, cloud nodes, or peer devices in order to improve latency, energy consumption, or execution cost. It has been applied in multiple scenarios, including mobile edge/cloud computing, IoT systems and vehicular networks~\cite{wang2025d2d}. Although task offloading is highly useful, most proposals rely on classical approaches inherited from mobile cloud computing and fog computing settings. As a result, they often do not fully capture the dynamic execution requirements needed for fluid deployments, such as time-varying resource availability and runtime scheduling of communication and computation resources. Moreover, most works focus on system-performance metrics and usually leave privacy concerns as a secondary aspect. Existing privacy-aware task offloading schemes mainly model privacy from the perspective of user metadata or sensitive context information, such as location, task association, or task type, rather than information leakage from the offloaded computational payload~\cite{wang2023location,zhang2024novel}. This limitation becomes more relevant in Fluid Computing environments, where tasks may be dynamically distributed across heterogeneous devices that can be untrusted or only partially observable.

On the other hand, coded computing~\cite{li2020coded,ng2020survey} relies on coding theory to introduce redundancy during workload execution, enabling robustness in distributed computing systems through straggler mitigation while providing data privacy. This idea is conceptually aligned with secure multi-party computation protocols. Linear secret sharing (LSS) schemes, which form a foundational class that cover most of known and practical secret sharing schemes, have also been used as a baseline mechanism to provide redundancy and privacy in secure coded computation~\cite{bitar2020minimizing,schlegel2022privacy}. These approaches have been mainly applied to task-offloading scheme for linear inference and distributed matrix multiplication applications, but they do not often cover general large-scale distributed computation. Moreover, they usually focus on coding mechanisms and recovery guarantees, while they do not fully model general task-offloading dynamics, such as stochastic task arrivals, resource-constrained devices, system evolution or system performance in terms of energy consumption. Privacy concerns are also typically left aside beyond the threshold guarantees of LSS, where the secret can only be recovered after collecting a sufficient number of coded shares. This is especially relevant because even coded data may be exposed through noisy leakage observations, with side-channel attacks being widely recognized as practical cybersecurity threats based on indirect execution signals such as timing, power consumption, or electromagnetic emissions~\cite{beck2024survey,gupta2026security}, along with other potential vectors such as wireless exposure or execution traces.

Even though task offloading and coded computing share common objectives, works that jointly consider both mechanisms, especially under privacy concerns, remain scarce. To address this gap, we propose a privacy-aware coded task offloading scheme for fluid deployments over heterogeneous D2D networks, where tasks are encoded with LSS techniques to enhance system performance through threshold-based recovery from the fastest selected devices. We instantiate the scheme as a time-slotted queue-based scheduling system with stochastic task arrivals, and formulate a delay--energy optimization problem with an explicit theoretical privacy-leakage penalty under noisy leakage observations. Specifically, this paper presents the following contributions:

\begin{itemize}
    \item We model a coded task offloading scheme that combines LSS-based task encoding with time-slotted queue dynamics and stochastic task arrivals over D2D networks.    
    \item We quantify the theoretical information leakage of the LSS-based coded tasks under noisy leakage observations, relying on Fourier-analytic leakage bounds to define the privacy-leakage penalty.
    \item We formulate a privacy-aware optimization problem that jointly minimizes task delay and energy consumption under a privacy-leakage penalty term, and solve it using a branch-and-bound (BnB)-based greedy solver and a lightweight heuristic solver.
    \item We implement a discrete-event simulator to evaluate the proposed system under realistic queue evolution, transmission/computation resource contention, and memory-admission dynamics.
    \item Finally, we analyze the proposed coded task offloading scheme under different validation settings, assessing the impact of the privacy-leakage penalty on the delay--energy trade-off and comparing the proposed solvers with baseline and state-of-the-art task-offloading policies.
\end{itemize}

The remainder of this paper is organized as follows. Section~\ref{sec:related_work} provides a comprehensive state-of-the-art review on task offloading schemes, including privacy-aware approaches. Section~\ref{sec:system_model} outlines the system model for the coded task offloading scheme, including the underlying network, delay, energy consumption, and privacy leakage models. Section~\ref{sec:methodology} defines the considered optimization problem, which is proven to be NP-hard, and the proposed solvers. Section~\ref{sec:experimental_setup} describes the experimental setup and simulation settings, and Section~\ref{sec:results} presents the evaluation results. Finally, Section~\ref{sec:conclusions} concludes the paper, highlighting the contributions and outlining future work.

\section{Related Work}
\label{sec:related_work}

Task/Computation/Workload offloading has been widely studied as a key mechanism to optimize resource allocation in distributed computing networks. These techniques support resource-constrained devices by delegating computation to other nearby devices, edge/fog resources, or cloud capabilities. In general, existing works can be classified according to: i) the execution scenario, including multi-access edge computing, mobile cloud computing, fog-assisted computing, D2D-enabled offloading, etc; ii) the task model, ranging from full task offloading (the entire task is assigned to a single helper node\footnote{A helper node, or simply helper, denotes any computational resource to which a workload is offloaded}) to parallel task execution (task is fragmented into several subtasks to be offloaded to different helper devices); and iii) the considered performance objectives (mainly delay, energy consumption, resource cost, queue stability, or deadline satisfaction). Several surveys~\cite{sadatdiynov2023review} have covered these optimization perspectives and the most common solution techniques for task offloading, including convex optimization, game theory, heuristic methods, and learning-based approaches.

For example, \cite{sufyan2020computation} formulates task offloading in a mobile edge computing scenario as a multi-objective optimization problem to jointly minimize delay, energy consumption, and payment cost. \cite{jia2021energy} studies massive task scheduling in fog--cloud systems by considering the joint impact of energy consumption and execution delay. \cite{niu2025pipelining} considers a pipelining task-offloading strategy in 6G networks to reduce delay, where task scheduling decisions are based on a high-level Cybertwin entity. \cite{khan2023dynamic} formulates task assignment in edge--cloud systems as a graph-based scheduling problem to jointly consider delay, energy, and processing cost. All these works are considered from a high-level orchestration plane that has a complete overview of the system. Other works provide solutions for these (full) task offloading schemes that do not rely on a high-level plane and propose more scalable techniques. \cite{ding2021potential} models the interaction among devices, edge, and cloud resources through a potential-game formulation, where each device selects offloading strategies according to delay and energy costs. \cite{aggarwal2025distributed} proposes a distributed offloading model for MEC systems from a mean-field perspective, focusing on scalable decisions when many devices compete for edge resources. \cite{he2024multi} studies D2D-assisted MEC using multi-agent decision-making to minimize average task delay under deadline constraints. However, most existing works focus on the allocation and scheduling decision itself, abstracting the offloaded workload as directly executable at the selected computing node. This leaves less explored the joint impact of stochastic task arrivals, queue evolution, and runtime resource contention, which are essential to model realistic distributed networks.

Another line of work has considered parallel task offloading, which is especially relevant in IoT and D2D scenarios as computational resources of a single helper may not be enough to execute large workloads efficiently. In this direction, \cite{malik2022efficient} proposes a many-to-one matching-based parallel offloading scheme where subtasks are assigned to fog nodes to reduce task completion delay. Nevertheless, this type of parallel execution requires all subtasks to be completed before recovering the final result. Therefore, even though parallelism can reduce the workload assigned to each helper, the task completion time remains limited by the slowest selected node. On the other hand, coded computing and secret-sharing-based schemes introduce structured redundancy by encoding the workload into coded subtasks or shares, which consequently enables straggler tolerance and threshold-based recovery. For instance, \cite{schlegel2022privacy} proposes a coded MEC scheme for distributed inference, where Shamir's secret sharing protects input data against colluding edge servers while coding mitigates straggling servers. \cite{bitar2021private} studies secure distributed computation based on secret sharing, focusing on the recovery threshold and communication load under privacy and security constraints. Although these coded-computation approaches show how coding can support both straggler mitigation and data privacy, they are usually formulated for specific linear inference or distributed matrix-computation settings, rather than for multi-purpose and energy-aware task offloading schemes.

\begin{table*}[tbh]
\centering
\caption{Comparison of selected task offloading and privacy-aware computation schemes (N/A: not apply)}
\label{tab:related_work_comparison}
\scriptsize
\setlength{\tabcolsep}{2pt}
\renewcommand{\arraystretch}{1.08}
\begin{tabularx}{\textwidth}{@{}l Y Y Y Y Y Y@{}}
\toprule
\textbf{Reference} &
\textbf{Scenario} &
\textbf{Execution model} &
\textbf{Privacy concern} &
\textbf{Privacy metric} &
\textbf{Privacy role} &
\textbf{Optimized criteria} \\
\midrule
\cite{sufyan2020computation} (2020) & MEC/Cloud & Full offloading & Not considered & N/A & N/A & Energy, delay, payment cost \\
\cite{malik2022efficient} (2022) & IoT/Fog & Parallel offloading & Not considered & N/A & N/A & Delay \\
\cite{schlegel2022privacy} (2022) & MEC & Coded offloading & Input data privacy against colluding servers & N/A & Coding guarantee & Delay \\
\cite{wang2023location} (2023) & MEC & Full offloading & Device location privacy during server selection & KL divergence & Objective term & Energy, delay, location privacy \\
\cite{zhang2024novel} (2024) & Edge & Full offloading & Privacy of computational task type & Privacy entropy & Objective term & Energy, delay, task-type privacy \\
\cite{zhang2025joint} (2025) & MEC & Full offloading & Uncertainty in offloading decisions & Privacy entropy & Objective term & Energy, delay, privacy \\
\cite{xia2024privacy} (2024) & MEC & Full offloading & Local data privacy & N/A & Data protection with local differential privacy & Energy, delay \\
\cite{baek2020privacy} (2020) & D2D & Full offloading & Helper's trust & N/A & Redundant offloading and verification mechanisms & Delay \\
\cite{li2021security} (2021) & D2D & Full offloading & Security-critical workload & Security time & Hard constraint & Energy, queue stability \\
\textbf{Our work} & \textbf{D2D} & \textbf{Coded offloading} & \textbf{Information leakage from coded payloads} & \textbf{Mutual information under LSS} & \textbf{Objective penalty} & \textbf{Delay, energy, privacy leakage} \\
\bottomrule
\end{tabularx}
\end{table*}

Privacy-aware task offloading has also received increasing attention. For example, \cite{wang2023location} models privacy-aware task offloading by incorporating a privacy leakage metric, measured with KL divergence, based on the leakage of user position when selecting MEC servers. \cite{yu2023privacy} considers a proxy-assisted MEC offloading to protect both location privacy and task-association privacy during server selection. \cite{zhu2021privacy} studies online task offloading under location privacy leakage, where a specific metric is designed to capture information revealed by sequential decisions. Other proposals quantify privacy through information-theoretic or entropy-based metrics. \cite{zhang2024novel} introduces a privacy-awareness task offloading approach where the uncertainty of the computational task type is quantified through privacy entropy. \cite{zhang2025joint} considers privacy-aware task offloading in healthcare scenarios, where privacy entropy is considered to measure uncertainty of offloading decisions. Differential privacy has also been incorporated into task offloading by perturbing sensitive data or location-related information before making offloading decisions~\cite{xia2024privacy}. In D2D scenarios, \cite{li2021security} considers security-aware collaborative task offloading by introducing a security workload constraint (in terms of CPU capabilities and data size). \cite{baek2020privacy} studies trustworthy and replicated offloading where helper selection reduces repeated offloading to the same helper, while incorporating blockchain-based verification. \cite{razaq2021privacy} studies privacy-aware collaborative task offloading in fog computing, where tasks are allocated according to security credits and requirements. These works show that privacy and security are becoming increasingly relevant in the task offloading field. However, they mainly protect contextual information from the perspective of user metadata, such as location, task association, task type, or helper trust. This leaves mostly unexplored information-theoretic approaches, particularly regarding the information that helper devices may infer from the coded payload they process under noisy leakage observations (such as side-channel attacks). 

Motivated by the lack of task offloading models that jointly integrate coding and information-leakage privacy concerns, this paper models a task offloading from the point of view of a coded computing problem under noisy leakage observations. Specifically, we propose a coded task offloading scheme based on LSS, where each task is encoded into multiple shares and can be recovered after a threshold number of helper computations are completed. Unlike previous offloading works, privacy is modeled as the information leaked by the shares processed by potentially adversarial helper devices. This allows us to quantify the impact of privacy leakage on the delay--energy trade-off of a distributed computing system. Table~\ref{tab:related_work_comparison} summarizes the main differences between representative works and our proposal.

\section{System Model}
\label{sec:system_model}

This section presents the system model of the proposed coded task offloading scheme over D2D networks. We first introduce the D2D network setting and the operational parameters of the time-slotted scheduling system (Subsection~\ref{subsec:network_model}). We then describe the capabilities of the devices and their queue-based evolution dynamics (Subsection~\ref{subsec:device_model}), followed by the communication model for D2D transmissions (Subsection~\ref{subsec:communication_model}). Next, we define the task model, including the LSS mechanism used for coded execution (Subsection~\ref{subsec:task_model}). Finally, we formulate the main performance metrics considered in this work, namely task delay, energy consumption, and privacy leakage under noisy leakage observations (Subsections~\ref{subsec:delay_model}, \ref{subsec:energy_model}, and \ref{subsec:privacy_leakage_model}). The main notations used in this section are summarized in Table~\ref{tab:notations}.

\begin{table}[tbh]
    \centering
    \caption{Summary of Main Notations and Descriptions}
    \label{tab:notations}
    \small
    \setlength{\tabcolsep}{3pt}
    \renewcommand{\arraystretch}{1.02}
    \begin{tabularx}{\columnwidth}{@{} >{\centering\arraybackslash}p{0.26\columnwidth} >{\raggedright\arraybackslash}X @{}}
        \toprule
        \textbf{Notations} & \textbf{Description} \\  \midrule \midrule
        \multicolumn{2}{c}{System Model} \\  \midrule
        $\mathcal{N}$ & Set of $N$ heterogeneous devices   \\  \addlinespace 
        $\mathcal{T}$ & Set of $T$ discrete time slots for task scheduling \\  \addlinespace
        $\tau$ & Fixed duration of each time slot \\  \addlinespace
        $f_i$ & CPU frequency of device $i$ in cycles/second  \\  \addlinespace
        $\kappa_i$ & Effective switched capacitance coefficient of CPU chip architecture of device $i$ \\  \addlinespace
        $M_i$ & Memory capacity of device $i$ for task allocation \\  \addlinespace
        $\mathcal{Q}_i^{\{\text{cmp}/\text{tx}\}}$ & Computation/Transmission queue of device $i$ \\  \addlinespace
        $B_{ij}(t)$ & Bandwidth allocated to the D2D link $i \to j$ \\  \addlinespace
        $R_{ij}(t)$ & Transmission rate for the D2D link $i \to j$ \\  \addlinespace 
        $\mathcal{K}_i(t)$ & Set of arrival tasks from device $i$ at time slot $t$ \\  \addlinespace
        $\ell_{ik}$ & Input data size for task $(i,k)$ in bits \\  \addlinespace
        $c_{ik}$ & CPU cycles per bit for task $(i,k)$ in cycles/bit \\  \addlinespace
        $[n_{ik},s_{ik},t_{ik},q_{ik}]$ & Coding parameters for encoding task $(i,k)$ \\  \addlinespace
        $\psi_{ik}^j$ & Binary decision variable: $1$ if task $(i,k)$ from time slot $t$ is assigned to device $j$ \\  \addlinespace 
        $\mathcal{J}_{ik}$ & Set of helper devices for task $(i,k)$\\  \addlinespace \midrule
        \multicolumn{2}{c}{Performance Metrics} \\  \midrule
        $W_{ik}^{\{\text{cmp}/\text{tx}\}}$ & Waiting time in computation queue and transmission queue \\  \addlinespace
        $T_{ik}^\text{cmp}$ & Computation time for task $(i,k)$ (local or remote subtask) \\  \addlinespace
        $T_{ik}^{\text{tx},(m)}$ & Transmission time for the $m$-th subtask of task $(i,k)$ \\  \addlinespace
        $D_{ik}$ & Total completion delay for task $(i,k)$ \\  \addlinespace
        $E_{ik}$ & Total energy consumption for task $(i,k)$ \\  \addlinespace
        $L_{ik}$ & Privacy leakage risk for task $(i,k)$ based on mutual information \\  \addlinespace
        \bottomrule
    \end{tabularx}
\end{table}

\subsection{Network model}
\label{subsec:network_model}

We consider a D2D network consisting of a set of heterogeneous devices, denoted by $\mathcal{N}=\{1,\dots,N\}$, where each device can communicate directly with any other device through D2D links. The proposed task offloading scheme operates in a time-slotted system, with time slots denoted by $\mathcal{T} = \{1,\dots,T\}$ and coordinated by a high-level scheduler, as illustrated in Figure~\ref{fig:system_model}. For simplicity, each time slot has a fixed duration $\tau$, although the model can be easily extended to variable slot durations. During each time slot $t \in \mathcal{T}$, computational tasks arrive at each device according to a Poisson process with rate $\lambda_i$ for device $i \in \mathcal{N}$, resulting in a per-device set of generated tasks $\mathcal{K}_i(t)$. At the beginning of the next time slot, the scheduler considers all incoming tasks during the previous slot, denoted by $\mathcal{K}(t) \defeq \bigcup_{i\in \mathcal{N}} \mathcal{K}_i(t)$, and they are immediately scheduled (i.e., there is no extra delay between the scheduling epoch and the beginning of the next time slot). The scheduling policy takes into account the incoming tasks and the current system state, which captures the resources already occupied by unfinished tasks. For each incoming task, the scheduler chooses between two decisions: local computation at the source device or offloading to a set of helper devices.

\begin{figure*}[tbh]
    \centering
    \includegraphics[width=1.0\linewidth]{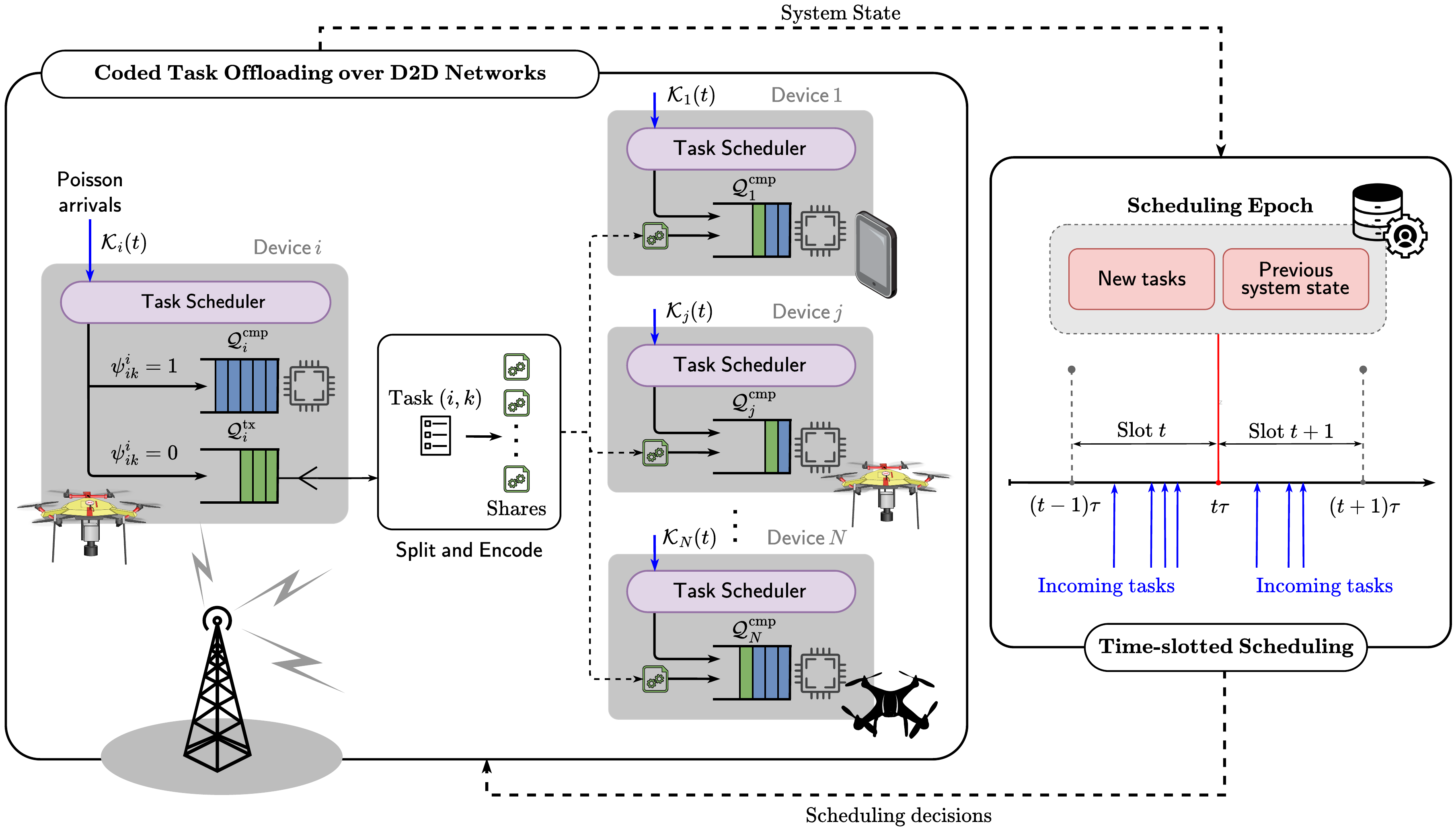}
    \caption{System model of coded task offloading scheme over D2D networks, under a time-slotted scheduling system.}
    \label{fig:system_model}
\end{figure*}

\subsection{Device model}
\label{subsec:device_model}

Each device is modeled as dynamic-queue system, where each device $i$ maintains two FCFS (First-Come, First-Served) single-server queues: computation queue $\mathcal{Q}_i^\text{cmp}$ and transmission queue $\mathcal{Q}_i^\text{tx}$. The device $i$ is defined by the tuple $(f_i, \kappa_i, M_i, \mathcal{Q}_i^\text{cmp}, \mathcal{Q}_i^\text{tx})$, where $f_i$ represents the CPU frequency (in cycles per time unit) for executing tasks, $\kappa_i$ denotes the effective switched capacitance coefficient depending on its CPU chip architecture, $M_i$ is the memory capacity for allocated task data (in bits) in both $\mathcal{Q}_i^\text{cmp}, \mathcal{Q}_i^\text{tx}$. The incoming tasks are allocated in each queue depending on the scheduling decisions by the scheduler. If local execution decision is chosen, the task is queued in $\mathcal{Q}_i^\text{cmp}$ until the device CPU is idle. If remote decision is chosen, the task is queued in $\mathcal{Q}_i^\text{tx}$ until the device transmission unit is idle.

\subsection{Communication model}
\label{subsec:communication_model}

We focus exclusively on direct D2D communications among end devices. The communication layer is modeled as a clustered New Radio (NR) sidelink (PC5) system (within 5G technology), in which a central coordinator assigns orthogonal radio resources to the active D2D transmissions. NR sidelink resources are organized on an OFDM time--frequency grid and scheduled in units of physical resource blocks (PRBs). Accordingly, the coordinator selects the set of active transmitter--receiver pairs and grants each active link a disjoint set of PRBs for its transmission interval, so that simultaneous links do not collide in time--frequency. Each device is equipped with a single transmission interface for task offloading, therefore it cannot handle multiple transmissions in parallel to different helper devices.

In our model, whenever device $i$ is allowed to transmit to device $j$ during time slot $t$, the coordinator assigns a bandwidth portion $B_{ij}(t)$ to link $i \to j$, i.e., the effective set of sidelink radio resources assigned to that transmission. The transmission rate on each directed D2D link $i \to j$ is modeled using the Shannon--Hartley expression for a band-limited additive white Gaussian noise (AWGN) channel
\begin{align}
    R_{ij}(t) = B_{ij}(t) \log_2\left( 1 + \frac{ p_{ij}(t) g_{ij}(t)}{N_0 B_{ij}(t) }\right),
\end{align}
where $p_{ij}(t)$ is the power transmission used by device $i$ to send data to $j$, $g_{ij}(t)$ is the channel gain between devices $i$ and $j$, and $N_0$ is the noise spectral density. The channel gain is modeled through the Friis free-space propagation formula, namely
\begin{align}
    g_{ij}(t)= \left(\dfrac{c}{4\pi f_\mathrm{c} d_{ij}(t)}\right)^2, \quad i\neq j,
\end{align}
where $c$ is the speed of light, $f_\mathrm{c}$ is the carrier frequency, and $d_{ij}(t)$ is the distance between devices $i$ and $j$.

\subsection{Task model}
\label{subsec:task_model}

Each task $(i,k)\in \mathcal{K}_i(t)$ is defined as the tuple $(\ell_{ik}, c_{ik}, [n_{ik},s_{ik},t_{ik},q_{ik}])$, where $\ell_{ik}$ is the input size (in bits) required to perform the task, $c_{ik}$ is the average number of CPU cycles required per input bit, and the parameters $[n_{ik},s_{ik},t_{ik},q_{ik}]$ define the LSS configuration used if task $(i,k)$ is offloaded: $n_{ik}$ is the number of encoded shares\footnote{Hereafter, secret shares are also referred to as coded shares, or simply as shares.} generated from the subtasks, $s_{ik}$ is the number of source subtasks in which the input size is divided, $t_{ik}$ is the recovery threshold, and $q_{ik}$ is the size of the finite field over which the encoding is defined. When task $(i,k)$ is offloaded, its input is first partitioned into $s_{ik}$ source subtasks, which are then encoded into $n_{ik}$ shares using the LSS scheme. Therefore, each selected helper device, distinct from the others, receives one encoded share and executes the corresponding remote computation over masked data, so that no individual helper directly observes the original task input. The task output is recoverable once any $t_{ik}$ of these remote executions finish successfully.

Additionally, we consider a non-perfect secret sharing scheme which may achieve high information rate\footnote{Since the scheme is non-perfect, intermediate sets of shares may reveal partial information, but any set below a privacy threshold reveals no information about the secrets in the ideal threshold-based scheme.}, i.e., the size of each share can be smaller than the size of the secrets~\cite{chen2007secure}. In our model, we adopt the conservative case in which each encoded share has the same size as one source subtask, and therefore yields a worst-case communication and computation cost for the encoded offloading process.
 
At the beginning of time slot $t$, for each task $(i,k)$ in the batch $\mathcal{K}_{i}(t)$, the scheduler chooses a set of binary decision variables denoted by $\psi_{ik}^j \in \{0,1\}, \, j \in \mathcal{N} $. If task $(i,k)$ is executed locally, then $\psi_{ik}^i =1$; otherwise, if it is offloaded, then $\psi_{ik}^i = 0$, and $\psi_{ik}^j = 1$ for $j \neq i$ means that helper device $j$ receives one share of task $(i,k)$. Therefore, the number of selected helpers must match the number of encoded shares: 
\begin{align}
    \sum_{j \neq i} \psi_{ik}^{j} = (1-\psi_{ik}^{i}) n_{ik}, \quad \forall (i,k) \in \mathcal{K}_i(t), \, i \in \mc{N}
\end{align}

\subsection{Task delay model}
\label{subsec:delay_model}

We define the delay of a task as the elapsed time from the beginning of time slot $t$ (when the scheduling decision is made) until the task completes its execution. For a locally executed task, completion occurs when its local computation finishes. For an offloaded task, completion occurs when a sufficient (the recovery threshold) number of encoded shares have been transmitted, queued at their corresponding helper devices, and remotely computed. Depending on the scheduling decision of task $(i,k)$, its task delay is computed as
\begin{align}
    D_{ik} = \psi_{ik}^{i} \, D_{ik}^{\text{loc}} + (1-\psi_{ik}^{i}) \, D_{ik}^{\text{off}},
\end{align}
where $D_{ik}^{\text{loc}}$ is the local task delay, and $D_{ik}^{\text{off}}$ is the offloaded task delay.

\subsubsection{Local execution}
In case of local task execution ($\psi_{ik}^i=1$), no transmission is required and task $(i,k)$ is appended to the computation queue $\mathcal{Q}_i^{\text{cmp}}$. Therefore, the task delay is computed as
\begin{align}
    D_{ik}^{\text{loc}} = W_{ik}^\text{cmp} + T_{ik}^\text{cmp} = W_{ik}^\text{cmp} + \dfrac{\ell_{ik} c_{ik}}{f_i},
\end{align}
where $W_{ik}^\text{cmp}$ is the waiting time in $\mathcal{Q}_i^{\text{cmp}}$ under an FCFS discipline, and  $T_{ik}^\text{cmp}$ is the local computation time.

In terms of queue dynamics, for each device $i$, every job entering $\mathcal{Q}_i^{\text{cmp}}$ (either a local task or a remotely offloaded share) is indexed in order of arrival as $p=1,2,\dots$, with arrival time $\mathrm{a}_p^\text{cmp}$. The corresponding start-of-service time is
\begin{align}
    \mathrm{s}_p^\text{cmp} = 
    \begin{cases}
        \mathrm{a}_1^\text{cmp}, & \text{if} \, p=1\\
        \max\{ \mathrm{a}_p^\text{cmp}, \mathrm{c}_{p-1}^\text{cmp} \}, & \text{if} \, p\geq 2
    \end{cases},
\end{align}
and the completion time is
\begin{equation}
    \mathrm{c}_p^\text{cmp} = \mathrm{s}_p^\text{cmp} + T_p^\text{cmp}.
\end{equation}
Hence, the computation waiting time is
\begin{equation}
    W_{ik}^\text{cmp} = \mathrm{s}_p^\text{cmp} - \mathrm{a}_p^\text{cmp}
\end{equation}
if job $p$ corresponds to a local task $(i,k)$ (or, equivalently, $W_{ik}^{\text{cmp}, (m)}$ if job $p$ corresponds to the $m$-th remote share of task $(i,k)$).

\subsubsection{Task offloading}

On the other hand, in case of remote execution ($\psi_{ik}^i=0$), the $n_{ik}$ encoded shares from task $(i,k)$ through the corresponding LSS scheme, are appended to the transmission queue $\mathcal{Q}_i^{\text{tx}}$. The overall offloading process is summarized in Figure~\ref{fig:coded_task_offloading}. Let
\begin{equation}
    \mathcal{J}_{ik} = \{ j \neq i \, : \, \psi_{ik}^j=1 \} = \{ j_{ik}^{(1)}, \dots, j_{ik}^{(n_{ik})} \},
\end{equation}
denote the set of helper devices selected for task $(i,k)$, so that share $m$ is transmitted from device $i$ to device $j_{ik}^{(m)}$. Since each encoded share carries one $s_{ik}$-fraction of the original task workload, its transmission time of $m$-th share is
\begin{equation}
    T_{ik}^{\text{tx},(m)} = \dfrac{\ell_{ik}}{s_{ik}  R_{i\,j_{ik}^{(m)}}(t)}.
\end{equation}

\begin{figure}[tbh]
    \centering
    \includegraphics[width=0.95\linewidth]{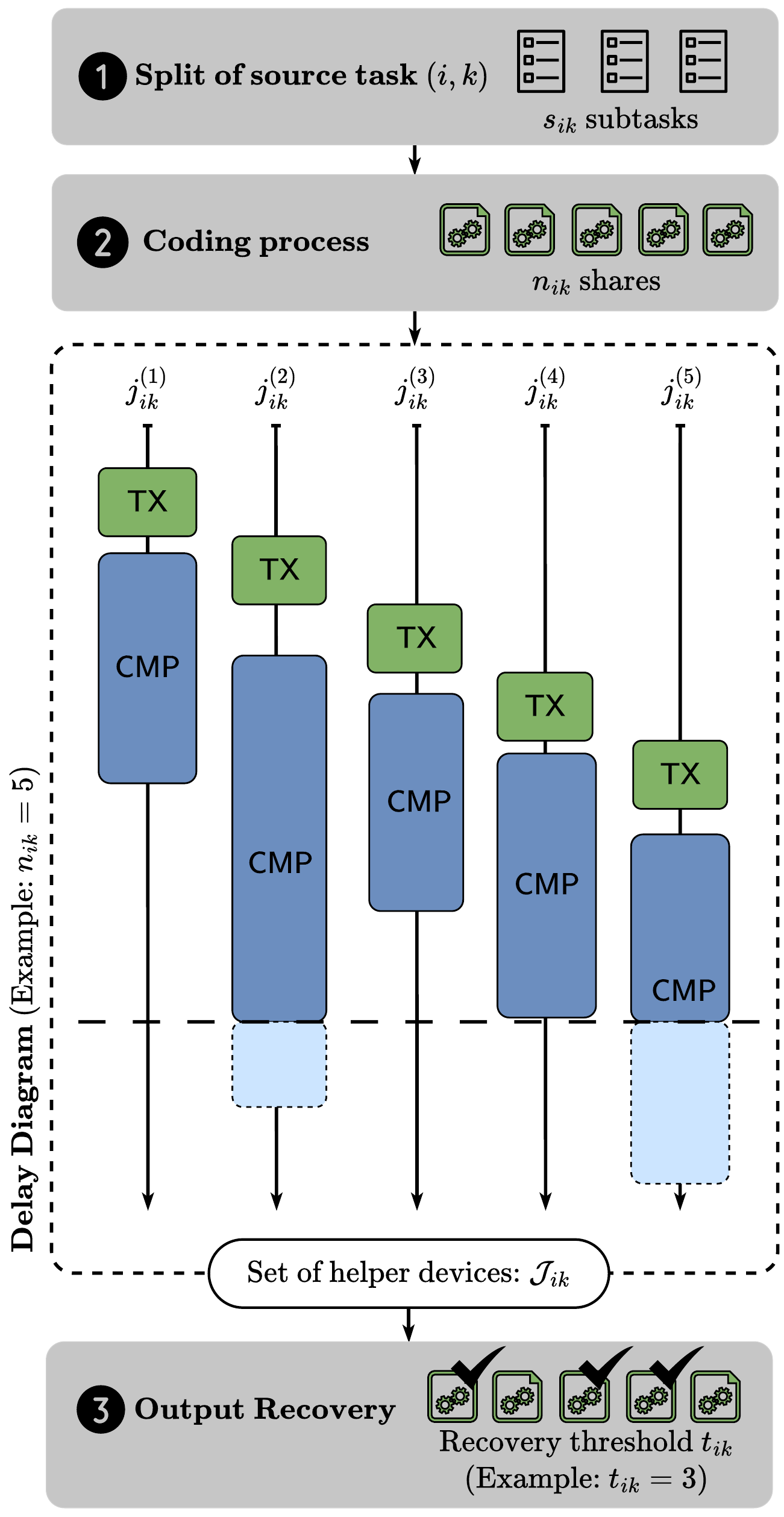}
    \caption{Representation of the coded task offloading scheme}
    \label{fig:coded_task_offloading}
\end{figure}

In terms of queue dynamics, jobs in $\mathcal{Q}_i^{\text{tx}}$ are indexed in order of arrival as $q=1,2,\dots$, with arrival time $\mathrm{a}_q^\text{tx}$, under a FCFS discipline. The start-of-service time is 
\begin{align}
    \mathrm{s}_q^\text{tx} = 
    \begin{cases}
        \mathrm{a}_1^\text{tx} & \text{if} \, q=1\\
        \max\{ \mathrm{a}_q^\text{tx}, \mathrm{c}_{q-1}^\text{tx} \} & \text{if} \, q\geq 2
    \end{cases},
\end{align}
and the completion time of the transmission service is
\begin{equation}
    \mathrm{c}_q^\text{tx} = \mathrm{s}_q^\text{tx} + S_q^\text{tx},
\end{equation}
where $S_q^\text{tx}$ denotes the transmission service time of job $q$. In particular, if job $q$ corresponds to task $(i,k)$, then
\begin{equation}
    S_q^\text{tx} = S_{ik}^{\mathrm{tx}}=\sum_{m=1}^{n_{ik}} T_{ik}^{\mathrm{tx},(m)}.
\end{equation}
Therefore, the waiting time in the transmission queue is 
\begin{equation}
    W_{ik}^\text{tx} = \mathrm{s}_q^\text{tx} - \mathrm{a}_q^\text{tx}.
\end{equation}

Then, the encoded shares are transmitted sequentially to each helper devices but executed remotely in parallel, i.e., once the $m$-th share reaches helper $j_{ik}^{(m)}$, it can join the helper's computation queue and be executed independently of the remaining share transmissions. Since the adopted LSS scheme only requires $t_{ik}$ completed share-results, the task completes when the fastest feasible subset of $t_{ik}$ shares has finished its remote computation. Therefore, the completion delay of an offloaded task $(i,k)$ is
\begin{equation} 
\begin{aligned} 
    D_{ik}^{\mathrm{off}} 
    &= W_{ik}^{\mathrm{tx}} + \min_{\substack{\mathcal{U}\subseteq\{1,\dots,n_{ik}\}\\ |\mathcal{U}|=t_{ik}}} \max_{m\in\mathcal{U}} \\ 
    &\quad \left( \sum_{\ell=1}^{m} T_{ik}^{\mathrm{tx},(\ell)} + W_{ik}^{\mathrm{cmp},(m)} + T_{ik}^{\mathrm{cmp},(m)} \right). 
\end{aligned}
\end{equation}
where $W_{ik}^{\mathrm{cmp},(m)}$ and 
\begin{align}
    T_{ik}^{\mathrm{cmp},(m)} = \frac{\ell_{ik} \, c_{ik}}{s_{ik} \, f_{j_{ik}^{(m)}}}
\end{align}
are, respectively, the waiting time and computation time of $m$-th share $m$ at helper $j_{ik}^{(m)}$. 

\subsection{Task energy consumption model}
\label{subsec:energy_model}

We define the energy consumption of a task as the total amount of energy consumed by the devices involved in its execution, including local/remote computation and, when applicable, wireless transmission.

In task-offloading models, the energy consumed by a CPU to process a workload is commonly represented through dynamic power models. Accordingly, if task $(i,k)$ is executed locally ($\psi_{ik}^i=1$), the required number of CPU cycles is $\ell_{ik}\, c_{ik}$ and, therefore, the local computation energy is given by
\begin{equation}
    E_{ik}^{\mathrm{loc}}=\kappa_i \, f_i^2 \, \ell_{ik} \, c_{ik}.
\end{equation}

Conversely, when the task is offloaded ($\psi_{ik}^i=0$), the total energy consumption depends on: i) the wireless transmission energy required to send the encoded shares to the selected helper devices; and ii) the required remote computation energy consumed at the helper devices. Under previous considerations, the total transmission energy is
\begin{equation} 
    E_{ik}^{\mathrm{tx}} = \sum_{m=1}^{n_{ik}} p_{i\,j_{ik}^{(m)}}(t)\,T_{ik}^{\mathrm{tx},(m)} \\ 
    = \sum_{m=1}^{n_{ik}} \frac{\ell_{ik}\,p_{i\,j_{ik}^{(m)}}(t)}{s_{ik}\,R_{i\,j_{ik}^{(m)}}(t)}. 
\end{equation}

For the computation energy consumption, we adopt a cancellation-aware mechanism, i.e., when the fastest $t_{ik}$ completed share-results are available, the remaining share executions are cancelled immediately. Let
\begin{align}
    \mathrm{b}_{ik}^{(m)} =
    \sum_{\ell=1}^m T_{ik}^{\mathrm{tx},(\ell)} + W_{ik}^{\mathrm{cmp},(m)}
\end{align}
be the instant, measured after the task starts its transmission service, at which $m$-th share starts its remote computation. Also, let
\begin{align}
    \mathrm{d}_{ik}^{\mathrm{off}} =
    D_{ik}^{\mathrm{off}}-W_{ik}^{\mathrm{tx}}
\end{align}
be the offloaded-task completion time measured from the start of the task transmission service. Then, the actual computation time spent by helper $j_{ik}^{(m)}$ on share $m$ before completion/cancellation is
\begin{equation} 
    \widetilde{T}_{ik}^{\mathrm{cmp},(m)} 
    = \Bigl[ \min\bigl\{ \mathrm{b}_{ik}^{(m)}+T_{ik}^{\mathrm{cmp},(m)},\, \mathrm{d}_{ik}^{\mathrm{off}} \bigr\} - \mathrm{b}_{ik}^{(m)} \Bigr]^+. 
\end{equation}
where $[x]^+\triangleq\max\{x,0\}$. This expression gives zero energy if share $m$ has not started execution before the task is reconstructed, full computation energy if it finishes before reconstruction, and partial computation energy if it is cancelled while being executed. Therefore, the remote computation energy consumed by helper devices is
\begin{align}
    E_{ik}^{\mathrm{rem}} =
    \sum_{m=1}^{n_{ik}}
    \kappa_{j_{ik}^{(m)}}
    f_{j_{ik}^{(m)}}^3
    \widetilde{T}_{ik}^{\mathrm{cmp},(m)},
    \label{eq:rem_energy_cancellation}
\end{align}
which accounts only for the fraction of the computation actually performed before task completion.

Hence, the total energy consumption of an offloaded task $(i,k)$ is
\begin{align}
    E_{ik}^{\mathrm{off}} = E_{ik}^{\mathrm{tx}}+E_{ik}^{\mathrm{rem}},
\end{align}
and, therefore, combining both scheduling actions, the energy consumption of task $(i,k)$ is expressed as
\begin{align}
    E_{ik} =\psi_{ik}^i \, E_{ik}^\text{loc} + (1-\psi_{ik}^i) \, E_{ik}^\text{off}.
\end{align}

\subsection{Privacy Leakage Model}
\label{subsec:privacy_leakage_model}

\textbf{Threat Model:} We assume honest-but-curious helper devices that execute the assigned coded computations correctly, but may attempt to infer information about the original subtask inputs from their received shares. A passive adversary may collect such observations and is assumed to know the coding scheme and the system model, but cannot alter computations or inject malicious results.

In this context, we quantify the theoretical information leakage about the original source subtasks after the coded offloading procedure. This leakage value will be used to study the privacy--aware system performance trade-off induced by coded task offloading.

We consider the setting of LSS schemes over a finite field $\mathbb{F}_q$, where $q = p^d$ is an integer power of a prime $p$. Our goal is to quantify the information leakage of the LSS scheme, between the secret vector $\mathbf{S}=(S_1,\dots,S_k) \in \mathbb{F}_q^k$ (where each $S_i$ is an iid uniform random variable in $\mathbb{F}_q$) and the leakage vector $\mathbf{L}=(L_1,\dots,L_n) \in \mathbb{F}_q^n$ observed by the adversary from the shares $\mathbf{U}=(U_1,\dots,U_n) \in \mathbb{F}_q^n$ (which is a LSS for $\mathbf{S}$). Specifically, we instantiate the offloading protection through a ramp-LSS scheme because they trade perfect privacy against lower overhead and higher information rate (non-perfect LSS schemes), which is better aligned with dynamic and resource-constrained D2D environments. Remark that, to avoid overloading the task index notation for task $(i,k)$, the correspondence is $n=n_{ik}$, $k=s_{ik}$, $t=t_{ik}$, and $q=q_{ik}$.

\begin{definition}[$(t,k,n)$ Ramp Secret Sharing scheme]
\label{def:ramp_secret_sharing_scheme}
Let $\mathcal{C}_2 \subsetneq \mathcal{C}_1 \subseteq \mathbb{F}_q^n$ be two nested MDS codes where $\mathcal{C}_{1}$ is an $[n,t,n-t+1]_{q}$ code, and $\mathcal{C}_{2}$ is an $[n,t-k,n-t+k+1]_{q}$, which satisfy $k < t \leq n$. The scheme $\mathrm{RampSS}(t,k,n)$ distributes $n$ shares of the secret vector $\mathbf{S} \in \mathbb{F}_{q}^{k}$ such that: 
\begin{itemize}
    \item[i)] any set of at most $t-k$ shares reveals no information about $\mathbf{S}$;
    \item[ii)] any set of at least $t$ shares reconstructs $\mathbf{S}$.
\end{itemize}
To generate the shares, choose a linear complement code $\mathcal{S}$ of $\mathcal{C}_2$ in $\mathcal{C}_1$, i.e., $\mathcal{C}_1=\mathcal{S} \oplus \mathcal{C}_2$, with $\mathcal{S} \cap \mathcal{C}_2 = \{\mathbf{0}\}$, $\dim(\mathcal{S})=k$; and fix an isomorphism $\psi : \mathbb{F}_q^k \to \mathcal{S}$. For a secret $\mathbf{s} \in \mathbb{F}_{q}^{k}$, the isomorphism $\psi$ selects the coset $\psi(\mathbf{s})+\mathcal{C}_{2} \in \mathcal{C}_{1}/\mathcal{C}_{2}$, and the share vector $\mathbf{U}=(U_{1},\dots,U_{n})$ is sampled uniformly at random from this coset.
\end{definition}

Nested MDS codes give a closed-form threshold structure, since the privacy threshold is $t-k$, the reconstruction threshold is $t$, and the straggler tolerance of the offloaded task is $n-t$.

On the other hand, we consider the total variation distance as the baseline privacy leakage metric to measure the statistical distinguishability between the joint distribution of the secret and the leakage and the product of their marginals.

\begin{definition}[Pointwise Total Variation Distance] 
The pointwise total variation (TV) distance between two random vectors $\mathbf{S}$ and $\mathbf{L}$, over alphabets $\mathcal{S}$ and $\mathcal{L}$, respectively, is defined as
\begin{equation*} 
\begin{split} 
    \Delta^{\mathrm{TV}}(\mathbf{S};\mathbf{L}) &= \frac{1}{2} \sum_{\mathbf{s}\in\mathcal{S}} \sum_{\boldsymbol{\ell}\in\mathcal{L}} \Bigl| \mathbb{P}(\mathbf{S}=\mathbf{s},\mathbf{L}=\boldsymbol{\ell}) \\ 
    &\hspace{3.2em} - \mathbb{P}(\mathbf{S}=\mathbf{s})\,\mathbb{P}(\mathbf{L}=\boldsymbol{\ell}) \Bigr|. 
\end{split} 
\end{equation*}
\end{definition}

We assume that each share is leaked through a noisy leakage observation, which is appropriate for our setting since the shares are transmitted to and processed by different helper devices. 

\begin{definition}[$\delta$-Noisy Channel, \cite{gupta2026security}]
    For some random variable $U$ over $\mathcal{U}$, a (randomized) leakage function $L \, : \, \mathcal{U} \to \mathcal{L}$ is said to be $\delta$-noisy leakage channel if $\Delta^{\text{TV}}(U;L) \leq \delta$. The leakage output will be called a $\delta$-noisy observation of $U$.
\end{definition}

The leakage parameter of each helper, $\delta_i$, captures the statistical strength of the corresponding leakage observation, and can be interpreted as a privacy risk across helper devices. Under usual assumptions, we consider that the channels $(L_i \mid U_i)$ are mutually independent for all $i$, and that these channels are subject to independent observations such that $\Delta^{\text{TV}}(U_i; L_i) \leq \delta_i$. Under the LSS scheme from Definition~\ref{def:ramp_secret_sharing_scheme} and the previously cited leakage model, we obtain the following theorem.

\begin{theorem}
\label{thm:privacy_leakage_bound}
Consider a $\mathrm{RampSS}(t,k,n)$ scheme over $\mathbb{F}_{q}$ under the $\delta$-noisy share leakage model. Let $\mathbf{S} \in \mathbb{F}_{q}^{k}$ be the secret vector, $\mathbf{U}=(U_{1},\dots,U_{n})$ the corresponding share vector, and $\mathbf{L}=(L_{1},\dots,L_{n})$ the leakage vector. If $\Delta^{\mathrm{TV}}(U_{j}; L_{j}) \leq \delta_{j}$ for every share
$j \in [n]$, then
\begin{equation} 
    \Delta^{\mathrm{TV}}(\mathbf{S};\mathbf{L}) 
    \leq 2^{t-k+1} \left(q^{n-t+k}-1\right) \cdot \max_{\mathcal{J} \in \binom{[n]}{t-k+1}} \prod_{j \in \mathcal{J}} \delta_j,
\label{eq:privacy_leakage_bound} 
\end{equation}
where $[n]=\{1,\dots,n\}$ and $\binom{[n]}{x}$ denotes the collection of all $x$-element subsets of $[n]$.
\end{theorem}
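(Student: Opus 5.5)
The argument is Fourier-analytic over the additive group of $\mathbb{F}_q^n$, in the style of leakage-resilience proofs for linear secret sharing.

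\textbf{Step 1: reduce to the share vector.} For a fixed secret $\mathbf{s}$, the shares are uniform on the coset $\psi(\mathbf{s})+\mathcal{C}_2$. Since $\mathbf{S}$ is uniform and $\mathcal{C}_1=\mathcal{S}\oplus\mathcal{C}_2$, the marginal of $\mathbf{U}$ is uniform on $\mathcal{C}_1$. Hence
\[
\Delta^{\mathrm{TV}}(\mathbf{S};\mathbf{L})=\mathbb{E}_{\mathbf{s}}\,\mathrm{TV}\big(\mathbf{L}\mid \mathbf{U}\sim\mathrm{Unif}(\psi(\mathbf{s})+\mathcal{C}_2),\ \mathbf{L}\mid\mathbf{U}\sim\mathrm{Unif}(\mathcal{C}_1)\big).
\]
Because the channels are independent, each distribution of $\mathbf{L}$ factors as $P_{\mathbf{L}}(\boldsymbol{\ell})=\mathbb{E}_{\mathbf{u}}\prod_j W_j(\ell_j\mid u_j)$. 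For each fixed $\boldsymbol{\ell}$, write $f_{\boldsymbol{\ell}}(\mathbf{u})=\prod_j W_j(\ell_j\mid u_j)$.

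\textbf{Step 2: Poisson summation.} The average of $f_{\boldsymbol{\ell}}$ over a coset $\mathbf{a}+\mathcal{C}$ equals $\sum_{\boldsymbol{\alpha}\in\mathcal{C}^\perp}\hat f_{\boldsymbol{\ell}}(\boldsymbol{\alpha})\chi_{\boldsymbol{\alpha}}(\mathbf{a})$. Taking the difference between the coset of $\mathcal{C}_2$ and all of $\mathcal{C}_1$ leaves only the characters with $\boldsymbol{\alpha}\in\mathcal{C}_2^\perp\setminus\mathcal{C}_1^\perp$. This set has size at most $|\mathcal{C}_2^\perp|-1=q^{n-t+k}-1$, which accounts for the factor $(q^{n-t+k}-1)$. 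By the triangle inequality, the target TV is bounded by
\[
\tfrac12\sum_{\boldsymbol{\alpha}\in\mathcal{C}_2^\perp\setminus\{0\}}\sum_{\boldsymbol{\ell}}|\hat f_{\boldsymbol{\ell}}(\boldsymbol{\alpha})|.
\]
Since $f_{\boldsymbol{\ell}}$ is a product, $\hat f_{\boldsymbol{\ell}}(\boldsymbol{\alpha})=\prod_j \hat W_j(\ell_j;\alpha_j)$, and so the sum over $\boldsymbol{\ell}$ factors into $\prod_j\sum_{\ell_j}|\hat W_j(\ell_j;\alpha_j)|$.

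\textbf{Step 3: per-coordinate estimates.} For $\alpha_j=0$, the factor equals $\sum_{\ell_j}\mathbb{E}_{u}W_j(\ell_j\mid u)=1$. For $\alpha_j\neq0$, the character has zero mean, so
\[
\hat W_j(\ell;\alpha)=\mathbb{E}_u\big[(W_j(\ell\mid u)-P_{L_j}(\ell))\overline{\chi_\alpha(u)}\big].
\]
Using $|\chi|\le1$ and summing over $\ell$ gives at most $2\Delta^{\mathrm{TV}}(U_j;L_j)\le 2\delta_j$; this uses that $U_j$ is uniform on $\mathbb{F}_q$, which holds by the MDS property of $\mathcal{C}_1$. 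Each nonzero $\boldsymbol{\alpha}$ therefore contributes at most $\prod_{j\in\mathrm{supp}(\boldsymbol{\alpha})}2\delta_j$.

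\textbf{Step 4: support size.} $\mathcal{C}_2^\perp$ is MDS with parameters $[n,n-t+k,t-k+1]$, so every nonzero dual codeword has weight at least $t-k+1$. Choose any $t-k+1$ coordinates $\mathcal{J}$ inside its support. Since $\delta_j\le1$, the product over the full support is at most $\prod_{j\in\mathcal{J}}2\delta_j$, which in turn is at most $2^{t-k+1}\max_{\mathcal{J}}\prod_{j\in\mathcal{J}}\delta_j$. Summing over the at most $q^{n-t+k}-1$ relevant nonzero characters gives the claimed bound; the prefactor $\tfrac12$ only helps.

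\textbf{Main obstacle.} The delicate part is Step 2: handling the secret averaging correctly, by conditioning on $\mathbf{s}$ and comparing with the uniform distribution on $\mathcal{C}_1$, and checking that the $\ell_1$ norm over $\boldsymbol{\ell}$ really factors with the per-coordinate bound of $2\delta_j$ at each nonzero coordinate. The MDS weight bound and the counting of characters are routine.
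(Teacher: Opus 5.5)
Your argument is correct in substance, but one justification in Step 4 is wrong as written. You drop the factors $2\delta_j$ for $j\in\mathrm{supp}(\boldsymbol{\alpha})\setminus\mathcal{J}$ ``since $\delta_j\le 1$''. That step actually needs $2\delta_j\le 1$. But $\Delta^{\mathrm{TV}}(U_j;L_j)$ can be as large as $1-1/q$ (take $L_j=U_j$), so those factors can exceed $1$, and the theorem is stated for arbitrary $\delta_j$. The correct reason is that every coordinate factor is at most $1$, whatever $\alpha_j$ is. Since $W_j\ge 0$ and $|\chi|=1$, we have $|\hat W_j(\ell;\alpha)|\le \mathbb{E}_u W_j(\ell\mid u)=P_{L_j}(\ell)$, hence $\sum_{\ell}|\hat W_j(\ell;\alpha)|\le 1$. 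So bound each factor by $\min\{1,2\delta_j\}$: keep $2\delta_j$ on $\mathcal{J}$ and use $1$ elsewhere. This is the fact the paper invokes when it says the nontrivial Fourier coefficients are dominated by the trivial one, $|\hat\Psi_{\ell_j}(0)|=1$.

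With that repair, your route is a streamlined variant of the paper's rather than the same argument. The paper follows the Gupta--Mahdavifar framework step by step:
\begin{itemize}
\item it introduces the information densities $\Lambda$ (relative to $P_{\mathbf{L}}$) and $\Lambda'$ (relative to $\prod_i P_{L_i}$);
\item it applies Poisson summation to $\Lambda'$ through the Fourier transform of the posterior product $\Psi_{\boldsymbol\ell}$;
\item it uses Lemmas 2--3 to pass from $\Lambda'$ back to $\Lambda$, losing a factor $2$;
\item it thereby bounds $\Delta^{\mathrm{ARE}}$, which is averaged over leakage but worst-case over secrets, and deduces the TV and mutual-information bounds from it.
\end{itemize}
You instead write both $P_{\mathbf{L}\mid\mathbf{S}=\mathbf{s}}$ and $P_{\mathbf{L}}$ as averages of the same likelihood $f_{\boldsymbol\ell}$, over $\psi(\mathbf{s})+\mathcal{C}_2$ and over $\mathcal{C}_1$ respectively. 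The characters in $\mathcal{C}_1^\perp$ then cancel exactly, and no comparison with the product of leakage marginals is needed. The per-coordinate estimate is identical in both proofs: your $\sum_{\ell}|\hat W_j(\ell;\alpha_j)|$ is the paper's $\mathbb{E}_{L_j}|\hat\Psi_{\ell_j}(w_j)|$, up to the sign convention for the frequency.

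Your version is shorter and has a better constant, $\tfrac12(q^{n-t+k}-q^{n-t})$ instead of $q^{n-t+k}-1$. Also, your triangle-inequality bound is uniform in $\mathbf{s}$, so it equally bounds $\sum_{\boldsymbol\ell}\max_{\mathbf{s}}|P(\boldsymbol\ell\mid\mathbf{s})-P(\boldsymbol\ell)|$, which dominates $\Delta^{\mathrm{ARE}}$. You therefore recover the paper's stronger metric as well, without its factor-$2$ loss. The paper's route mainly buys alignment with the cited $\Lambda/\Lambda'$ machinery.
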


The proof follows the Fourier-analysis approach for noisy channel-side leakage in linear
secret sharing schemes introduced in \cite{gupta2026security}, adapted to the
multisecret and nested structure of $\mathrm{RampSS}(t,k,n)$. The full derivation is provided in Appendix~\ref{app:privacy_leakage_derivation}. For each offloaded task $(i,k)$, we consider the conservative value of the privacy leakage upper bound to quantify the privacy cost associated with offloading task $(i,k)$ to the selected helper devices, which is denoted by $L_{ik}$. Remark that, if the task $(i,k)$ is executed locally, we set $L_{ik}=0$.

\section{Methodology}
\label{sec:methodology}

In this section, we formulate the non-linear optimization problem that captures the objective of our coded task offloading scheme (Subsection~\ref{subsec:problem_formulation}). Since the problem is proven to be NP-hard, we propose two algorithmic solutions: a BnB-based greedy solver (Subsection~\ref{subsec:greedy_bnb}) to obtain near-optimal solutions, and a heuristic solver (Subsection~\ref{subsec:heuristic}) to provide a lightweight scheduling policy for evaluating system performance.

\subsection{Problem formulation}
\label{subsec:problem_formulation}

Our objective is to jointly minimize the long-term average task delay and energy consumption while accounting for the privacy leakage induced by offloading secret-shared tasks to helper devices. In our optimization problem, the privacy cost is introduced as a penalty term in the objective function that provides decisions to the scheduler to explicitly trade delay and energy savings against the privacy leakage of each offloading decision.

Since this results in a multi-objective optimization problem, we use a scalarization method to define a unique objective function as a convex combination of both individual objectives (delay and energy consumption). Additionally, to avoid the dependence of the objective on problem-specific maxima, we normalize delay and energy through smooth scale functions:
\begin{equation} 
\begin{aligned} 
    \widehat{D}_{ik} 
    &= \frac{2}{\pi}\arctan\!\left(\frac{D_{ik}}{D_0}\right), \\ 
    \widehat{E}_{ik} &= \frac{2}{\pi}\arctan\!\left(\frac{E_{ik}}{E_0}\right), 
\end{aligned} 
\end{equation}
where $D_0$ and $E_0$ are reference task delay and task energy consumption values, respectively. On the other hand, we define the privacy penalty function as
\begin{equation}
    \Pi\!\left( L_{ik} \right) = \sigma \left( \log_2(L_{ik}) / \xi  \right),
\end{equation}
where $\sigma(\cdot)$ is the sigmoid function, and $\xi>0$ controls the smoothness of the transition. This function provides a smooth and upper-bounded privacy penalty that, when the leakage grows, the penalty increases smoothly.

Let us define the joint decision variable $\bm{\Psi}(t) \defeq (\Psi_i(t))_{i=1}^N$ where $\Psi_i (t) = (\psi_{ik}^j)_{k,j} \in \mbb{M}_{|\mathcal{K}_i(t)| \times N} $ as the set of scheduling decisions at time slot $t$. We define the long-term optimization problem $\mathcal{P}1$ as:
\begin{subequations}
\begingroup
\begin{IEEEeqnarray}{rCl}
\underset{\{\bm{\Psi}(t)\}_{t \in \mathcal{T}}}{\operatorname{minimize}}\;
& &
\limsup_{T\to\infty}
\frac{1}{T}\sum_{t=1}^{T}
\frac{1}{|\mc{K}(t)|}
\sum_{(i,k)\in\mc{K}(t)}
\bigl(
\alpha \hat{D}_{ik}
\nonumber\\
& &
{} \hspace{5em} +\beta \hat{E}_{ik}
+\gamma \Pi(L_{ik})
\bigr)
\label{eq:problem_obj}
\\[0.2ex]
\text{s.t.}\;
& &
|\mc{Q}_i^{\mathrm{tx}}(t)|
+
|\mc{Q}_i^{\mathrm{cmp}}(t)|
\leq M_i,
\,\, i\in\mc{N}
\label{eq:problem_a}
\\[0.5ex]
&
&
\psi_{ik}^{j}\in\{0,1\},
\,\, (i,k)\in\mc{K}(t),\ j\in\mc{N}
\label{eq:problem_b}
\\[0.5ex]
&
&
\sum_{j\in\mc{N}\setminus\{i\}} \psi_{ik}^{j}
=
(1-\psi_{ik}^{i}) n_{ik}.
\label{eq:problem_c}
\end{IEEEeqnarray}
\endgroup
\end{subequations}

Problem $\mathcal{P}1$ is a binary non-linear optimization problem (BNLP) with dynamic queue-state dependence. It captures: (i) a time-slotted scheduling system under queue-based dynamics under stochastic task arrivals; (ii) a coded task offloading scheme based on LSS schemes and coded-computing principles, where each offloaded task can be recovered after the completion of a subset of offloaded shares; and (iii) a privacy-leakage-based penalty into the delay--energy trade-off through a smooth regularization term, so that offloading decisions are penalized according to the privacy leakage induced by the selected LSS scheme under noisy leakage observations. 

The main difficulty stems from the combinatorial nature of the binary scheduling process since the number of joint scheduling decisions grows exponentially with the number of tasks and candidate helpers. Therefore, although optimal solutions can be achieved by exhaustive exploration, solving $\mathcal{P}1$ exactly at each slot is, in general, computationally impractical.

\begin{proposition}
Problem $\mathcal{P}1$ is NP-hard.
\end{proposition}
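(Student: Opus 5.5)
We prove NP-hardness by a polynomial-time reduction from a classical NP-hard scheduling problem to a restricted instance of $\mathcal{P}1$. Because a restriction of $\mathcal{P}1$ is a special case, hardness of the restriction implies hardness of $\mathcal{P}1$. The natural source problem is \textsc{Partition}, or equivalently minimum-makespan scheduling on two identical machines ($P2\|C_{\max}$); multiprocessor scheduling $P\|\sum C_j$ with memory could serve instead, but \textsc{Partition} gives the cleanest gadget.

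\textbf{Restricting the instance.} We take a single time slot ($T=1$) and pass to the decision version: given a threshold $\theta$, does there exist $\bm{\Psi}(1)$ satisfying \eqref{eq:problem_a}--\eqref{eq:problem_c} with objective at most $\theta$? This single-slot problem is a special case of the $\limsup$ objective, since later slots can be made empty or cost-neutral. We then neutralize the other terms:
\begin{itemize}
\item set $\gamma=0$ and $\beta=0$ (or make all energies equal);
\item take memory capacities $M_i$ large enough that \eqref{eq:problem_a} is inactive;
\item make all queues initially empty.
\end{itemize}
The remaining objective is the average of $\frac{2}{\pi}\arctan(D_{ik}/D_0)$. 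This is a strictly increasing function of each $D_{ik}$, but it is not linear. We therefore encode the instance so that the objective threshold corresponds to a makespan threshold.

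\textbf{The gadget.} Given \textsc{Partition} integers $a_1,\dots,a_m$ with sum $2A$, create:
\begin{itemize}
\item one source device $0$ with very slow CPU ($f_0\to$ tiny), so local execution is never worthwhile;
\item two helper devices with equal frequency $f$;
\item $m$ tasks at device $0$ with $\ell_k c_k \propto a_k$ and coding parameters $n_k=s_k=t_k=1$, i.e.\ plain full offloading to one helper.
\end{itemize}
Transmission times are made negligible by a huge bandwidth. Each offloaded task then queues FCFS at its chosen helper, and its delay is its completion time on that helper.

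Delays depend on FCFS order, so summed completion times would be minimized by SPT, and sum-of-completion-times on identical machines is polynomial. We must therefore force the objective to reflect makespan. The plan is to append a final "probe" task, submitted last in the arrival order. It is encoded with $n=2$, $t=2$, $s=1$ (with $k<t$ adjusted as needed, e.g.\ $s=1,t=2$), so it must use both helpers. Its completion equals $\max$ of the two helper loads plus its own service time. We give the probe a dominating weight by replicating it many times, or by choosing $D_0$ so that the other tasks' $\arctan$ contributions vary negligibly compared to the probe's. The optimal objective then falls below the threshold iff both helpers carry load exactly $A$, i.e.\ iff the \textsc{Partition} instance is a yes-instance. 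All parameters are polynomial in the input size, so this is a polynomial reduction.

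\textbf{Main obstacle.} The hard step is making this dominance rigorous. Three issues must be handled:
\begin{itemize}
\item controlling the FCFS ordering within a slot, for which we fix the tie-breaking by arrival index;
\item bounding the total variation of the non-probe $\arctan$ terms, for which we use the Lipschitz constant $2/(\pi D_0)$ and scale the task sizes by a large integer factor so that unbalanced splits cost at least one unit more;
\item ensuring the $\arctan$ saturation does not erase the gap, for which we choose $D_0$ comparable to $A$.
\end{itemize}
If this proves delicate, the fallback is to note that with $\alpha=0$, $\beta=1$, and energy linear in assignment, plus the memory constraint \eqref{eq:problem_a} acting as knapsack capacities, $\mathcal{P}1$ contains the generalized assignment problem. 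That problem is also NP-hard.
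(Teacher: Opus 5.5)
Your main gadget has a genuine gap. It never actually excludes local execution, and the device that makes the probe matter is exactly what makes local execution profitable.

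\textbf{Local execution is not excluded.} Delays enter only through $\hat D_{ik}=\frac{2}{\pi}\arctan(D_{ik}/D_0)<1$. So a locally executed task costs less than $\alpha$ however small $f_0$ is: a slow source CPU caps the penalty, it does not make it prohibitive. Running a regular task locally also removes its load from the helpers and so shortens every probe. Take $R$ probe replicas and $D_0\approx A$. Executing all $m$ regular tasks locally costs about $m$ (plus the probes' own short service). Any schedule that offloads all of them pays at least $R\cdot\frac{2}{\pi}\arctan(A/D_0)\approx R/2$ on the probes alone. Once $R>2m$, the optimum lies among largely local schedules in yes- and no-instances alike, and the link between the objective threshold and a makespan of $A$ collapses.

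\textbf{The dominance is not polynomial.} Per replica, a balanced versus unbalanced split changes the probe term by at most $\max_{D_0}\frac{2}{\pi}\cdot\frac{D_0}{D_0^2+A^2}=\frac{1}{\pi A}$. The regular tasks' terms, by contrast, can move by $\Theta(m)$. Rescaling the $a_k$ leaves this ratio unchanged, and $D_0$ is one global constant that cannot be tuned to the probe alone. Your Lipschitz argument therefore needs $R=\Omega(mA)$ replicas, which is exponential in the binary size of a \textsc{Partition} instance. A pseudo-polynomial reduction from a weakly NP-hard problem proves nothing; you would need a strongly NP-hard source, and the local-execution escape would remain even then.

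\textbf{Minor point.} A single slot does not embed in the $\limsup$ time average as you claim: its contribution is $O(1/T)$, and empty slots make $1/|\mathcal{K}(t)|$ undefined. Instead, repeat the same batch every slot with $\tau$ large enough that all queues drain.

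\textbf{Your fallback works and differs from the paper's route.} The paper keeps a delay-only objective ($\alpha=1$, $\gamma=0$, $s_{ik}=t_{ik}=n_{ik}=1$), switches memory off, suppresses communication, and appeals to the Bruno--Coffman--Sethi mean-finishing-time problem. Your remark that unweighted completion-time sums are easy is a fair warning for that route too. The NP-complete case in that reference is the weighted one, and in $\mathcal{P}1$ the FCFS order is not a decision variable.

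In the energy-only restriction ($\alpha=0$, $\beta=1$, $\gamma=0$, $s_{ik}=t_{ik}=n_{ik}=1$), nothing is cancelled. Each $E_{ik}$ then depends only on the task and the executing device, not on queue states, and the $\arctan$ is harmless because it acts task by task. Build the instance as follows:
\begin{itemize}
\item two helpers with $M_j=A$ bits, and tasks with $\ell_{0k}=a_k$;
\item a large $M_0$;
\item local energy made huge through $\kappa_0$, keeping $f_0$ moderate so queues drain, with normalized local energy above $1/2$;
\item normalized offloading energies below $\varepsilon<1/(2m)$.
\end{itemize}
Then the average cost is at most $\varepsilon$ iff all tasks fit on the helpers, i.e.\ iff the \textsc{Partition} instance is a yes-instance, because a single local task already adds more than $1/(2m)$. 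No dominance argument is needed.

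What you must state explicitly is your reading of \eqref{eq:problem_a}. It has to bound the bits of shares committed to each helper, as the solvers' $+\infty$ rule for memory violations suggests. If $|\mathcal{Q}_i(t)|$ counted jobs, the capacities would be cardinalities, and the restricted problem would be a polynomially solvable transportation problem.
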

We refer the reader to Appendix~\ref{app:np_hardness} for the proof.

In the following, we use a task-action cost notation for both proposed solvers. For task $(i,k)$, an action $a$ denotes either local execution or offloading to a helper set. The state $\mathbf{S}$ denotes the queue and resource state at the moment when the action is evaluated. The cost of assigning task $(i,k)$ through action $a$ under state $\mathbf{S}$ is defined as
\begin{align}
    c_{ik}(a\mid\mathbf{S})
    \defeq
    \alpha \hat{D}_{ik}(a\mid\mathbf{S})
    &+ \beta \hat{E}_{ik}(a\mid\mathbf{S}) \nonumber\\
    &+ \gamma \Pi\!\left(L_{ik}(a\mid\mathbf{S})\right),
    \label{eq:task_action_cost}
\end{align}
where $\hat{D}_{ik}(a\mid\mathbf{S})$, $\hat{E}_{ik}(a\mid\mathbf{S})$, and $L_{ik}(a\mid\mathbf{S})$ are the normalized delay, normalized energy, and privacy leakage induced by that action, respectively. If an action violates memory, queue, or helper-availability constraints, its cost is set to $+\infty$.

\subsection{Branch-and-bound-based greedy scheduler}
\label{subsec:greedy_bnb}

We introduce a BnB-based greedy solver\footnote{Branch-and-bound performs an implicit enumeration of the feasible decision tree, pruning partial schedules whose lower bound cannot improve the incumbent solution.} that serves as a strong baseline for validating the system performance against lower-complexity scheduling policies. This approach combines two ideas: i) it constructs a deterministic reduced action set for each task, preserving the main properties of the model; and ii) it applies a depth-first search over the resulting decision space. BnB method provides an exact solution over the reduced action space while avoiding exhaustive enumeration. Algorithm~\ref{alg:reduced_bnb} summarizes the proposed procedure.

\begin{algorithm}[tbh]
\LinesNotNumbered
\caption{Branch-and-bound-based Greedy Solver}
\label{alg:reduced_bnb}
\KwIn{task batch $\mathcal{K}(t)$, initial state $\mathbf{S}_0(t)$}
\KwOut{slot decision matrix $\bm{\Psi}(t)$}

Construct $\widetilde{\mathcal{A}}_{ik}(t)$ for every $(i,k) \in \mathcal{K}(t)$ using \eqref{eq:reduced_action_set}\;

Order the tasks as $\pi(1), \dots, \pi(|\mathcal{K}(t)|)$\;

\For{$\ell=1, \dots, |\mathcal{K}(t)|$}{
    rank the actions in $\widetilde{\mathcal{A}}_{\pi(\ell)}(t)$ by the root-state cost $c_{\pi(\ell)}(a\mid\mathbf{S}_0(t))$\;
    define the ordered action list $\widetilde{\mathcal{A}}^{\uparrow}_{\pi(\ell)}(t)$ accordingly\;
    compute $c^{\mathrm{min}}_{\pi(\ell)}(t)$ using \eqref{eq:root_state_min_cost}\;
}

Compute the sequence $\{B_d(t)\}_{d=1}^{|\mathcal{K}(t)|}$ using \eqref{eq:static_lower_bound}\;

Build the initial feasible schedule using \eqref{eq:greedy_rollout_action}, and set its total cost as the initial incumbent $U(t)$\;

\BlankLine
\textbf{Recursive step at a current node $v$ of depth $d(v)$:}\;

Compute the accumulated cost $G_{v}(t)$ of the current partial schedule\;

\If{$G_{v}(t)+B_{d(v)}(t)\ge U(t)$}{
    prune node $v$\;
}
\Else{
    \ForEach{$a\in \widetilde{\mathcal{A}}^{\uparrow}_{\pi(d(v))}(t)$}{
        evaluate $c_{\pi(d(v))}(a \mid \mathbf{S}_{v}(t))$\;
        \If{$c_{\pi(d(v))}(a \mid \mathbf{S}_{v}(t)) < \infty$}{
            append $a$ to the partial schedule\;
            \If{$d(v)=|\mathcal{K}(t)|$}{
                update the incumbent solution and set $U(t)\gets G_v(t)+c_{\pi(|\mathcal{K}(t)|)}(a \mid \mathbf{S}_{v}(t))$ if $G_v(t)+c_{\pi(|\mathcal{K}(t)|)}(a \mid \mathbf{S}_{v}(t)) < U(t)$\;
            }
            \Else{
                update to successor state $\mathbf{S}_{v'}(t)$\;
                recurse at the child node $v'$ with depth $d(v')=d(v)+1$\;
            }
        }
    }
}
\Return $\bm{\Psi}(t)$\;
\end{algorithm}

\subsubsection{Deterministic reduced action set}

Let $\mathbf{S}_0(t)$ denote the system state at the beginning of time slot $t$. The full feasible action set of task $(i,k)$ is
\begin{align}
    \mathcal{A}^{\mathrm{full}}_{ik}(t) = \{i\} \cup \Bigl\{ \mathcal{J}\subseteq \mathcal{N}\setminus\{i\} : |\mathcal{J}| = n_{ik} \Bigr\},
    \label{eq:full_action_set}
\end{align}
where $\{i\}$ denotes local execution, while each set $\mathcal{J}$ denotes a coded offloading action using $n_{ik}$ helpers. 

A pure BnB search over $\mathcal{A}^{\mathrm{full}}_{ik}(t)$ remains computationally prohibitive, since the number of offloading actions for a task is $\binom{N-1}{n_{ik}}$. To alleviate this issue, we construct a deterministic reduced action set. For every candidate helper $j \neq i$, we compute the candidate cost $c_{ik}(\{j\} \mid \mathbf{S}_0(t))$. Let $\mathcal{H}_{ik}(t)$ be the set containing the $\min\{N-1,\,3n_{ik}\}$ helpers with smallest values of the candidate costs. This keeps the best-ranked helpers while preserving non-trivial combinatorial diversity. Then, the reduced action set is defined as
\begin{equation}
    \widetilde{\mathcal{A}}_{ik}(t) = \{i\} \cup \Bigl\{ \mathcal{J} \subseteq \mathcal{H}_{ik}(t) : |\mathcal{J}| = s_{ik} \Bigr\}.
    \label{eq:reduced_action_set}
\end{equation}

\subsubsection{Branch-and-bound search}

The BnB search is conducted over the reduced action sets by a depth-first search (DFS) procedure. Let $\pi(1), \dots, \pi(|\mathcal{K}(t)|)$ be the task ordering used by the search, where tasks with larger number of shares and larger workload $\ell_{ik}c_{ik}$ are processed first. The search tree is composed of nodes, each of them represents one partial task schedule. Let $d(v) \in \{1, \dots, |\mathcal{K}(t)|\}$ denote the depth of node $v$, meaning that the first $d(v)-1$ ordered tasks have already been assigned. At node $v$, the algorithm branches on one action of task $\pi(d(v))$, evaluated under the current partial state $\mathbf{S}_{v}(t)$.

The pruning mechanism is based on three quantities: i) the accumulated cost $G_v(t)$ of the current partial solution represented by node $v$, computed as the sum of the action costs along the current branch, ii) an incumbent upper bound $U(t)$ associated with the best complete schedule found so far, and iii) a static lower bound $B_{d(v)}(t)$ for the current task and the remaining tasks. Thus, node $v$ is pruned whenever
\begin{align}
    G_v(t) + B_{d(v)}(t) \ge U(t).
    \label{eq:bnb_pruning_rule}
\end{align}

Before the tree expansion starts, we compute, for each ordered task $\pi(\ell)$, the minimum root-state cost
\begin{equation}
    c^{\mathrm{min}}_{\pi(\ell)}(t) \defeq \min_{a \in \widetilde{\mathcal{A}}_{\pi(\ell)}(t)} c_{\pi(\ell)}(a \mid \mathbf{S}_0(t)),
    \label{eq:root_state_min_cost}
\end{equation}
and define the static lower bound at depth $d$ as
\begin{equation}
    B_d(t) \defeq \sum_{\ell=d}^{|\mathcal{K}(t)|} c^{\mathrm{min}}_{\pi(\ell)}(t), \quad d=1,\dots,|\mathcal{K}(t)|.
    \label{eq:static_lower_bound}
\end{equation}
Hence, $B_d(t)$ depends only on the depth and not on the particular node at that depth. We choose this reference bound because it evaluates every remaining task independently from the slot-start state, hence ignoring the additional congestion generated by future assignments, i.e., the lower bound value cannot exceed the completion cost of any descendant node. In fact, evaluating lower bounds in which the remaining tasks are re-evaluated from the current partial state at every node produces the same scheduling decisions on the reduced action space while incurring substantially larger runtime.

To initialize the search, the algorithm first constructs an initial feasible schedule by a greedy search. Starting from $\mathbf{S}_0(t)$, for $\ell = 1,\dots,|\mathcal{K}(t)|$, it selects
\begin{equation}
    a_\ell^{(0)} \in \arg \min_{a \in \widetilde{\mathcal{A}}_{\pi(\ell)}(t)} c_{\pi(\ell)} \left( a \mid \mathbf{S}_{\ell-1}(t) \right),
    \label{eq:greedy_rollout_action}
\end{equation}
updates the state to $\mathbf{S}_{\ell}(t)$, and continues sequentially until all tasks are assigned. The corresponding total cost defines the initial incumbent bound, denoted by $U(t)$. During the DFS-based BnB search, this incumbent $U(t)$ is iteratively updated whenever a complete schedule with smaller total cost is found. At the end of the search, the final scheduling decision for slot $t$ is the incumbent complete assignment $\bm{\Psi}(t)$ associated with the minimum total cost $U(t)$ among all feasible schedules explored on the reduced action space.

\subsubsection{Complexity analysis}

The algorithm is exact on the reduced action space $\widetilde{\mathcal{A}}_{ik}(t)$, hence sub-optimal with respect to the original problem. The preprocessing stage evaluates $N-1$ candidate costs per task and ranks the corresponding helpers, so its cost is polynomial in $N$ plus the generation of the reduced actions $|\widetilde{\mathcal{A}}_{ik}(t)|$. Since the latter dominates over the former, the DFS-based BnB stage has worst-case time complexity $\mathcal{O} \left(  \prod_{(i,k) \in \mathcal{K}(t)} |\widetilde{\mathcal{A}}_{ik}(t)| \right)$ with
\begin{equation*}
    |\widetilde{\mathcal{A}}_{ik}(t)| = 1+\binom{\min\{N-1,3n_{ik}\}}{n_{ik}},
    \label{eq:reduced_action_cardinality}
\end{equation*}
which remains exponential (as expected for combinatorial search). However, this is substantially smaller than the corresponding full-action complexity $ \mathcal{O} \bigl(\prod_{(i,k)} [1 + \binom{N-1}{n_{ik}}] \bigr)$, and the pruning rule \eqref{eq:bnb_pruning_rule} further reduces the explored tree in practice. 

\subsection{Heuristic solver}
\label{subsec:heuristic}

We also propose a heuristic solver that avoids combinatorial search and builds a single schedule sequentially. Specifically, the heuristic solver follows this workflow: i) it prioritizes tasks according to the potential gain of offloading at the initial state; ii) then, for each ordered task in a sequential way, it compares local execution cost against offloading costs with the best helper set selected by candidate costs, to fix the best action. Algorithm~\ref{alg:heuristic_scheduler} summarizes the procedure.

\begin{algorithm}[t]
\LinesNotNumbered
\caption{Heuristic Solver}
\label{alg:heuristic_scheduler}
\KwIn{task batch $\mathcal{K}(t)$, initial state $\mathbf{S}_0(t)$}
\KwOut{slot decision matrix $\bm{\Psi}(t)$}

\ForEach{$(i,k)\in\mathcal{K}(t)$}{
    form $\widehat{\mathcal{J}}^{(0)}_{ik}$ with the $n_{ik}$ helpers having candidate costs $c_{ik}(\{j\}\mid\mathbf{S}_0(t))$\;
    compute the priority score $\Delta_{ik}$ using \eqref{eq:heuristic_priority_score}\;
}

Order the tasks as $\pi(1), \dots, \pi(|\mathcal{K}(t)|)$ in non-increasing order of $\Delta_{ik}$\;

\For{$\ell=1, \dots, |\mathcal{K}(t)|$}{
    form $\widehat{\mathcal{J}}_{\pi(\ell)}^{(\ell-1)}$ with the $n_{\pi(\ell)}$ helpers having smallest candidate costs under the current state\;
    
    select $a_\ell^\star$ using \eqref{eq:heuristic_action_selection}\;
    
    apply $a_\ell^\star$ and update the state to $\mathbf{S}_{\ell}(t)$\;
}
\Return $\bm{\Psi}(t)$\;
\end{algorithm}

\subsubsection{Workflow}

For any state $\mathbf{S}_\ell$, let $\widehat{\mathcal{J}}_{ik}^{(\ell)}$ denote the set of $n_{ik}$ feasible helpers with smallest candidate costs $c_{ik}(\{j\} \mid \mathbf{S}_\ell)$. This candidate costs are only used to rank helpers individually; the final offloading decision is evaluated through the action cost $c_{ik}(\widehat{\mathcal{J}}_{ik}^{(\ell)}\mid\mathbf{S})$.

To prioritize tasks, the heuristic computes at the initial state $\mathbf{S}_0(t)$ the score
\begin{equation}
    \Delta_{ik}
    \defeq
    c_{ik}(\{i\}\mid\mathbf{S}_0(t))
    -
    c_{ik}(\widehat{\mathcal{J}}_{ik}^{(0)}\mid\mathbf{S}_0(t)).
    \label{eq:heuristic_priority_score}
\end{equation}
Tasks are then ordered in non-increasing order of $\Delta_{ik}$, so that those with larger offloading gain are processed first.

Starting from $\mathbf{S}_{0}(t)$, the heuristic processes the ordered tasks $\pi(1), \dots, \pi(|\mathcal{K}(t)|)$ sequentially. At step $\ell$, it forms $\widehat{\mathcal{J}}_{\pi(\ell)}^{(\ell-1)}$ under the current state $\ell-1$ and selects between local executions and offloading as
\begin{equation}
    a_\ell^\star
    \in
    \arg\min_{a\in
    \left\{
    \{\pi(\ell)_i\},
    \widehat{\mathcal{J}}_{\pi(\ell)}^{(\ell-1)}
    \right\}}
    c_{\pi(\ell)}(a\mid\mathbf{S}_{\ell-1}(t)).
    \label{eq:heuristic_action_selection}
\end{equation}

If the selected offloading set is infeasible, its cost is $+\infty$ by definition, and the local action is selected. After applying $a_\ell^\star$, the system state is updated to $\mathbf{S}_{\ell}(t)$ and the procedure continues with the next ordered task. At the end of the procedure, the final scheduling decision for time slot $t$ is the complete assignment $\bm{\Psi}(t)$ induced by the sequential decisions.

\subsubsection{Complexity analysis}

For each task, the priority score computation requires evaluating at most $N-1$ candidate helpers, followed by sorting the corresponding candidate scores, yielding $\mathcal{O}(|\mathcal{K}(t)| N \log N)$. During the sequential assignment stage, each ordered task again scans and ranks at most $N-1$ feasible helpers, which yields $\mathcal{O}(N \log N)$ per task. Therefore,
the overall per-slot time complexity is $\mathcal{O}\!\left(|\mathcal{K}(t)| \log |\mathcal{K}(t)| + |\mathcal{K}(t)| N \log N\right)$.

\section{Experimental Setup}
\label{sec:experimental_setup}

This section details the experimental setup used to evaluate the proposed coded task offloading scheme, including implementation details and the validation scenario. We also describe the considered comparative scenarios and the validation metrics used to analyze the performance of the proposed scheduling policies, including state-of-the-art solvers, common baselines, and other task-offloading schemes.

\subsection{Implementation Details}
\label{subsec:implementation_details}

To validate the proposed coded task offloading scheme, we implemented a Python-based discrete-event simulator (DES)\footnote{Available online: \url{https://gitlab.com/discovery1721326/d2d_task_offloading}.} The simulator follows the slot-based operation introduced in Section~\ref{sec:system_model} and evaluates the resulting delay, energy consumption, and privacy-leakage metrics under event-driven queue dynamics. In contrast to a purely analytical evaluation, the DES explicitly captures asynchronous execution, transmission and computation contention, queue concurrency, and memory-admission blocking. This event-driven execution is used as the default validation environment, since it captures runtime effects that are difficult to express in closed form. More details are shown in Appendix~\ref{app:more_implementation_details}.

\subsection{Validation Scenario}

We consider a heterogeneous D2D network where devices generate computation tasks locally and cooperate through short-range wireless links. To model different computational roles, devices are divided into three profiles: requesters, which generate tasks more frequently and have less energy-efficient computation; helpers, which generate fewer tasks and provide more efficient computation resources; and balanced devices, which represent an intermediate profile.

The default configuration is summarized in Table~\ref{tab:default_parameters}. The parameters are adapted from representative D2D and edge task-offloading studies~\cite{he2024multi,niu2025pipelining,baek2020privacy,ding2021potential,sufyan2020computation,zhao2017energy}. To avoid biasing the comparison through resource provisioning, all devices are assigned the same memory capacity, transmit power, and per-link bandwidth. Device heterogeneity is introduced through the task-arrival rate, CPU frequency, and effective switched capacitance coefficient. The communication model abstracts the D2D interface through an effective bandwidth of $250$ kHz per active link at $5.9$ GHz, which is consistent with NR sidelink/V2X operation over a fraction of the available resource pool~\cite{ETSI_TS_138_101_1_V16_12_1_2022}.

\begin{table}[tbh]
    \centering
    \caption{Default parameters for the validation experiments.}
    \label{tab:default_parameters}
    \renewcommand{\arraystretch}{1.1}
    \small
    \begin{tabularx}{\columnwidth}{@{} c Y @{}}
        \toprule
        \textbf{Notation} & \textbf{Default values} \\  \midrule \midrule
        $N$ & $50$ devices \\ \addlinespace
        $T,\tau$ & $T=300$, each of duration $\tau=0.1$ s \\ \addlinespace
        $M_i$ & $2$ Mbits \\ \addlinespace
        $(f_i,\kappa_i,\lambda_i)$ &
        \begin{tabular}[t]{@{}l@{}}
        requester ($50\%$): $(1.0~\mathrm{GHz},\,3.0\times 10^{-28},\,0.95)$ \\
        balanced ($30\%$): $(1.40~\mathrm{GHz},\,1.8\times 10^{-28},\,0.45)$ \\
        helper ($20\%$): $(1.6~\mathrm{GHz},\,1.5\times 10^{-28},\,0.05)$
        \end{tabular}
        \\ \addlinespace
        $p_{ij}(t)$ & $80$ mW $ \approx 19.03\, \text{dBm}$ \\ \addlinespace
        $\ell_{ik}$ & $\{0.22,\,0.42,\,0.75\}$ Mbits with prob. $(0.25,\,0.50,\,0.25)$ \\ \addlinespace
        $c_{ik}$ & $\{700,\,1000,\,1400\}$ cycles/bit with prob. $(0.30,\,0.45,\,0.25)$ \\ \addlinespace
        $B_{ij}(t)$ & $250$ kHz per active D2D link \\ \addlinespace
        $(n_{ik},s_{ik},t_{ik})$ & $\{(3,2,3), (4,3,4), (5,4,5), (6,3,5)\}$ and $q=256$, with equal probability \\ \addlinespace
        $N_0$ & $-174$ dBm/Hz \\ \addlinespace
        $f_\text{c}$ & $5.9$ GHz \\ \addlinespace
        $d_{ij}(t)$ & $\mathrm{U}[8,150]$ m \\ \addlinespace
        $\delta_j$ & $\{2^{-8},\,2^{-6},\,2^{-4}\}$ with equal probability \\ \addlinespace
        $\alpha,\beta,\gamma$ & $0.5, 0.5, 0.02$ \\ \addlinespace
        $D_0, E_0$ & $0.5\,\text{s}$, $0.2\,\text{J}$ \\ \addlinespace
        $\xi$ & $10$ \\
        \bottomrule
    \end{tabularx}
\end{table}

\subsection{Comparative Scenarios}
\label{subsec:comparative_scenarios}

\begin{figure*}[tbh]
    \centering
    \includegraphics[width=1.0\linewidth]{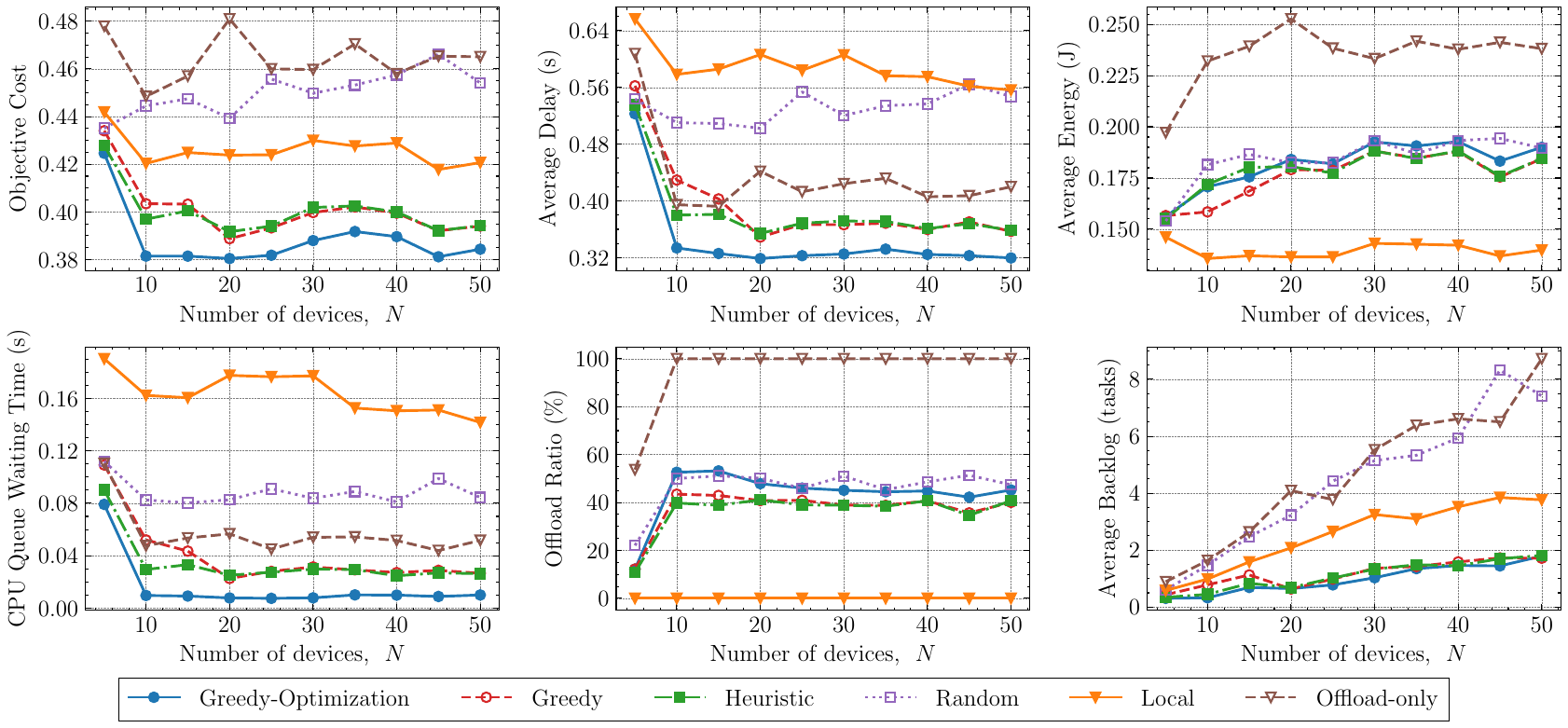}
    \caption{System performance comparison between the proposed solvers and common baselines under $\gamma=0$.}
    \label{fig:baseline_comparison}
\end{figure*}

We evaluate the proposed scheme from two complementary perspectives: the task offloading scheme used to execute each task, and the scheduling policy used to select the scheduling decision and helper devices. These are the considered task-offloading schemes:

\begin{itemize}
    \item \textbf{Full task offloading}: each task is either executed locally or offloaded to one helper device. This represents the conventional binary offloading case and is obtained from our model by disabling coding and setting $s_{ik}=n_{ik}=t_{ik}=1$.
    \item \textbf{Parallel task offloading}: each task is divided into several subtasks and assigned to multiple helpers. The task is completed only when all subtasks finish, which corresponds to disabling coded recovery and setting $s_{ik}=n_{ik}=t_{ik}=s$, with $s\in\{3,4,5,6\}$ equiprobable.
    \item \textbf{Coded task offloading}: this is the proposed scheme and the default configuration in this paper. Each task is encoded into $n_{ik}$ shares and can be recovered once the fastest subset of $t_{ik}$ helpers completes execution.
\end{itemize}

Moreover, we consider the following scheduling policies:
\begin{itemize}
    \item \textbf{Greedy-Optimization}: executes Algorithm~\ref{alg:reduced_bnb} over the recursive optimization model based on FCFS-queue dynamics, rather than over the DES, and is used as a near-optimal but optimistic reference.
    \item \textbf{Greedy}: executes Algorithm~\ref{alg:reduced_bnb} in the DES, allowing us to quantify the mismatch between the analytical model and the event-driven implementation.
    \item \textbf{Heuristic}: executes Algorithm~\ref{alg:heuristic_scheduler} in the DES. This is the default scheduler used throughout the paper.
    \item \textbf{Random}: randomly selects the execution mode and helper devices, when feasible.
    \item \textbf{Local}: forces all tasks to be executed locally.
    \item \textbf{Offload-only}: forces all offloadable tasks to be assigned to helper devices whenever feasible.
\end{itemize}

Finally, we adapt two state-of-the-art policies as representative baselines. We did not find in the state of the art any equivalent task offloading scheme under similar leakage penalty considerations. Therefore, for a fair comparison, we consider SEC2D~\cite{li2021security} and SMUA~\cite{malik2022efficient} under non-coded full and parallel task offloading schemes, respectively. Both algorithms are based on different techniques, namely Lyapunov optimization and matching theory, which provides an additional comparison for the proposed algorithmic solution. Moreover, each proposal focuses on a different main objective: energy consumption and queue stability in SEC2D, and task delay in SMUA.

\subsection{Performance Metrics}
\label{subsec:performance_metrics}

Most of the evaluated metrics are directly obtained from the closed-form models introduced in Sections~\ref{sec:system_model} and~\ref{sec:methodology}, including task delay, energy consumption, privacy penalty, objective cost, CPU/TX waiting time, and backlogged tasks. In addition, we report the average offload ratio, defined as the per-slot fraction of requested tasks that are not executed locally:
\begin{equation}
    \dfrac{1}{T}\sum_{t=1}^T\dfrac{|\{(i,k) \in \mathcal{K}(t) \,:\; \psi_{ik}^i = 0\}|}{|\mathcal{K}(t)|}
\end{equation}

\section{Results and Discussion}
\label{sec:results}

In this section, we validate the overall performance of the proposed coded task offloading scheme and solvers. First, we evaluate the general system performance, comparing the proposed solvers with baseline policies (Subsection~\ref{subsec:system_performance}). Next, we study the impact of the coding configuration and privacy-leakage penalty on the delay--energy trade-off under different noisy observations (Subsection~\ref{subsec:privacy_impact}). Finally, we compare the heuristic solver with state-of-the-art scheduling policies and alternative task-offloading schemes (Subsection~\ref{subsec:comparatives}).

\subsection{System Performance Validation}
\label{subsec:system_performance}

\subsubsection{Baseline Comparative}

Figure~\ref{fig:baseline_comparison} evaluates different metrics, including objective cost, average delay, energy consumption, CPU queue waiting time, offload ratio, and average backlog, under different scheduling policies. The main analysis comes from the objective cost, which is the metric minimized by the proposed solvers. Greedy-Optimization always obtains better results than Greedy, which is justified by the optimistic memory management described in Section~\ref{subsec:comparative_scenarios}. In contrast, there is no substantial difference between the objective cost achieved by Heuristic and Greedy. In fact, for small values of $N$, the Heuristic solver achieves slightly better results. This is because the optimization problem is solved sequentially: an assignment that is optimal in the current time slot may lead to a less favorable state in the following slots. For the remaining values of $N$, the Heuristic solver obtains very close results to Greedy, which shows that this lightweight approach can approximate near-optimal scheduling decisions. For instance, at $N=50$, the objective cost of Heuristic is $0.3943$, while Greedy obtains $0.3944$. The worst result is achieved by Offload-only due to the overhead introduced by the coded task offloading scheme.

The previous analysis over the objective cost can be similarly translated to task delay, CPU queue waiting time, and average backlog, where similar patterns are observed. However, this does not occur for task energy consumption. In this metric, Local achieves the lowest value because it avoids sequential share transmissions and replicated/coded remote computations. As a consequence, Local also achieves the worst task delay, since it cannot exploit helper devices to reduce the computation time of backlogged tasks.

Finally, the offload ratio is higher under Greedy-Optimization, $45.23\,\%$ for $N=50$, due to its optimistic resource management. In contrast, both Greedy and Heuristic require fewer offloads to achieve near-optimal solutions, around $40\,\%$. Remark that, the Offload-only solver does not reach a full offload ratio for $N=5$ due to the limited available helpers.

\subsubsection{Scheduling Time}

Table~\ref{tab:scheduling_runtime} shows the average per-slot scheduling runtime for different scheduling policies. Two main conclusions can be extracted. First, Greedy-Optimization requires longer runtime than Greedy because deriving waiting times and checking resource availability are more efficiently handled by DES. Second, Heuristic reduces the scheduling runtime by several orders of magnitude with respect to Greedy because it does not rely on the BnB procedure, while still achieving similar performance, as shown in the previous subsection. For instance, the mean runtime decreases from $2,784.40\times10^{-4}\,\text{s}$ with Greedy to $4.40\times10^{-4}\,\text{s}$ with Heuristic. On the other hand, the baseline policies are based on direct assignments, so their runtime is negligible compared to the optimization-based solvers. Offload-only requires slightly more runtime than Local and Random because it must check the current queue state to verify whether enough resources are available to offload the shares.

\begin{table}[t]
    \centering
    \caption{Per-slot scheduling runtime comparison. Values are reported in $10^{-4}$ s.}
    \label{tab:scheduling_runtime}
    \small
    \setlength{\tabcolsep}{2pt}
    \renewcommand{\arraystretch}{1.1}
    \begin{tabularx}{\columnwidth}{@{}Y c c c c c@{}}
        \toprule
        \textbf{Method} & \textbf{$N=5$} & \textbf{$N=10$} & \textbf{$N=20$} & \textbf{$N=50$} & \textbf{Mean} \\
        \midrule \midrule
        Greedy-Opt. & 2.00 & 170.30 & 1,736.80 & 10,963.80 & 3,847.80 \\
        Greedy & 1.90 & 132.20 & 680.00 & 7,262.90 & 2,784.40 \\
        Heuristic & 0.837 & 2.10 & 2.70 & 9.10 & 4.40 \\
        Random & 0.078 & 0.113 & 0.133 & 0.327 & 0.179 \\
        Local & 0.0178 & 0.0179 & 0.0189 & 0.0206 & 0.0178 \\
        Offload-only & 0.102 & 0.201 & 0.232 & 0.359 & 0.254 \\
        SEC2D~\cite{li2021security} & 0.437 & 0.813 & 1.40 & 4.00 & 2.00 \\
        SMUA~\cite{malik2022efficient} & 0.918 & 1.70 & 4.70 & 13.20 & 7.30 \\
        \bottomrule
\end{tabularx}
\end{table}

\subsubsection{System Scalability}

We also validated the system performance under different scalability conditions, including task-arrival regimes, workload variability levels, and time-slot durations. These experiments confirm that the proposed scheme remains stable under heavier arrivals and heterogeneous workloads, while adapting the offload ratio to queue accumulation and execution variability. Since these results mainly provide complementary evidence to the main conclusions, the detailed scalability analysis is reported in Appendix~\ref{app:system_scalability}.

\subsection{Privacy Impact on Delay-Energy Trade-off}
\label{subsec:privacy_impact}

Several experiments have been performed to analyze the impact of the considered privacy leakage model on the proposed task-offloading system. In this subsection, to enable comparable and controlled results, we assume a homogeneous noisy-observations leakage setting, i.e., $\delta_j=\delta$ for all helper devices.

\begin{figure}[tbh]
    \centering
    \includegraphics[width=1.0\linewidth]{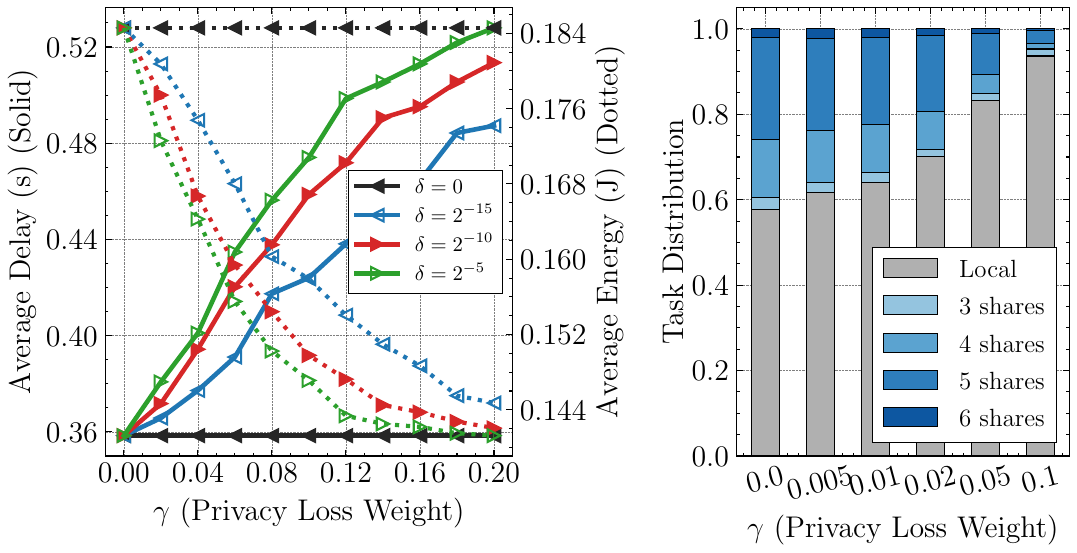}
    \caption{Impact of the privacy-leakage penalty on system performance under different values of $\gamma$.}
    \label{fig:privacy_gamma}
\end{figure}

\begin{figure*}[tbh]
    \centering
    \includegraphics[width=1.0\linewidth]{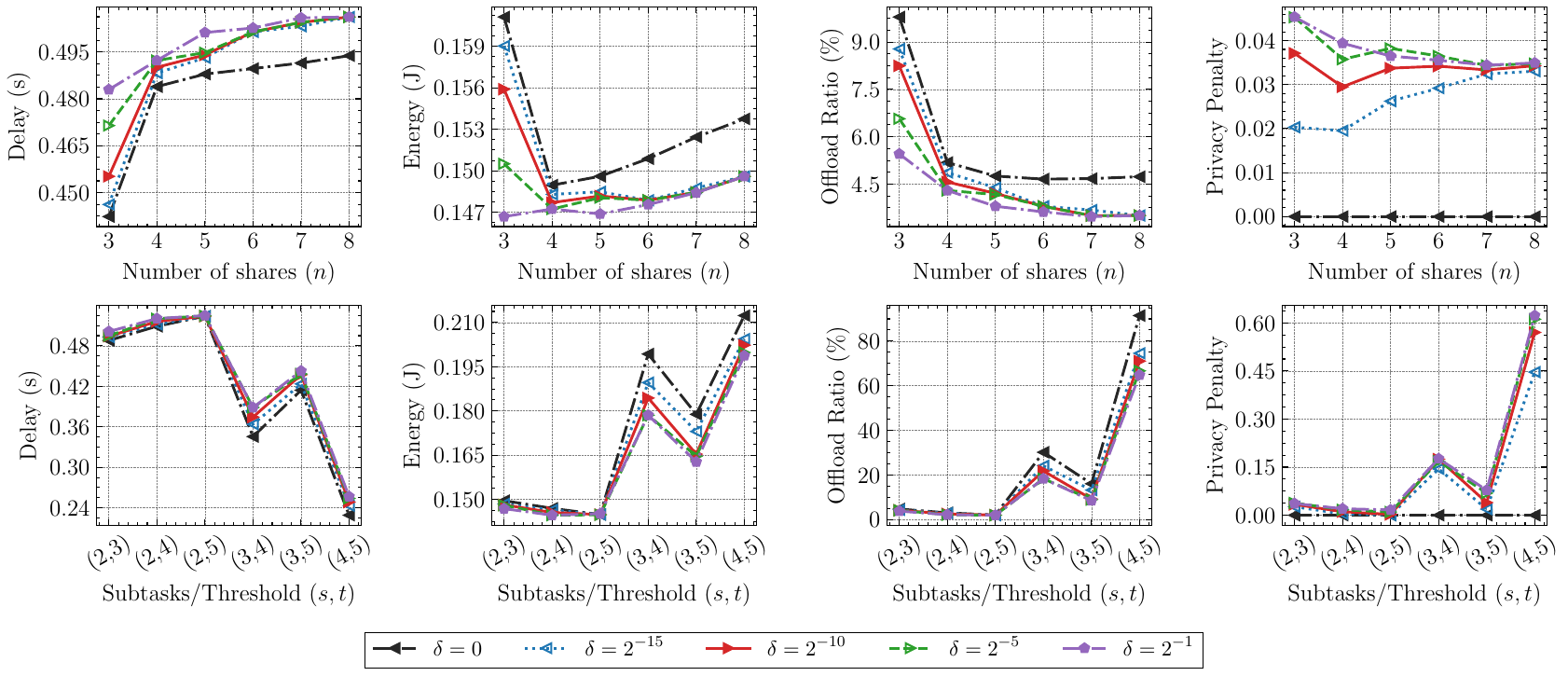}
    \caption{System performance under different coding parameters (in first row, $s_{ik}=2,t_{ik}=3$; in the second one, $n_{ik}=5$).}
    \label{fig:privacy_coding}
\end{figure*}

Figure~\ref{fig:privacy_gamma} studies the impact of the privacy penalty by modifying the control parameter $\gamma$, i.e., analyzing the delay--energy--privacy trade-off of the system. On the left side, both average delay and energy consumption remain constant for $\delta=0$, since no privacy leakage is introduced. Moreover, increasing $\gamma$ increases the average task delay, since the offload ratio decreases under the same setting. When the privacy penalty is further increased through the $\delta$-noisy parameter, the delay is penalized even more. The opposite occurs for energy consumption as higher values of $\gamma$ reduce the offload ratio to limit the objective cost, which also reduces the number of share transmissions and remote computations. Therefore, in coded task offloading under privacy-leakage penalty, task delay and privacy protection follow an inverse relationship (improving one degrades the other), and the same occurs between task delay and energy consumption; in contrast, energy consumption and privacy protection follow a direct relationship, since reducing offloading improves both metrics.  

On the right side, we focus on the distribution of scheduling decisions according to the number of shares used to encode each task. As expected from the previous results, increasing the privacy penalty through higher values of $\gamma$ increases the number of locally executed tasks, going from $57.8\%$ under the non-leakage scenario to $93.6\%$ for $\gamma=0.1$. A more detailed analysis can be derived by observing the number of shares, which ranges from $3$ to $6$ in our experimental setting. For $\gamma=0.0$, most offloaded tasks use $5$ shares, representing more than $20\%$ of the tasks, followed by $4$, $3$, and $6$ shares. Increasing $\gamma$ mainly penalizes the most frequent offloaded configurations, while the configurations with $3$ and $6$ shares keep, in relative terms, a similar share of offloaded tasks compared with $\gamma=0$. This behavior is explained by the overhead factor $n/s$ of the coding scheme, according to the coding parameters in Table~\ref{tab:default_parameters}. The configuration with the lowest overhead corresponds to $n=5$, while the configuration with the highest overhead corresponds to $n=7$. Therefore, the Heuristic solver tends to offload tasks with lower coding overhead. This also highlights an additional privacy trade-off as tasks with lower overhead are preferred for offloading, but they also correspond to less protective ramp secret sharing configurations.

This trade-off is further analyzed in Figure~\ref{fig:privacy_coding}. The same experiment is applied to two different scenarios: i) varying the number of shares under the same threshold-recovery and subtask configuration; ii) keeping the number of shares fixed to $5$ and varying the subtask and threshold-recovery profiles. By direct comparison, the latter produces a stronger impact on the overall system performance. In the former, increasing the number of shares generally makes offloading less attractive, since it requires additional transmission time and higher energy consumption at the helper nodes, while the number of completed shares required to finish the task remains unchanged. In any case, the offload ratio remains low, so its impact on the privacy penalty is negligible. In the latter, for a fixed number of shares, offloading becomes more attractive as the overhead factor decreases, i.e., when the number of subtasks increases, in this case. This is observed when moving from $(2,5)$ to $(4,5)$: even though the recovery threshold is the same, the offload ratio increases from below $5\%$ to approximately $70\%$. Additionally, when the number of subtasks is fixed, the offload ratio decreases as more completed shares are required to recover the task, as expected. Finally, the analysis over different values of $\delta$ follows the conclusions derived from the previously analyzed delay-energy-privacy trade-off .

\subsection{Comparison with other Works}
\label{subsec:comparatives}

\subsubsection{Task Offloading Schemes}

\begin{figure*}[tbh]
    \centering
    \includegraphics[width=1.0\linewidth]{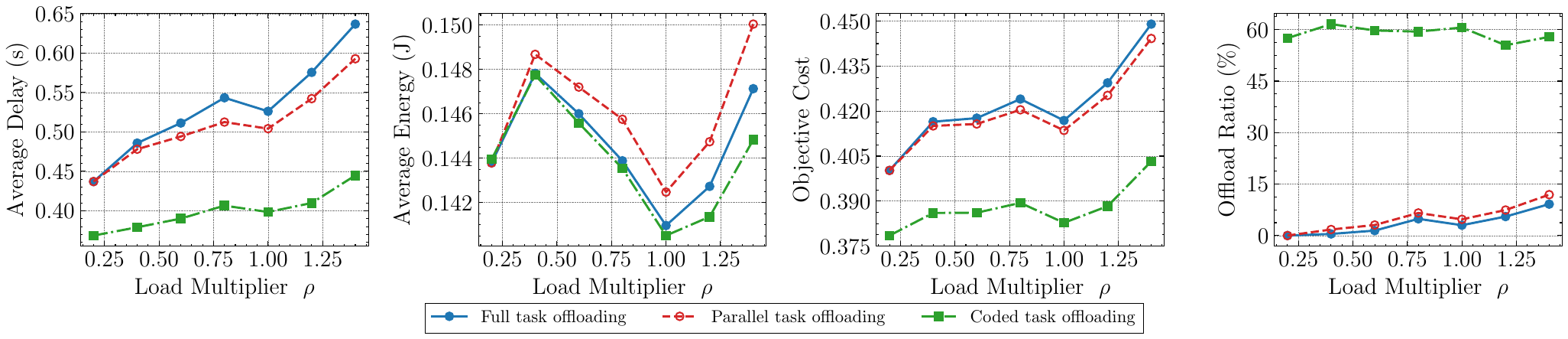}
    \caption{Comparison between task-offloading schemes under different task-arrival regimes.}
    \label{fig:comparative_offloading_schemes}
\end{figure*}

Figure~\ref{fig:comparative_offloading_schemes} illustrates a systematic comparison between several task offloading schemes, namely full, parallel, and coded task offloading, under different task-arrival regimes, obtained by scaling the default task-arrival profile by a factor $\rho\geq 0$. For a fair comparison, the leakage penalty is disabled by setting $\gamma=0$. Moreover, we set $B_{ij}(t)=500\,\text{kHz}$ to better highlight the differences between coded and non-coded schemes.

The coded approach achieves the lowest objective cost among the evaluated schemes, followed by parallel task offloading and full task offloading, which obtain close results. This behavior is mainly explained by the task delay, where coded offloading also achieves the best result due to straggler mitigation, and offload ratio. In coded offloading, offload ratio remains stable across the evaluated load factors $\rho$, around $60\,\%$, while full and parallel schemes remain below $15\,\%$. Similarly, coded offloading obtains the best energy-consumption result, closely followed by full offloading (which avoids the additional transmissions and remote computations introduced by task splitting) and parallel offloading (that must complete all subtasks before recovering the task). These conclusions apply to all evaluated task-arrival profiles, i.e., the effectiveness of each task-offloading scheme does not depend strongly on the current system workload.

\subsubsection{State-of-the-art Proposals}

\begin{figure*}[!t]
    \centering
    \begin{subfigure}{1.0\linewidth}
        \centering
        \includegraphics[width=1.0\linewidth]{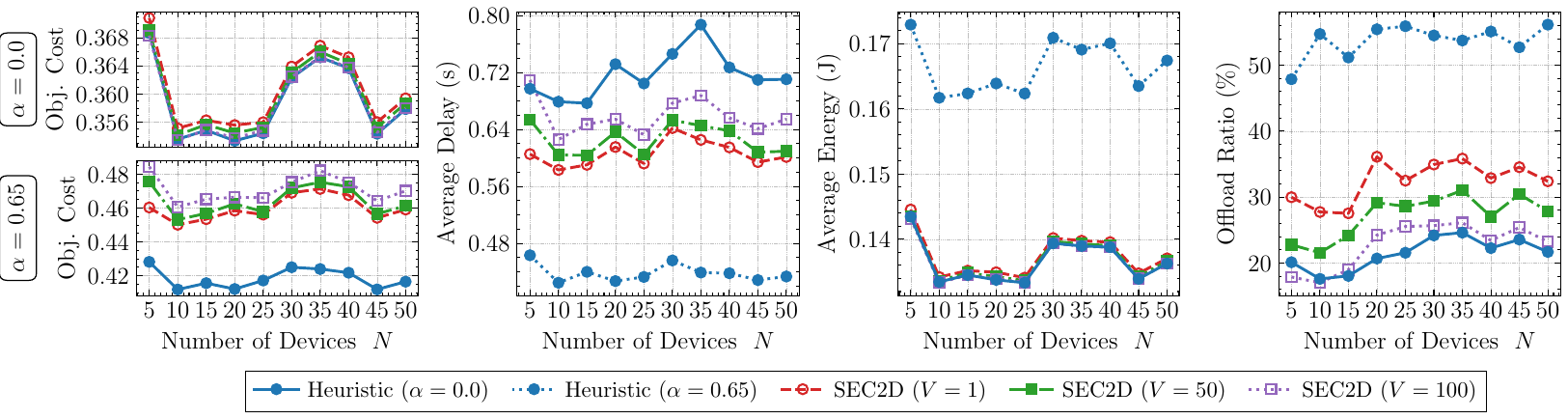}
        \caption{Comparison between SEC2D~\cite{li2021security} and the proposed Heuristic solver under full task offloading.}
        \label{fig:sota_comparative_a}
    \end{subfigure}
    \vspace{0.3cm}
    \begin{subfigure}{1.0\linewidth}
        \centering
        \includegraphics[width=1.0\linewidth]{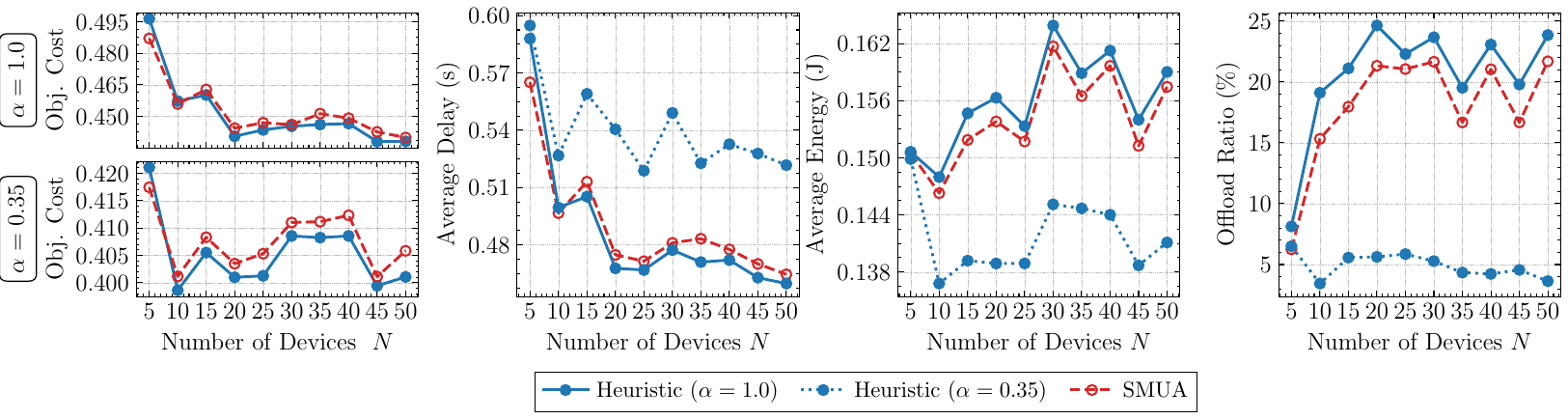}
        \caption{Comparison between SMUA~\cite{malik2022efficient} and the proposed Heuristic solver under parallel task offloading.}
        \label{fig:sota_comparative_b}
    \end{subfigure}
    \caption{Comparison between representative state-of-the-art task-offloading schemes and the proposed Heuristic solver under full and parallel task offloading settings.}
    \label{fig:sota_comparative}
\end{figure*}

In terms of scheduling policy effectiveness, we compare the proposed Heuristic solver with the selected state-of-the-art policies in Section~\ref{subsec:comparative_scenarios}. Since no directly equivalent privacy-aware offloading scheme is available, we compare the proposed Heuristic solver with SEC2D and SMUA under full and parallel offloading, respectively, with the privacy penalty disabled $(\gamma=0)$. These baselines provide representative comparisons against Lyapunov- and matching-based approaches, while varying the number of devices to assess scalability.

Figure~\ref{fig:sota_comparative_a} illustrates the validated metrics under the Heuristic solver and several executions of SEC2D, each corresponding to a different value of $V$, a parameter that controls the energy-aware Lyapunov policy under queue stability. When $V\to0$, SEC2D prioritizes queue stability, whereas when $V\to+\infty$, it prioritizes energy consumption. This experiment is evaluated for $\alpha=0.0$, which is equivalent to the scenario considered in~\cite{li2021security}, and for $\alpha=0.65$ to observe the delay--energy trade-off. For $\alpha=0.0$, the Heuristic solver achieves the best result in terms of objective cost, equivalently energy consumption, closely followed by most SEC2D configurations. In fact, for $V=100$, SEC2D achieves close results for most values of $N$. However, the Heuristic solver obtains these results with fewer offloaded tasks, as shown in the figure, which indicates a more selective helper assignment. As a consequence, the Heuristic solver penalizes task delay and obtains the worst delay result among the evaluated scenarios.

When shifting the value of $\alpha$, i.e., adding priority to task delay, the Heuristic solver clearly outperforms SEC2D in terms of objective cost. Remark that the results of SEC2D under this scenario remain the same as before, except for the objective cost, because its scheduling decision is not directly controlled by $\alpha$. In this case, the Heuristic solver reduces task delay by increasing the number of offloaded tasks. The price is a clear increase in energy consumption, which is consistent with the delay--energy trade-off analyzed throughout this section.

Additionally, according to Table~\ref{tab:scheduling_runtime}, SEC2D requires lower scheduling runtime than the Heuristic solver, with mean values of $2.00\times10^{-4}\,\text{s}$ and $4.40\times10^{-4}\,\text{s}$, respectively. However, this lower runtime comes with a more rigid optimization structure. Since SEC2D is driven by a Lyapunov-based energy--queue stability formulation, its control parameter mainly regulates the trade-off between energy consumption and queue stability, but it does not directly adapt to the delay--energy objective used in this work.

On the other hand, Figure~\ref{fig:sota_comparative_b} illustrates the validated metrics under the Heuristic solver and SMUA. This experiment is evaluated for $\alpha=1.0$, which is equivalent to the scenario considered in~\cite{malik2022efficient}, and for $\alpha=0.35$ to observe the delay--energy trade-off. For $\alpha=1.0$, the proposed solver achieves the best results in terms of objective cost, equivalently task delay, for most values of $N$, especially when $N>10$. These results are achieved because the Heuristic solver decides to offload more tasks than SMUA in all evaluated scenarios. The price to pay is energy consumption, which is slightly penalized by the Heuristic solver.

When shifting the value of $\alpha$, i.e., adding priority to energy consumption, the Heuristic solver improves its previous result by substantially reducing the energy consumption with respect to SMUA. This improvement is achieved by decreasing the number of offloaded tasks. As previously analyzed, task delay is then penalized as a direct consequence of the lower offload ratio.

Additionally, according to Table~\ref{tab:scheduling_runtime}, matching-based SMUA is slower than the proposed Heuristic solver, with mean values of $7.30\times10^{-4}\,\text{s}$ and $4.40\times10^{-4}\,\text{s}$, respectively. This occurs because SMUA requires building and updating preference relations between task subtasks and helper nodes, and the matching process must account for changes in the allocation state. 

\section{Conclusions and Future Work}
\label{sec:conclusions}

This paper introduced a privacy-aware coded task offloading scheme under D2D queue-based dynamics and stochastic task arrivals. for optimizing distributed application deployments. The proposal combines the advantages of task offloading and coded computing by providing straggler resistance through threshold-based recovery, improving the delay--energy trade-off through a cancellation-aware execution mechanism. Additionally, the privacy-aware system is modeled through a penalization on the information leaked by offloaded coded shares under a noisy leakage observations (such as side-channel attacks), by quantifying the theoretical privacy leakage of the underlying LSS scheme. Therefore, the proposed model does not only decide where tasks are executed, but also evaluates how coded execution affects delay, energy consumption, and privacy leakage.

Evaluation tests validated the performance of the proposed BnB-based greedy and heuristic solvers through the designed discrete-event simulator. The results provide three complementary points of view: i) a system-level analysis assessing scalability, task arrival regimes, and workload variability; ii) an analysis of the impact of the privacy penalty on the delay--energy trade-off; and iii) a comparison of the proposed solvers under alternative task-offloading schemes, baselines, and state-of-the-art scheduling policies. In general, the proposed coded task offloading scheme outperforms classical full and parallel task offloading schemes, while the heuristic solver achieves near-greedy performance with significantly lower computational cost. The results also show a No-Free-Lunch-like behavior: under coded task offloading, delay, energy consumption, and information-based privacy leakage cannot be jointly optimized. Thus, distributed service deployments must explicitly prioritize the most relevant requirements depending on the target application.

Further studies are still needed to enhance resource allocation and workload placement in Fluid Computing environments. On the one hand, the proposed solvers provide effective centralized and per-slot decisions that allow us to validate the privacy-aware coded offloading model and quantify its delay--energy trade-off. However, highly dynamic and massive networks require more scalable and distributed algorithmic solutions, able to operate with partial information, adapt to changing D2D neighborhoods, and preserve low scheduling overhead. On the other hand, while this work focuses on the privacy implications of coded task offloading, Fluid Computing also requires higher-level orchestration mechanisms capable of coordinating offloading decisions across heterogeneous resources and administrative domains. Therefore, future work will extend this line toward distributed and privacy-aware task offloading policies integrated into fluid orchestration frameworks, where coded execution can be selected at runtime according to application intents, resource availability, mobility, and multi-domain settings.

\section*{Acknowledgment}
This work was supported by the grant PID2023-148716OB-C31 funded by MCIU/AEI/10.13039/501100011033 (DISCOVERY project). Additionally, it has also been funded by the Galician Regional Government under project ED431B 2024/41 (GPC).

\input{./refs.bbl}

\newpage
\clearpage
\appendix

\section{Derivation of the Privacy Leakage Bound}
\label{app:privacy_leakage_derivation}

\subsection{Auxiliary Results and Adapted Leakage Framework}

We follow the Fourier-analytic leakage derivation of~\cite{gupta2026security} and adapt it to the multi-secret ramp secret sharing scheme introduced in Section~\ref{subsec:privacy_leakage_model}. To derive the leakage bound, we recall the required dual-code notation and establish the uniformity properties of the generated shares.

For any linear code $\mathcal{C} \subseteq \mathbb{F}_{q}^{n}$, its dual code is
\[
  \mathcal{C}^{\perp} = \{\mathbf{x} \in \mathbb{F}_{q}^{n} :
  \langle\mathbf{x},\mathbf{c}\rangle=0, \ \forall\mathbf{c}\in\mathcal{C}\}.
\]
Since $\mathcal{C}_{2} \subsetneq \mathcal{C}_{1}$, we have $\mathcal{C}_{1}^{\perp}\subsetneq\mathcal{C}_{2}^{\perp}$. Moreover, because the dual of an MDS code is also MDS, $\mathcal{C}_{1}^{\perp}$ is an $[n,n-t,t+1]_{q}$ code, and $\mathcal{C}_{2}^{\perp}$ is an $[n,n-t+k,t-k+1]_{q}$. Therefore, every non-zero vector in $\mathcal{C}_{2}^{\perp}$ has Hamming weight at least $t-k+1$.
 
The following simple lemma shows that the secret shares in an $\mathrm{RampSS}(t,k,n)$ are uniformly distributed over $\mathbb{F}_q$.

\begin{lemma}
  Consider a $\mathrm{RampSS}(t,k,n)$ scheme instantiated with nested MDS codes. Assume that $\mathbf{S}$ is uniformly distributed over $\mathbb{F}_{q}^{k}$. Then, $U_i$ is uniformly distributed in $\mathbb{F}_q$, for every $i = 1, \dots, n$.
\end{lemma}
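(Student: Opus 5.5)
The plan is to show that the full share vector $\mathbf{U}$ is uniformly distributed over the whole code $\mathcal{C}_1$. The uniformity of each coordinate $U_i$ will then follow from a standard property of MDS codes.

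\textbf{Step 1: $\mathbf{U}$ is uniform on $\mathcal{C}_1$.} Since $\mathcal{C}_1=\mathcal{S}\oplus\mathcal{C}_2$, every codeword $\mathbf{c}\in\mathcal{C}_1$ decomposes uniquely as $\mathbf{c}=\psi(\mathbf{s})+\mathbf{c}_2$, with $\mathbf{s}\in\mathbb{F}_q^k$ (because $\psi$ is an isomorphism) and $\mathbf{c}_2\in\mathcal{C}_2$. By Definition~\ref{def:ramp_secret_sharing_scheme}, $\mathbf{S}$ is uniform and $\mathbf{U}$ is uniform on the coset $\psi(\mathbf{S})+\mathcal{C}_2$. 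Hence
\[
\mathbb{P}(\mathbf{U}=\mathbf{c})=q^{-k}\cdot q^{-(t-k)}=q^{-t}=|\mathcal{C}_1|^{-1}.
\]

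\textbf{Step 2: a fixed coordinate of a uniform codeword is uniform.} The map $\pi_i:\mathcal{C}_1\to\mathbb{F}_q$, $\mathbf{c}\mapsto c_i$, is $\mathbb{F}_q$-linear. It is surjective because $\mathcal{C}_1$ has minimum distance $n-t+1\le n$ (as $t\ge 1$), so no coordinate is identically zero on $\mathcal{C}_1$. If coordinate $i$ vanished on all of $\mathcal{C}_1$, then $\mathcal{C}_1$ restricted to the other $n-1$ coordinates would still have dimension $t$ and distance $n-t+1$, violating the Singleton bound $d\le (n-1)-t+1$. Therefore every fiber $\pi_i^{-1}(a)$ is a coset of $\ker\pi_i$ and has size $q^{t-1}$. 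The uniform distribution on $\mathcal{C}_1$ then pushes forward to the uniform distribution on $\mathbb{F}_q$, i.e., $\mathbb{P}(U_i=a)=q^{t-1}/q^{t}=1/q$.

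\textbf{Main obstacle.} The only step needing care is the surjectivity of $\pi_i$, since it relies on the MDS property rather than on linearity alone. An alternative route uses $t\ge 1$ directly: any $t$ coordinates of an MDS code form an information set, so the projection of $\mathcal{C}_1$ onto $t$ coordinates containing $i$ is bijective onto $\mathbb{F}_q^t$, and surjectivity of $\pi_i$ follows.
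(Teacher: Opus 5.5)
Your proof is correct and matches the paper's argument step for step. In both, $\mathbf{U}$ is uniform on $\mathcal{C}_1$ because of the unique decomposition $\mathbf{c}=\psi(\mathbf{s})+\mathbf{c}_2$, and then the coordinate map $\pi_i$ is shown to be a surjective linear form by the same Singleton-bound puncturing argument, so $U_i$ is uniform. One small point: the clause ``$n-t+1\le n$'' does no work, since it is the puncturing argument alone that rules out an identically zero coordinate.
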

\begin{proof}
  Consider the mapping $\Phi:\mathbb{F}_{q}^{s}\times\mathcal{C}_{2}\to\mathcal{C}_{1}$ with $\Phi(\mathbf{s},\mathbf{c}_{2})=\psi(\mathbf{s})+\mathbf{c}_{2}$. Since $\mathcal{C}_{1}=\mathcal{S}\oplus\mathcal{C}_{2}$ and $\psi:\mathbb{F}_{q}^{s}\to\mathcal{S}$ is an isomorphism, $\Phi$ is bijective (i.e., every $\mathbf{c}_1 \in \mathcal{C}_1$ has a unique preimage $(\mathbf{s}, \mathbf{c}_2)$). Moreover, $\mathbf{S}$ is uniform over $\mathbb{F}_{q}^{k}$ and the random vector inside each selected coset is uniform over $\mathcal{C}_{2}$. Therefore, $\mathbf{U}$ is uniformly distributed over $\mathcal{C}_{1}$.

  Now fix a coordinate $i$. Let $\pi_{i}:\mathcal{C}_{1}\to\mathbb{F}_{q}$ be the linear mapping defined by $\pi_{i}(\mathbf{c})=c_{i}$. Since $\mathcal{C}_{1}$ is an MDS code with parameters $[n,t,n-t+1]_{q}$, no coordinate projection of $\mathcal{C}_{1}$ is identically zero. Otherwise, puncturing that coordinate would give a length-$(n-1)$ linear code of dimension $t$ and minimum distance $n-t+1$, contradicting the Singleton bound. Therefore, $\pi_{i}$ is a non-zero linear form and hence is surjective onto $\mathbb{F}_{q}$. By standard algebra, $\mathcal{C}_{1}/\operatorname{ker}(\pi_{i})$ is isomorphic to $\mathbb{F}_{q}$, and the cosets $\{\mathbf{c}_{\alpha}+\operatorname{ker}(\pi_{i}): \alpha\in\mathbb{F}_{q},\ \pi_{i}(\mathbf{c}_{\alpha})=\alpha\}$ are a partition of $\mathcal{C}_{1}$. Since $\mathbf{U}$ is uniformly distributed over $\mathcal{C}_{1}$, it follows that $U_{i}=\pi_{i}(\mathbf{U})$ is uniformly distributed in $\mathbb{F}_{q}$.
\end{proof}

On the other hand, in preparation of our main result, let us repeat here the definitions
introduced in~\cite{gupta2026security}, adapted to our setting.
\begin{definition}
  Given a secret vector $\mathbf{s}$ and some leakage realization
  $\boldsymbol\ell$, define the information density
  \begin{equation*}
    \Lambda(\mathbf{s}, \boldsymbol\ell) := \frac{\mathbb{P}(\mathbf{S} =
      \mathbf{s}, \mathbf{L} = \boldsymbol\ell)}{\mathbb{P}(\mathbf{S} =
      \mathbf{s}) \mathbb{P}(\mathbf{L} = \boldsymbol\ell)},
  \end{equation*}
  and the information density under independence
  \begin{equation*}
    \Lambda^\prime(\mathbf{s}, \boldsymbol\ell) := \frac{\mathbb{P}(\mathbf{S} =
      \mathbf{s}, \mathbf{L} = \boldsymbol\ell)}{\mathbb{P}(\mathbf{S} =
      \mathbf{s}) \prod_{j = 1}^n \mathbb{P}(L_i = \ell_i)}.
  \end{equation*}
\end{definition}
The name \emph{information density} is justified by the simple observation
that
\begin{equation*}
  I(\mathbf{S}; \mathbf{L}) = \mathbb{E} \bigl[ \log \Lambda(\mathbf{S},
  \mathbf{L}) \bigr],
\end{equation*}
where the expectation is over the joint distribution of
$(\mathbf{S}, \mathbf{L})$ and $I(\mathbf{S}; \mathbf{L})$ is the mutual
information between two random vectors. In a similar spirit,
$\Lambda^\prime$ captures the statistics that compare the channel
$\mathbf{S} \mapsto \mathbf{L}$ with the separate random variables
$\mathbf{S}$ and the iid vector $(L_1, \dots, L_n)$. Through the lens of
information theory, this quantifies to what extent it is possible to simulate
the channel $\mathbf{S} \mapsto \mathbf{L}$ with independent random variables,
or equivalently if the two alternatives can be statistically
distinguished. Moreover, an important identity is $2 \Delta^{\text{TV}} = \mathbb{E} \bigl[ | \Lambda - 1 | \bigr]$, where again the expectation is with respect to the joint distribution of
$(\mathbf{S}, \mathbf{L})$.

\begin{lemma}
  Given a secret vector $\mathbf{s}$ and a leakage realization
  $\boldsymbol\ell$
  \begin{equation*}
    \left| \frac{\Lambda^\prime(\mathbf{s},
        \boldsymbol\ell)}{\Lambda(\mathbf{s}, \boldsymbol\ell)} - 1 \right|
    \leq \max_{\mathbf{s} \in \mathbb{F}_q^k} \bigl|
    \Lambda^\prime(\mathbf{s}, \boldsymbol\ell) - 1\bigr|.
  \end{equation*}
\end{lemma}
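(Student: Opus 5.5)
The plan is to observe that the ratio $\Lambda^\prime/\Lambda$ does not depend on the secret at all. Once that is done, the claim reduces to the elementary fact that an average is bounded by its largest term. We assume throughout that $\boldsymbol\ell$ lies in the support of $\mathbf{L}$, so that $\mathbb{P}(\mathbf{L}=\boldsymbol\ell)>0$. Then every marginal $\mathbb{P}(L_j=\ell_j)$ is positive as well, and $\Lambda$, $\Lambda^\prime$ are well defined, with $\Lambda$ treated as nonzero wherever the ratio is formed.

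\textbf{Step 1: the ratio does not depend on $\mathbf{s}$.} Dividing the two definitions, the common factor $\mathbb{P}(\mathbf{S}=\mathbf{s},\mathbf{L}=\boldsymbol\ell)/\mathbb{P}(\mathbf{S}=\mathbf{s})$ cancels. This gives
\begin{equation*}
  \frac{\Lambda^\prime(\mathbf{s},\boldsymbol\ell)}{\Lambda(\mathbf{s},\boldsymbol\ell)}
  = \frac{\mathbb{P}(\mathbf{L}=\boldsymbol\ell)}{\prod_{j=1}^n \mathbb{P}(L_j=\ell_j)} .
\end{equation*}

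\textbf{Step 2: rewrite the ratio as an average of $\Lambda^\prime$.} By the law of total probability,
\begin{equation*}
  \mathbb{P}(\mathbf{L}=\boldsymbol\ell)=\sum_{\mathbf{s}^\prime\in\mathbb{F}_q^k}\mathbb{P}(\mathbf{S}=\mathbf{s}^\prime,\mathbf{L}=\boldsymbol\ell).
\end{equation*}
Dividing by $\prod_j \mathbb{P}(L_j=\ell_j)$ and inserting $\mathbb{P}(\mathbf{S}=\mathbf{s}^\prime)/\mathbb{P}(\mathbf{S}=\mathbf{s}^\prime)$ in each term (legitimate because $\mathbf{S}$ is uniform, so these probabilities are positive) yields
\begin{equation*}
  \frac{\Lambda^\prime(\mathbf{s},\boldsymbol\ell)}{\Lambda(\mathbf{s},\boldsymbol\ell)}
  = \sum_{\mathbf{s}^\prime\in\mathbb{F}_q^k}\mathbb{P}(\mathbf{S}=\mathbf{s}^\prime)\,\Lambda^\prime(\mathbf{s}^\prime,\boldsymbol\ell).
\end{equation*}
So the ratio equals $\mathbb{E}_{\mathbf{S}}[\Lambda^\prime(\mathbf{S},\boldsymbol\ell)]$.

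\textbf{Step 3: bound the deviation from $1$.} Since $\sum_{\mathbf{s}^\prime}\mathbb{P}(\mathbf{S}=\mathbf{s}^\prime)=1$, we can subtract $1$ inside the sum and apply the triangle inequality:
\begin{equation*}
  \left|\frac{\Lambda^\prime}{\Lambda}-1\right|
  = \Bigl|\sum_{\mathbf{s}^\prime}\mathbb{P}(\mathbf{S}=\mathbf{s}^\prime)\bigl(\Lambda^\prime(\mathbf{s}^\prime,\boldsymbol\ell)-1\bigr)\Bigr|
  \le \sum_{\mathbf{s}^\prime}\mathbb{P}(\mathbf{S}=\mathbf{s}^\prime)\bigl|\Lambda^\prime(\mathbf{s}^\prime,\boldsymbol\ell)-1\bigr| .
\end{equation*}
The right-hand side is a convex combination of the values $|\Lambda^\prime(\mathbf{s}^\prime,\boldsymbol\ell)-1|$, so it is at most $\max_{\mathbf{s}^\prime}|\Lambda^\prime(\mathbf{s}^\prime,\boldsymbol\ell)-1|$, which is the claimed bound.

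There is no substantive obstacle in this argument. The only point needing care is the support condition on $\boldsymbol\ell$ that makes the ratios meaningful. Pairs $(\mathbf{s},\boldsymbol\ell)$ with zero joint probability contribute nothing to the later expectations, so they can be excluded.
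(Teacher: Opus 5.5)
Your proof is correct and follows the paper's argument: the paper also rewrites $\Lambda^\prime/\Lambda$ as $\mathbb{P}(\mathbf{L}=\boldsymbol\ell)/\prod_j\mathbb{P}(L_j=\ell_j)=\mathbb{E}_{\mathbf{S}}[\Lambda^\prime(\mathbf{S},\boldsymbol\ell)]$, moves the absolute value inside the expectation (calling it Jensen's inequality), and bounds the average by the maximum. You state the final average-to-maximum step and the support condition on $\boldsymbol\ell$ explicitly, whereas the paper leaves them implicit.
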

\begin{proof}
We have that
\begin{equation*} 
\begin{aligned} 
    \left|\frac{\Lambda'}{\Lambda}-1\right| 
    &= \left|\frac{\mathbb{P}(\mathbf{L}=\boldsymbol{\ell})}{\prod_{j=1}^{n}\mathbb{P}(L_j=\ell_j)}-1\right| \\ 
    &= \left|\mathbb{E}_{\mathbf{S}}\!\left[\Lambda'(\mathbf{S},\boldsymbol{\ell})-1\right]\right| \\ 
    &\leq \mathbb{E}_{\mathbf{S}}\!\left[\left|\Lambda'(\mathbf{S},\boldsymbol{\ell})-1\right|\right]. 
\end{aligned} 
\end{equation*}
and the lemma follows by using Jensen's inequality over convex functions.
\end{proof}
We can also bound the maximum deviation of the information density as follows.
\begin{lemma}
  Fix a secret $\mathbf{s}$ and a leakage realization
  $\boldsymbol\ell$. It holds that
  \begin{equation*}
    \max_{\mathbf{s} \in \mathbb{F}_q^k} \bigl| \Lambda(\mathbf{s},
    \boldsymbol\ell) - 1 \bigr| \leq 2 \frac{\Lambda(\mathbf{s},
      \boldsymbol\ell)}{\Lambda^\prime(\mathbf{s}, \boldsymbol\ell)} \left(
    \max_{\mathbf{s} \in \mathbb{F}_q^k} \bigl| \Lambda^\prime(\mathbf{s},
    \boldsymbol\ell) - 1 \bigr| \right) 
  \end{equation*}
\end{lemma}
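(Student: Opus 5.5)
The plan is to use the fact that, for a fixed leakage realization $\boldsymbol\ell$, the ratio $\Lambda/\Lambda'$ does not depend on $\mathbf{s}$. From the two definitions,
\[
  \frac{\Lambda(\mathbf{s},\boldsymbol\ell)}{\Lambda'(\mathbf{s},\boldsymbol\ell)} = \frac{\prod_{j=1}^n \mathbb{P}(L_j=\ell_j)}{\mathbb{P}(\mathbf{L}=\boldsymbol\ell)} =: \frac{1}{r(\boldsymbol\ell)} .
\]
I restrict to realizations with $\mathbb{P}(\mathbf{L}=\boldsymbol\ell)>0$. Otherwise both densities vanish and the statement is vacuous, and such $\boldsymbol\ell$ do not contribute to $\mathbb{E}[|\Lambda-1|]$ anyway. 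With this restriction, $\Lambda(\mathbf{s},\boldsymbol\ell)=\Lambda'(\mathbf{s},\boldsymbol\ell)/r(\boldsymbol\ell)$ holds for every $\mathbf{s}\in\mathbb{F}_q^k$ with the same constant $r(\boldsymbol\ell)$. As in the proof of the previous lemma, $r(\boldsymbol\ell)=\mathbb{E}_{\mathbf{S}}[\Lambda'(\mathbf{S},\boldsymbol\ell)]$.

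Next I write, for an arbitrary $\mathbf{s}$,
\[
  \Lambda(\mathbf{s},\boldsymbol\ell)-1 = \frac{\Lambda'(\mathbf{s},\boldsymbol\ell)-r(\boldsymbol\ell)}{r(\boldsymbol\ell)} = \frac{\bigl(\Lambda'(\mathbf{s},\boldsymbol\ell)-1\bigr)-\bigl(r(\boldsymbol\ell)-1\bigr)}{r(\boldsymbol\ell)} .
\]
I then apply the triangle inequality and bound the two numerator terms separately. Set $M(\boldsymbol\ell):=\max_{\mathbf{s}}|\Lambda'(\mathbf{s},\boldsymbol\ell)-1|$. The first term is at most $M(\boldsymbol\ell)$ by definition. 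For the second term, note that $r(\boldsymbol\ell)-1=\Lambda'/\Lambda-1$, so the previous lemma gives $|r(\boldsymbol\ell)-1|\le M(\boldsymbol\ell)$. Equivalently, this follows directly from $|\mathbb{E}_{\mathbf{S}}[\Lambda'-1]|\le \max_{\mathbf{s}}|\Lambda'-1|$.

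Combining the two bounds gives $|\Lambda(\mathbf{s},\boldsymbol\ell)-1|\le 2M(\boldsymbol\ell)/r(\boldsymbol\ell)$ for every $\mathbf{s}$. Since the right-hand side does not depend on $\mathbf{s}$, I take the maximum over $\mathbf{s}$ on the left. I then substitute $1/r(\boldsymbol\ell)=\Lambda(\mathbf{s},\boldsymbol\ell)/\Lambda'(\mathbf{s},\boldsymbol\ell)$, which is legitimate for any fixed $\mathbf{s}$ because the ratio is $\mathbf{s}$-independent. This yields exactly the stated inequality.

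The main point to get right is this $\mathbf{s}$-independence of $\Lambda/\Lambda'$. It is what makes the factor $\Lambda(\mathbf{s},\boldsymbol\ell)/\Lambda'(\mathbf{s},\boldsymbol\ell)$ on the right-hand side well defined even though $\mathbf{s}$ also appears as the maximization variable. A secondary point is to state explicitly the positivity assumption on $\mathbb{P}(\mathbf{L}=\boldsymbol\ell)$, and to check that $\mathbb{P}(\mathbf{L}=\boldsymbol\ell)>0$ forces each marginal $\mathbb{P}(L_j=\ell_j)>0$. That check ensures that both $\Lambda'$ and $r(\boldsymbol\ell)$ are well defined and that $r(\boldsymbol\ell)>0$.
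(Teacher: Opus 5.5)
Your proof is correct and is essentially the paper's argument: your split $\Lambda-1=[(\Lambda'-1)-(r-1)]/r$ is exactly the paper's triangle-inequality split $|\Lambda-1|\le|\Lambda-\Lambda/\Lambda'|+|\Lambda/\Lambda'-1|$, with the second term controlled by the previous lemma. Your version also spells out two steps the paper leaves implicit. First, the paper's second term is $|\Lambda/\Lambda'-1|=|r-1|/r$, not the quantity $|\Lambda'/\Lambda-1|$ that the previous lemma bounds. Second, the paper's right-hand side uses $\mathbf{s}$ both as a fixed point and as the maximization variable, which is only meaningful because $\Lambda/\Lambda'$ does not depend on $\mathbf{s}$.
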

\begin{proof}
This follows by applying the triangle inequality
\begin{equation*} 
\begin{aligned}
\left|\Lambda(\mathbf{s},\boldsymbol{\ell})-1\right| 
&\leq \left|\Lambda(\mathbf{s},\boldsymbol{\ell})-\frac{\Lambda}{\Lambda'}\right|+\left|\frac{\Lambda}{\Lambda'}-1\right| \\ 
&= \frac{\Lambda(\mathbf{s},\boldsymbol{\ell})}{\Lambda'(\mathbf{s},\boldsymbol{\ell})}\left|\Lambda'(\mathbf{s},\boldsymbol{\ell})-1\right|+\left|\frac{\Lambda(\mathbf{s},\boldsymbol{\ell})}{\Lambda'(\mathbf{s},\boldsymbol{\ell})}-1\right|. 
\end{aligned} 
\end{equation*}

  Reorganizing these terms
  \begin{equation*}
    \max_{\mathbf{s} \in \mathbb{F}_q^k} \bigl| \Lambda(\mathbf{s},
    \boldsymbol\ell) - 1 \bigr| \leq 2 \frac{\Lambda(\mathbf{s},
      \boldsymbol\ell)}{\Lambda^\prime(\mathbf{s}, \boldsymbol\ell)} \left(
      \max_{\mathbf{s} \in \mathbb{F}_q^k} \bigl| \Lambda^\prime(\mathbf{s},
      \boldsymbol\ell) - 1 \bigr| \right).
  \end{equation*}
  Note that for the latter inequality we have used the result in Lemma 2.
\end{proof}
We now address the question of bounding $\max_{\mathbf{s} \in \mathbb{F}_q^k}
| \Lambda^\prime(\mathbf{s}, \boldsymbol\ell) - 1|$. To that end, define first
the a posteriori distributions
\begin{align*}
  \Psi_{\ell_i}: \mathbb{F}_q &\rightarrow \mathbb{C} \\
  \alpha &\mapsto \mathbb{P}(U_i = \alpha \mid L_i = \ell_i),
\end{align*}
and let $\Psi_{\boldsymbol\ell}(\alpha_1, \dots, \alpha_n) := \prod_{j = 1}^n
\Psi_{\ell_j}(\alpha_j)$.\\

\begin{lemma}
    For a $\mathrm{RampSS}(t,k,n)$ scheme, let $\boldsymbol{\ell}$ be a fixed leakage realization under the $\delta$-noisy leakage model. Then,
  \begin{equation*}
    \max_{\mathbf{s} \in \mathbb{F}_q^k} \bigl| \Lambda^\prime(\mathbf{s},
    \boldsymbol\ell) - 1 \bigr| \leq \sum_{\mathbf{w} \in \mathcal{C}_2^{\perp,\times}} \bigl|
  \hat{\Psi}_{\boldsymbol\ell}(\mathbf{w}) \bigr|.
  \end{equation*}
\end{lemma}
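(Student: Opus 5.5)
We plan to write $\Lambda'(\mathbf{s},\boldsymbol\ell)$ as an average of the product posterior $\Psi_{\boldsymbol\ell}$ over the coset of $\mathcal{C}_2$ selected by $\mathbf{s}$. Then we apply Poisson summation over that coset, which turns the sum into a sum over the dual code $\mathcal{C}_2^{\perp}$. The zero frequency should give exactly $1$; the remaining terms are bounded in absolute value, with the phases discarded.

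\textbf{Rewriting $\Lambda'$.} Fix $\mathbf{s}$ and $\boldsymbol\ell$. The channels $L_j\mid U_j$ are independent and $\mathbf{L}$ depends on $\mathbf{S}$ only through $\mathbf{U}$. Given $\mathbf{S}=\mathbf{s}$, the vector $\mathbf{U}$ is uniform on $\psi(\mathbf{s})+\mathcal{C}_2$. Hence
\[
\mathbb{P}(\mathbf{L}=\boldsymbol\ell\mid\mathbf{S}=\mathbf{s})=\frac{1}{|\mathcal{C}_2|}\sum_{\mathbf{u}\in\psi(\mathbf{s})+\mathcal{C}_2}\prod_{j=1}^n\mathbb{P}(L_j=\ell_j\mid U_j=u_j).
\]
By Lemma~1 each $U_j$ is uniform on $\mathbb{F}_q$, so Bayes gives
\[
\mathbb{P}(L_j=\ell_j\mid U_j=u_j)=q\,\mathbb{P}(L_j=\ell_j)\,\Psi_{\ell_j}(u_j).
\]
Dividing by $\prod_j\mathbb{P}(L_j=\ell_j)$ yields
\[
\Lambda'(\mathbf{s},\boldsymbol\ell)=\frac{q^n}{|\mathcal{C}_2|}\sum_{\mathbf{u}\in\psi(\mathbf{s})+\mathcal{C}_2}\Psi_{\boldsymbol\ell}(\mathbf{u}).
\]

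\textbf{Poisson summation.} We use the additive characters $\chi(x)=\exp(2\pi i\,\mathrm{Tr}(x)/p)$ of $\mathbb{F}_q$ and the transform $\hat f(\mathbf{w})=\sum_{\mathbf{x}}f(\mathbf{x})\chi(-\langle\mathbf{w},\mathbf{x}\rangle)$. The Poisson summation formula for a coset of a linear code reads
\[
\sum_{\mathbf{u}\in\mathbf{a}+\mathcal{C}_2}f(\mathbf{u})=\frac{|\mathcal{C}_2|}{q^n}\sum_{\mathbf{w}\in\mathcal{C}_2^{\perp}}\hat f(\mathbf{w})\,\chi(\langle\mathbf{w},\mathbf{a}\rangle).
\]
Applying it with $f=\Psi_{\boldsymbol\ell}$ and $\mathbf{a}=\psi(\mathbf{s})$ gives
\[
\Lambda'(\mathbf{s},\boldsymbol\ell)=\sum_{\mathbf{w}\in\mathcal{C}_2^{\perp}}\hat\Psi_{\boldsymbol\ell}(\mathbf{w})\,\chi(\langle\mathbf{w},\psi(\mathbf{s})\rangle).
\]
The $\mathbf{w}=\mathbf{0}$ term equals $\hat\Psi_{\boldsymbol\ell}(\mathbf{0})=\prod_j\sum_{\alpha}\Psi_{\ell_j}(\alpha)=1$, since each $\Psi_{\ell_j}$ is a probability distribution. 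Subtracting it and applying the triangle inequality with $|\chi|=1$, we get for every $\mathbf{s}$
\[
\bigl|\Lambda'(\mathbf{s},\boldsymbol\ell)-1\bigr|\le\sum_{\mathbf{w}\in\mathcal{C}_2^{\perp,\times}}\bigl|\hat\Psi_{\boldsymbol\ell}(\mathbf{w})\bigr|.
\]
Here $\mathcal{C}_2^{\perp,\times}$ denotes $\mathcal{C}_2^{\perp}\setminus\{\mathbf{0}\}$. The bound does not depend on $\mathbf{s}$, so it also holds for the maximum over $\mathbf{s}$, which is the statement.

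\textbf{Main obstacle.} The delicate step is the normalization bookkeeping. The constants $q^n/|\mathcal{C}_2|$ from the first step and $|\mathcal{C}_2|/q^n$ from Poisson summation must cancel exactly, and this in turn requires Lemma~1, i.e. uniform marginals, to convert likelihoods into posteriors with the factor $q$. We would verify this by checking that the zero frequency contributes exactly $1$. We would also state the Poisson summation formula for cosets with the chosen sign convention, since that convention fixes the phase factor; the phase is harmless because it is removed by the absolute value.
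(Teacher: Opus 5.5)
Your proposal is correct and follows the paper's own route. You use Bayes with the uniform share marginals of Lemma~1 to write $\Lambda'(\mathbf{s},\boldsymbol\ell)$ as $q^{n}/|\mathcal{C}_2|$ (the paper's $q^{k+n-t}$) times the sum of $\Psi_{\boldsymbol\ell}$ over $\psi(\mathbf{s})+\mathcal{C}_2$, apply Poisson summation so that the factor $1/|\mathcal{C}_2^{\perp}|$ cancels exactly, isolate $\hat\Psi_{\boldsymbol\ell}(\mathbf{0})=1$, and bound the remaining terms by the triangle inequality. Your explicit coset Poisson formula with a fixed sign convention is just a cleaner bookkeeping of that same computation.
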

\begin{proof}
  The Lemma and the proof are a special case of Lemma 4 in~\cite{gupta2026security}. Indeed, we can write that
  \begin{equation*}
  \begin{aligned}
    \mathbb{P}(\mathbf{L} = \boldsymbol\ell \mid \mathbf{S} = \mathbf{s}) 
    & = \dfrac{\sum_{\mathbf{u}: \mathbf{S} = \mathbf{s}} \mathbb{P}(\mathbf{U}=\mathbf{u}) \, \mathbb{P}(\mathbf{L} = \boldsymbol\ell \mid \mathbf{U}=\mathbf{u})}{\sum_{\mathbf{u}: \mathbf{S} = \mathbf{s}} \mathbb{P}(\mathbf{U}=\mathbf{u})}\\
    & = q^{k-t} \sum_{\mathbf{u}: \mathbf{S} = \mathbf{s}} \prod_{i=1}^n \mathbb{P}(L_i = \ell_i \mid U_i = u_i) \\
    & = q^{k+n-t} \sum_{\mathbf{u}: \mathbf{S} = \mathbf{s}} \prod_{i=1}^n  \mathbb{P}(L_i = \ell_i) \\
    & \hspace{5.2em}\cdot \mathbb{P}(U_i = u_i \mid L_i = \ell_i) \\
    &= q^{k+n-t} \prod_{i = 1}^n \mathbb{P}(L_i = \ell_i) \sum_{\mathbf{u} \in \psi(\mathbf{s})+\mathcal{C}_2}
    \Psi_{\boldsymbol\ell}(\mathbf{u})
  \end{aligned}
  \end{equation*}
  by using the Bayes's rule and since there exists a \emph{unique} coset of secrets that results in the same share vector $\mathbf{u}$. And by the Poisson summation formula (Theorem 2 from~\cite{gupta2026security})
  \begin{align*}
    \Lambda^\prime(\mathbf{s}, \boldsymbol\ell) 
    &= \frac{\mathbb{P}(\mathbf{L}
      = \boldsymbol\ell \mid \mathbf{S} = \mathbf{s})}{\prod_{i = 1}^n
      \mathbb{P}(L_i = \ell_i)} \\
      &= \dfrac{q^{k+n-t}}{|\mathcal{C}_2^\perp|} \cdot \sum_{\mathbf{u} \in \mathcal{C}_2^\perp} \hat{\Psi}_{\boldsymbol\ell}(\mathbf{u}) \chi(-\langle \psi(\mathbf{s}), \mathbf{w} \rangle)\\
      & = \sum_{\mathbf{w} \in \mathcal{C}_2^{\perp}}
    \hat{\Psi}_{\boldsymbol\ell}(\mathbf{w}) \chi(-\langle \psi(\mathbf{s}), \mathbf{w} \rangle),
  \end{align*}
  where $\chi: \, \mathbb{F}_q \to \mathbb{C}$ is the character function. Since $\hat{\Psi}_{\boldsymbol\ell}(0)\chi(0) = 1$ and $|\chi(\cdot)| = 1$,
  we get
  \begin{equation*}
    \bigl| \Lambda^\prime(\mathbf{s}, \boldsymbol\ell) - 1 \bigr| \leq
    \sum_{\mathbf{w} \in \mathcal{C}_2^{\perp,\times}} \bigl|
    \hat{\Psi}_{\boldsymbol\ell}(\mathbf{w}) \bigr|,
  \end{equation*}
  by applying the triangle inequality.
\end{proof}

\subsection{Proof of Theorem 1}

Recall that $|\hat{\Psi}_{\boldsymbol\ell}(\mathbf{w})| = \prod_{i =
  1}^n |\hat{\Psi}_{\ell_i}(w_i)|$, that $\bigl|  \hat{\Psi}_{\ell_i}(0)\bigr| = 1$, and that
\begin{equation*}
  \bigl| \hat{\Psi}_{\ell_i}(w_i) \bigr| \leq \operatorname{bias}(U_i \mid L_i =
  \ell_i) = 2 \Delta_{\ell_i}^{\text{FL}}(U_i, L_i).
\end{equation*}

Moreover, the Fourier coefficients $| \hat{\Psi}_{\ell_j}(w_j)|$ are smaller in magnitude than the trivial one $|\hat{\Psi}_{\ell_j}(0)| = 1$ as highlighted in~\cite{benhamouda2021local}. Consequently, it is sufficient to pick any $\mathbf{w}$ of minimum weight and bound the Fourier coefficients. Since the sum is performed over $\mathcal{C}_2^{\perp}$, we know that the minimum weight of non-zero codewords is $t-k+1$ due to its MDS structure.

To finalize the bounds on the leakage, we recall that
$\Delta^{\text{ARE}}(\mathbf{S}, \mathbf{L}) = \mathbb{E}_{\mathbf{L}} \bigl[
\max_{\mathbf{s} \in \mathbb{F}_q^k} \Lambda(\mathbf{s}, \mathbf{L}) - 1
\bigr]$. Therefore,
\begin{equation*} 
\begin{aligned} 
\frac{1}{2}\Delta^{\mathrm{ARE}}(\mathbf{S},\mathbf{L}) 
&\leq \mathbb{E}_{\mathbf{L}}\!\left[\frac{\Lambda(\mathbf{s},\boldsymbol{\ell})}{\Lambda'(\mathbf{s},\boldsymbol{\ell})}\max_{\mathbf{s}\in\mathbb{F}_{q}^{k}}\left|\Lambda'(\mathbf{s},\boldsymbol{\ell})-1\right|\right] \\ 
& = \sum_{\boldsymbol{\ell} \in \mathcal{L}} \prod_{i=1}^{n} \mathbb{P}(\mathbf{L} = \boldsymbol{\ell})  \max_{\mathbf{s}\in\mathbb{F}_{q}^{k}}\left|\Lambda'(\mathbf{s},\boldsymbol{\ell})-1\right|\\
&\leq \sum_{\mathbf{w}\in\mathcal{C}_{2}^{\perp,\times}}\prod_{i=1}^{n}\mathbb{E}_{L_i}\!\left[\left|\widehat{\Psi}_{\ell_i}(w_i)\right|\right]. 
\end{aligned} 
\end{equation*}
Using that
$\mathbb{E}_{L_j} \bigl[ \bigl| \hat{\Psi}_{\ell_j}(w_j) \bigr| \bigr]
\leq 2 \Delta^{\text{FL}}(U_j, L_j)$, and letting $\mathcal{J} \subseteq \text{supp}(\mathbf{w})$ be a subset of size $t-k+1$ among $[n]$, we get the bound
\begin{equation*}
\begin{aligned} 
  \frac{1}{2} \Delta^{\text{ARE}}(\mathbf{S}, \mathbf{L}) 
  & \leq \sum_{\mathbf{w}\in\mathcal{C}_{2}^{\perp,\times}} 2^{t-k+1} \prod_{j \in \mathcal{J}} \Delta^{\text{FL}}(U_j, L_j)\\
  & \leq 2^{t-k+1}
    \left(q^{n-t+k}-1\right)
    \max_{\mathcal{J} \in \binom{[n]}{t-k+1}}
    \prod_{j \in \mathcal{J}} \delta_{j},
\end{aligned} 
\end{equation*}
It is possible to bound also the mutual information,
\begin{equation*} 
\begin{aligned} 
I(\mathbf{S};\mathbf{L}) 
&\leq H(\mathbf{S})\,\Delta^{\mathrm{ARE}}(\mathbf{S},\mathbf{L}) \\ 
&\leq k\,2^{t-k+1}\left(q^{n-t+k}-1\right)\log_{2}q \\ 
&\hspace{1.4em} \cdot \max_{\mathcal{J}\in\binom{[n]}{t-k+1}}\prod_{j\in\mathcal{J}}\delta_j. 
\end{aligned} 
\end{equation*}

\section{NP-hardness of Problem $\mathcal{P}1$} 
\label{app:np_hardness}

\begin{proof}
We prove the claim by polynomial-time reduction from a classical NP-hard scheduling problem. Specifically, we consider the minimum mean finishing-time problem for independent tasks on parallel processors, shown in~\cite{bruno1974scheduling}, to be polynomial complete.

Consider a single time slot $t$ whose objective reduces to the per-slot average cost over the batch $\mathcal{K}(t)$. Then, impose the following restrictions:
\begin{enumerate}
    \item[\emph{a)}] the objective function in \eqref{eq:problem_obj} is restrained to a only delay-aware objective by choosing $\alpha=1$, and the privacy-leakage penalty is made inactive by choosing $\gamma=0$;
    \item[\emph{b)}] for every task $(i,k)$, set $s_{ik}=t_{ik}=n_{ik}=1$, so that offloading means assigning the overall task to exactly one helper;
    \item[\emph{c)}] the memory constraint \eqref{eq:problem_a} is made inactive by taking sufficiently large resources;
    \item[\emph{d)}] make local execution dominated by assigning it an arbitrarily large cost, so that every optimal solution offloads each task to one helper device;
    \item[\emph{e)}] suppress communication costs (both transmission time and transmission energy consumption), so that only computation queues remain relevant.
\end{enumerate}

Under these restrictions, every task $(i,k)\in\mathcal{K}(t)$ must be assigned to exactly one helper device $j$, and its delay is determined by the the completion time induced by the computation queue of device $j$ and its computation resources. Hence, the restricted problem becomes a scheduling problem for assigning independent jobs to a set of parallel resources. More precisely, let an instance of the source problem contain $n$ jobs and $m$ processors, and let $p_{qj}$ denote the processing time of job $q$ on processor $j$. We construct an instance of the restricted version of $\mathcal{P}1$ by mapping one task in $\mathcal{K}(t)$ to each job, one helper device for each processor, and setting the computation time of the corresponding task on helper $j$ equal to $p_{qj}$. This mapping requires writing $n$ tasks, $m$ helper devices, and the $n\times m$ matrix $\{p_{qj}\}$, and is therefore computable in polynomial time. Because communication is suppressed and local execution is not considered, scheduling decisions of the restricted $\mathcal{P}1$ are in one-to-one mapping with scheduling decisions of the source scheduling instance.

Since minimum mean completion-time scheduling on parallel processors is NP-hard, the restricted version of $\mathcal{P}1$ is NP-hard. Therefore, the full problem $\mathcal{P}1$ is NP-hard as well.
\end{proof}

\section{More Implementation Details}
\label{app:more_implementation_details}

The DES is implemented using the SimPy framework\footnote{\url{https://www.sympy.org/}} and validates the scheduling decisions in an event-driven queueing environment. Its workflow is summarized in Figure~\ref{fig:DES_scheme}. At each scheduling epoch, the incoming tasks \texttt{TaskBatch} and the current \texttt{SystemState} are provided to the scheduling policy, which returns a \texttt{SchedulePlan}. This plan is then executed by the Execution Engine, which instantiates the scheduled \texttt{Workload} and manages the \texttt{CodedTaskController} for tracking the completion of the encoded shares required to reconstruct each output. The Node Abstraction represents the heterogeneous devices and encapsulates their local resources, including \texttt{SimPy Resources} for serialized transmission and computation operations, and \texttt{SimPy Containers} for memory occupancy. Once the time-slot execution finishes, the simulator updates the system state, generates the next task batch, and repeats the process until the last slot is reached.

\begin{figure}[tbh]
    \centering
    \includegraphics[width=1.0\linewidth]{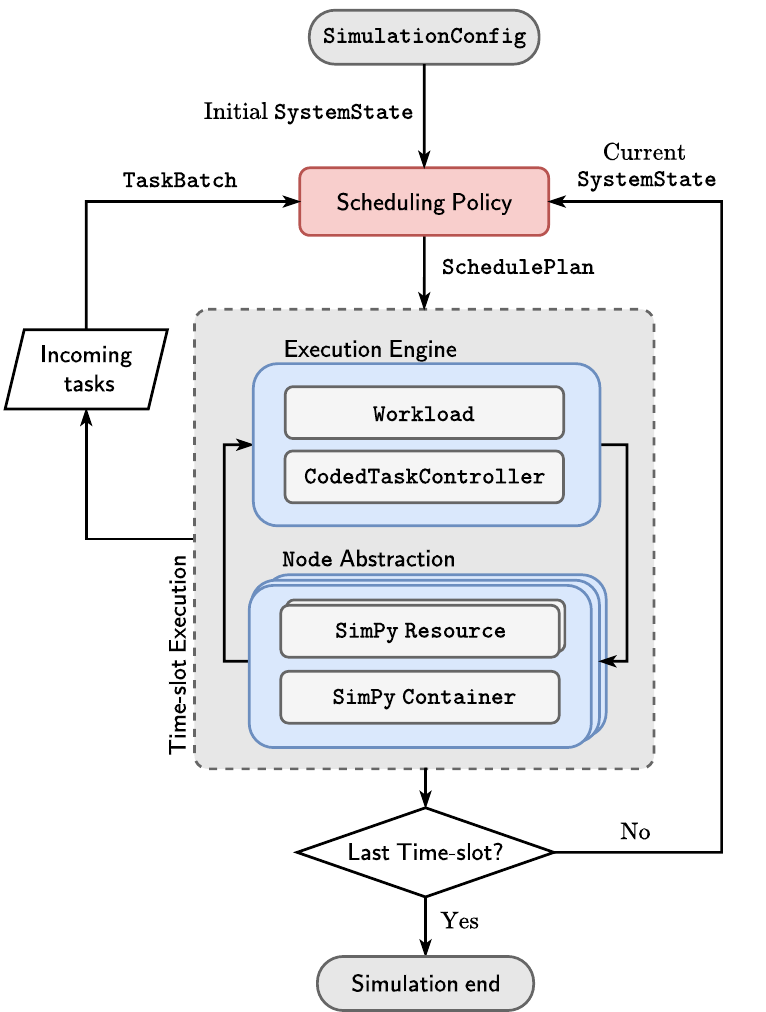}
    \caption{Workflow of proposed discrete-event simulator}
    \label{fig:DES_scheme}
\end{figure}

This simulator explicitly captures asynchronous execution, resource contention, queue concurrency, and memory-admission blocking effects. Specifically, when an offloaded share reaches a helper with insufficient available memory, the corresponding transmission is blocked until enough memory is released.

\section{Supplementary Results}

\subsection{General Performance}

\begin{figure*}[tbh]
    \centering
    \begin{subfigure}{1.0\linewidth}
        \centering
        \includegraphics[width=1.0\linewidth]{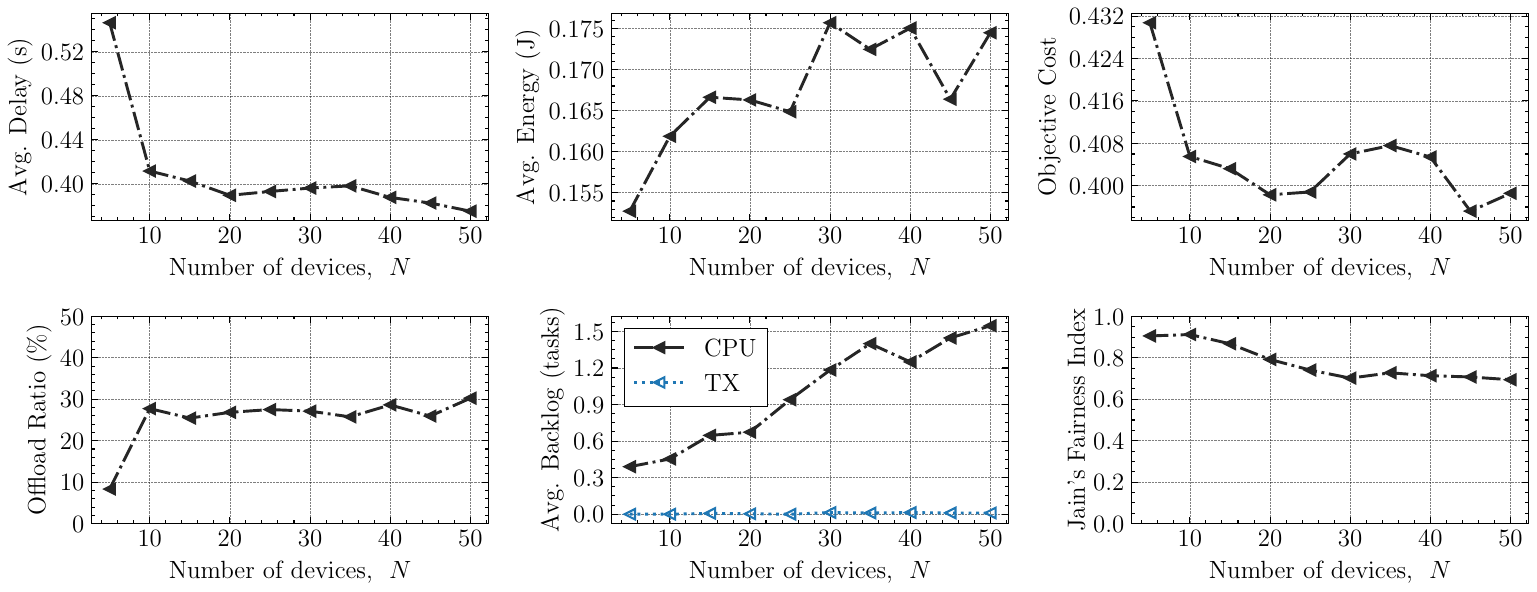}
        \caption{Sweep over number of devices}
        \label{fig:general_a}
    \end{subfigure}
    \vspace{0.3cm}
    \begin{subfigure}{1.0\linewidth}
        \centering
        \includegraphics[width=1.0\linewidth]{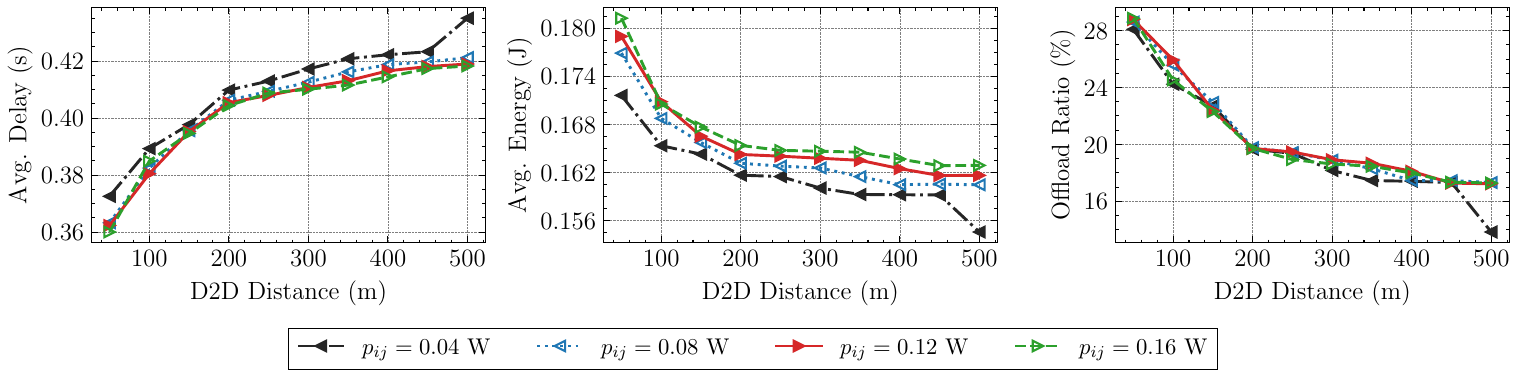}
        \caption{Sweep over D2D distance between devices}
        \label{fig:general_b}
    \end{subfigure}
    \caption{System performance under different network sizes and D2D distances. }
    \label{fig:general}
\end{figure*}

Figure~\ref{fig:general} analyzes the general system performance under different network-based parameters. Specifically, Figure~\ref{fig:general_a} evaluates the impact of the network size, measured by the number of devices. The results show a clear delay--energy trade-off as the average task delay decreases when the network size increases, going from $0.547\,\text{s}$ to $0.375\,\text{s}$, while the average task energy consumption is lower for smaller networks, going from $0.153\,\text{J}$ to $0.175\,\text{J}$. This behavior is expected since larger networks provide more potential helper devices, which improves the capacity to serve backlogged requesters and further benefits from the threshold-based recovery of the coded scheme. However, a higher number of offloaded tasks also implies a larger number of transmitted shares and remote computations. This mainly increases the backlogged computation tasks, although this effect is partially mitigated by the cancellation-aware execution policy, which avoids waiting for all shares once the recovery threshold is reached.

Moreover, even though the average number of backlogged tasks per device increases with $N$, the offload ratio remains almost constant, around $25\%$. This indicates that the system preserves a stable scheduling behavior while the absolute number of generated tasks and offloaded shares increases with the network size, which naturally produces larger queue backlogs. Additionally, the number of backlogged transmissions is almost negligible compared to the CPU backlog, which indicates that, at the end of each time slot, the transmission rate is sufficient to serve most pending transmissions. For $N=5$, the system exhibits a different behavior, with higher delay and lower offload ratio than the remaining configurations. This is caused by the limited number of balanced and helper devices under the configured device proportions, which reduces the availability of attractive offloading opportunities. 

Finally, we validate the Jain's Fairness index, which measures whether the workload is evenly distributed among devices or, conversely, computed as
\begin{equation*}
    \frac{\left( \sum_{i=1}^{N} x_i \right)^2}{N \sum_{i=1}^{N} x_i^2},
\end{equation*}
where $(x_i)_{i=1}^N$ denote the workload allocation profile across the $N$ devices, and $x_i$ represents the number of processed and waiting tasks at device $i$. It remains around $0.8$ throughout the sweep of number of devices, which shows that the heuristic solver distributes the workload fairly when selecting offloading decisions and helper devices, i.e., offloaded shares are not always assigned to the same high-capacity devices. However, when $N$ increases, this index tends to decrease. This is also expected, since the strongest devices become more attractive for offloading and progressively receive a larger fraction of the remote workload.

On the other hand, Figure~\ref{fig:general_b} analyzes the system performance when changing the D2D distance between devices. In this experiment, all devices are separated by the same distance to isolate its impact on the communication performance. The delay and offload ratio trends are consistent with the expected behavior as the average delay increases with the distance, from $0.363\,\text{s}$ to $0.421\,\text{s}$ (for $p_{ij}=0.08\,\text{W}$), since longer transmission times make offloading less attractive. Conversely, the average task energy consumption decreases with the distance because fewer tasks are offloaded, reducing the number of share transmissions and remote computations. The same trend is observed for different transmission powers, where higher powers increase the energy cost of each transmission. This can be observed in the task energy consumption, which moves from $0.154\,\text{J}$ for $p_{ij}=0.04\,\text{W}$ to $0.163\,\text{J}$ for $p_{ij}=0.16\,\text{W}$ (with $d_{ij}=500\,\text{m}$).

\subsection{System Scalability}
\label{app:system_scalability}

This subsection analyzes the system performance under different scalability configurations. Specifically, Figure~\ref{fig:scalability_arrival} evaluates the impact of different task-arrival regimes. As expected, increasing $\rho$ increases the number of backlogged tasks in the queues. The same occurs when the slot duration increases, since devices have more time to generate task requests before the next scheduling decision. The offload ratio follows a different behavior as it decreases for shorter slot durations because scheduling epochs are triggered more frequently and queues are less accumulated, which makes local execution more competitive in some slots. However, for a fixed slot duration, the offload ratio increases with higher values of $\rho$, since heavier task-arrival regimes create more opportunities to benefit from helper devices.

\begin{figure}[tbh]
    \centering
    \includegraphics[width=0.9\linewidth]{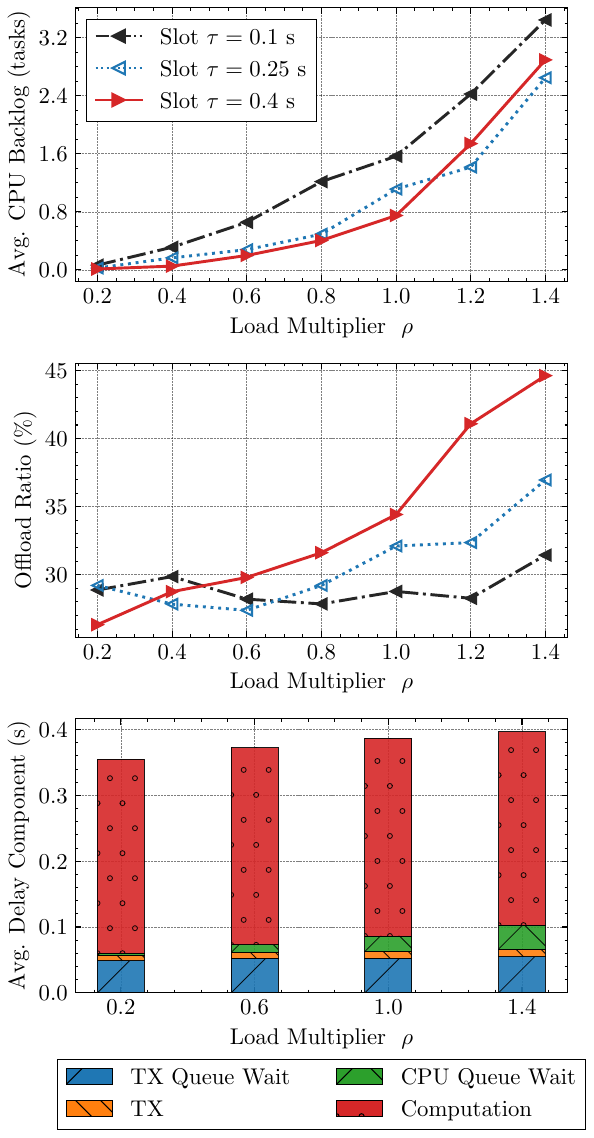}
    \caption{Scalability analysis under different task-arrival regimes and slot durations.}
    \label{fig:scalability_arrival}
\end{figure}

Additionally, we study the impact of task arrivals on the average task delay, considering its different components: transmission waiting time, transmission time, CPU waiting time, and computation time, including locally executed and offloaded workloads. The computation time and transmission waiting time remain approximately constant for different values of $\rho$. This can be explained by two factors: i) the computation time depends mainly on the workload distribution, not on the number of scheduled tasks; ii) for $\tau=0.1\,\text{s}$, the offload ratio remains mostly stable across different values of $\rho$, while the transmission service rate $R_{ij}(t)$ is fixed. In contrast, CPU waiting time clearly increases when the number of scheduled tasks grows, going from $3.453\,\text{ms}$ to $36.392\,\text{ms}$ for $\tau=0.1\,\text{s}$. The transmission time also slightly increases for higher $\rho$, since memory constraints favor the offloading of tasks with fewer shares, which makes each selected offloaded workload larger on average. Therefore, there is no direct translation between $\rho$ and task delay, i.e., reducing or increasing the task-arrival rate by a factor $\rho$ does not reduce or increase the average task delay by the same factor.

Additionally, Figure~\ref{fig:scalability_workload} analyzes the system performance under different workload distributions and slot durations. The mean workload is fixed to $1{,}000\,\text{cycles/bit}$, while the lower and upper values are modified to change the workload variability. First, the average task delay decreases when workload variability increases for two reasons: i) higher variability produces a larger number of lightweight tasks, for which local execution becomes attractive and fast; ii) the offload ratio also increases for heavier tasks, allowing the scheduler to exploit coded execution when it provides a clear delay reduction.

\begin{figure}[tbh]
    \centering
    \includegraphics[width=0.95\linewidth]{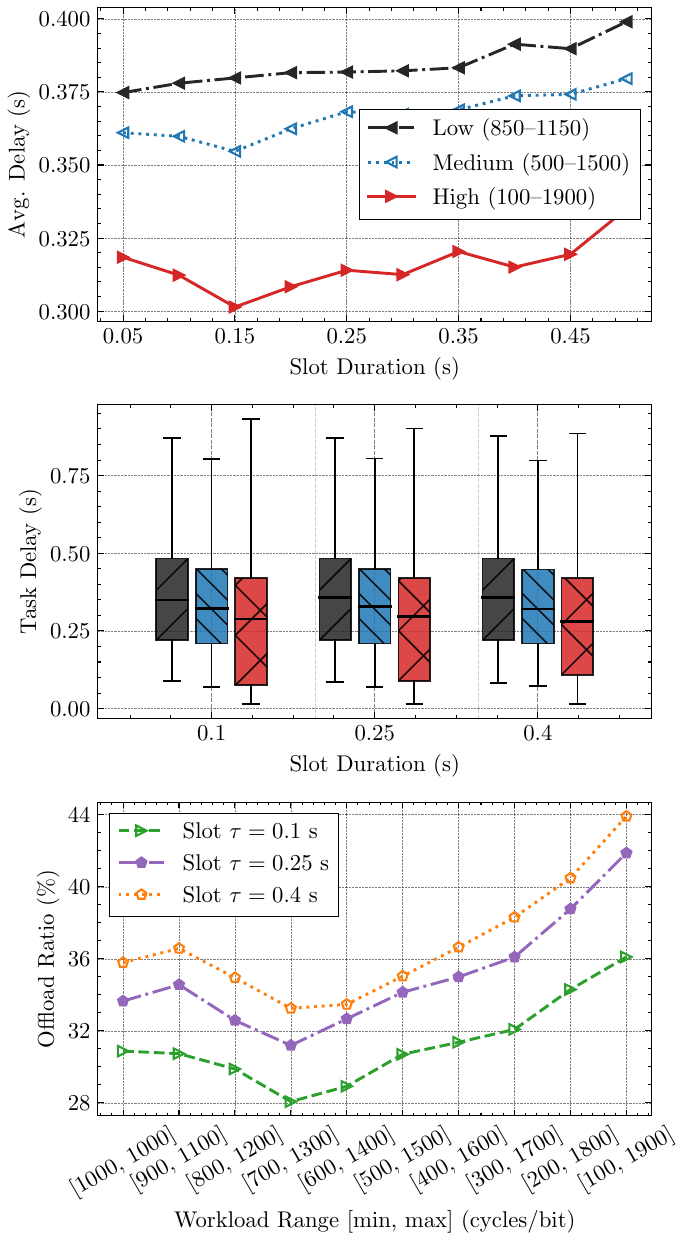}
    \caption{Scalability analysis under different workload distributions and slot durations.}
    \label{fig:scalability_workload}
\end{figure}

Moreover, as workload variability increases, the entire task-delay distribution shifts downwards. Specifically, the median delay drops from approximately $0.349\,\text{s}$ under low variability to $0.288\,\text{s}$ under high variability (for $\tau=0.1\,\text{s}$). This occurs because the wider distribution of CPU cycles per bit produces a significant fraction of lightweight tasks. These tasks are computed rapidly, as reflected by quartile-1 decreasing to $0.075$ s under high variability, which helps reduce queue pressure. Meanwhile, heavier tasks are more frequently handled through coded offloading. Comparing the boxplots across slot durations shows that the delay distributions remain stable for each workload preset. This suggests that the scheduling framework is temporally robust: as long as the system remains stable, changing the scheduling frequency does not significantly degrade the delay distribution. This stability is maintained because the system increases its offload ratio at larger slot durations, compensating for less frequent scheduling by using helper resources more aggressively.

\subsection{Delay-Energy Trade-off under Privacy Penalty}

The evaluated metrics in Figure~\ref{fig:energy_delay_privacy} are swept for different values of the pair $(\alpha,\beta)$, with $\beta=1-\alpha$. When $\alpha \to 0$, energy minimization is prioritized, which results in a negligible fraction of offloaded tasks. As a result, energy consumption is mainly limited to local computations, avoiding the coded procedure and, therefore, the overhead introduced by share transmissions and remote computations. This penalizes task delay, which reaches approximately $0.54\,\text{s}$ on average. For intermediate values of $\alpha$, e.g., $\alpha=0.5$, two main aspects can be observed: i) the impact of the $\delta$ parameter; ii) the trade-off between delay and energy. Regarding the former, lower values of $\delta$ lead to a higher offload ratio and, therefore, to higher energy consumption and lower delay. Conversely, higher values of $\delta$ reduce the attractiveness of offloading because they increase the privacy penalty. This is because $\delta \to 0$ provides almost zero privacy leakage, making offloading more favorable. Finally, when $\alpha \to 1$, delay minimization is prioritized, increasing the number of offloaded tasks to exploit coded execution and threshold-based recovery. In this case, the offload ratio remains in the range of $90\%$--$100\%$ independently of $\delta$, i.e., the privacy penalty is not enough to overcome the delay benefit of coded offloading. However, energy consumption is penalized, reaching approximately $0.255\,\text{J}$ on average.

\begin{figure}[tbh]
    \centering
    \includegraphics[width=0.95\linewidth]{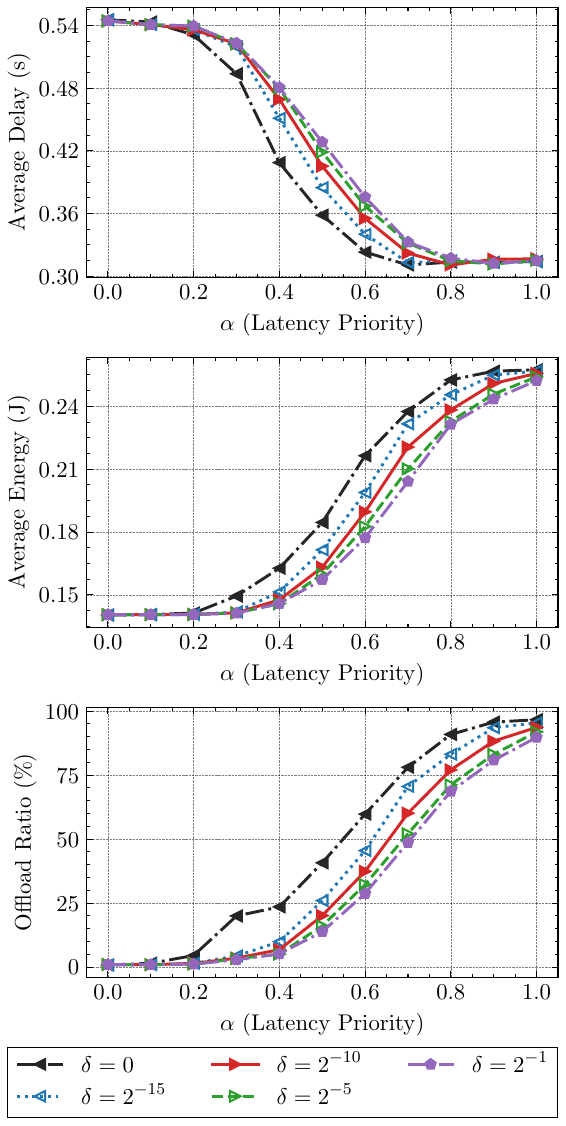}
    \caption{Delay--energy trade-off under different values of the $\delta$-noisy leakage parameter with $\gamma=0.05$.}
    \label{fig:energy_delay_privacy}
\end{figure}

\end{document}

%% file: refs.bbl

%% file: main_arXiv.bbl
\begin{thebibliography}{10}
\providecommand{\url}[1]{#1}
\csname url@samestyle\endcsname
\providecommand{\newblock}{\relax}
\providecommand{\bibinfo}[2]{#2}
\providecommand{\BIBentrySTDinterwordspacing}{\spaceskip=0pt\relax}
\providecommand{\BIBentryALTinterwordstretchfactor}{4}
\providecommand{\BIBentryALTinterwordspacing}{\spaceskip=\fontdimen2\font plus
\BIBentryALTinterwordstretchfactor\fontdimen3\font minus \fontdimen4\font\relax}
\providecommand{\BIBforeignlanguage}[2]{{%
\expandafter\ifx\csname l@#1\endcsname\relax
\typeout{** WARNING: IEEEtran.bst: No hyphenation pattern has been}%
\typeout{** loaded for the language `#1'. Using the pattern for}%
\typeout{** the default language instead.}%
\else
\language=\csname l@#1\endcsname
\fi
#2}}
\providecommand{\BIBdecl}{\relax}
\BIBdecl

\bibitem{liu2025integrated}
Z.~Liu, X.~Chen, H.~Wu, Z.~Wang, X.~Chen, D.~Niyato, and K.~Huang, ``Integrated sensing and edge ai: Realizing intelligent perception in 6g,'' \emph{IEEE Communications Surveys \& Tutorials}, 2025.

\bibitem{wen2024integrated}
D.~Wen, X.~Li, Y.~Zhou, Y.~Shi, S.~Wu, and C.~Jiang, ``Integrated sensing-communication-computation for edge artificial intelligence,'' \emph{IEEE Internet of Things Magazine}, vol.~7, no.~4, pp. 14--20, 2024.

\bibitem{gkonis2023survey}
P.~Gkonis, A.~Giannopoulos, P.~Trakadas, X.~Masip-Bruin, and F.~D’Andria, ``A survey on iot-edge-cloud continuum systems: Status, challenges, use cases, and open issues,'' \emph{Future Internet}, vol.~15, no.~12, p. 383, 2023.

\bibitem{ullah2023orchestration}
A.~Ullah, T.~Kiss, J.~Kov{\'a}cs, F.~Tusa, J.~Deslauriers, H.~Dagdeviren, R.~Arjun, and H.~Hamzeh, ``Orchestration in the cloud-to-things compute continuum: taxonomy, survey and future directions,'' \emph{Journal of Cloud Computing}, vol.~12, no.~1, pp. 1--29, 2023.

\bibitem{marchese2023application}
A.~Marchese and O.~Tomarchio, ``Application and infrastructure-aware orchestration in the cloud-to-edge continuum,'' in \emph{2023 IEEE 16th International Conference on Cloud Computing (CLOUD)}.\hskip 1em plus 0.5em minus 0.4em\relax IEEE, 2023, pp. 262--271.

\bibitem{rosmaninho2025edge}
R.~Rosmaninho, D.~Raposo, P.~Rito, and S.~Sargento, ``Edge-cloud continuum orchestration of critical services: A smart-city approach,'' \emph{IEEE Transactions on Services Computing}, 2025.

\bibitem{patel2024modeling}
Y.~S. Patel, P.~Townend, A.~Singh, and P.-O. {\"O}stberg, ``Modeling the green cloud continuum: integrating energy considerations into cloud--edge models,'' \emph{Cluster computing}, vol.~27, no.~4, pp. 4095--4125, 2024.

\bibitem{lopez2022depth}
J.~J. Lopez~Escobar, R.~P. D{\'\i}az~Redondo, and F.~Gil-Castineira, ``In-depth analysis and open challenges of mist computing,'' \emph{Journal of Cloud Computing}, vol.~11, no.~1, p.~81, 2022.

\bibitem{iorio2022computing}
M.~Iorio, F.~Risso, A.~Palesandro, L.~Camiciotti, and A.~Manzalini, ``Computing without borders: The way towards liquid computing,'' \emph{IEEE Transactions on Cloud Computing}, vol.~11, no.~3, pp. 2820--2838, 2022.

\bibitem{peng2024survey}
P.~Peng, W.~Lin, W.~Wu, H.~Zhang, S.~Peng, Q.~Wu, and K.~Li, ``A survey on computation offloading in edge systems: From the perspective of deep reinforcement learning approaches,'' \emph{Computer Science Review}, vol.~53, p. 100656, 2024.

\bibitem{pournazari2025computation}
J.~Pournazari, A.~Ullah, A.~Al-Dubai, and X.~Liu, ``Computation offloading in the edge-to-cloud compute continuum: a survey of federated architectural solutions,'' \emph{Cluster Computing}, vol.~28, no.~13, p. 839, 2025.

\bibitem{wang2025d2d}
Y.~Wang, D.~Kong, H.~Chai, H.~Qiu, R.~Xue, and S.~Li, ``D2d assisted cooperative computational offloading strategy in edge cloud computing networks,'' \emph{Scientific Reports}, vol.~15, no.~1, p. 12303, 2025.

\bibitem{wang2023location}
Z.~Wang, Y.~Sun, D.~Liu, J.~Hu, X.~Pang, Y.~Hu, and K.~Ren, ``Location privacy-aware task offloading in mobile edge computing,'' \emph{IEEE Transactions on Mobile Computing}, vol.~23, no.~3, pp. 2269--2283, 2023.

\bibitem{zhang2024novel}
D.-G. Zhang, H.-Z. An, J.~Zhang, T.~Zhang, W.-M. Dong, and X.-R. Jiang, ``Novel privacy awareness task offloading approach based on privacy entropy,'' \emph{IEEE Transactions on Network and Service Management}, vol.~21, no.~3, pp. 3598--3608, 2024.

\bibitem{li2020coded}
S.~Li and S.~Avestimehr, ``Coded computing: Mitigating fundamental bottlenecks in large-scale distributed computing and machine learning,'' \emph{Foundations and Trends in Communications and Information Theory}, vol.~17, no.~1, pp. 1--148, 2020.

\bibitem{ng2020survey}
J.~S. Ng, W.~Y.~B. Lim, N.~C. Luong, Z.~Xiong, A.~Asheralieva, D.~Niyato, C.~Leung, and C.~Miao, ``A survey of coded distributed computing,'' \emph{arXiv preprint arXiv:2008.09048}, 2020.

\bibitem{bitar2020minimizing}
R.~Bitar, P.~Parag, and S.~El~Rouayheb, ``Minimizing latency for secure coded computing using secret sharing via staircase codes,'' \emph{IEEE Transactions on Communications}, vol.~68, no.~8, pp. 4609--4619, 2020.

\bibitem{schlegel2022privacy}
R.~Schlegel, S.~Kumar, E.~Rosnes, and A.~G. i~Amat, ``Privacy-preserving coded mobile edge computing for low-latency distributed inference,'' \emph{IEEE Journal on Selected Areas in Communications}, vol.~40, no.~3, pp. 788--799, 2022.

\bibitem{beck2024survey}
S.~Beck, M.~Raavi, C.~Dale, K.~Weishalla, and B.~Worrell, ``Survey of side-channel vulnerabilities for short-range wireless communication technologies,'' in \emph{2024 IEEE International Conference on Electro Information Technology (eIT)}.\hskip 1em plus 0.5em minus 0.4em\relax IEEE, 2024, pp. 450--456.

\bibitem{gupta2026security}
U.~Gupta and H.~Mahdavifar, ``On the security of linear secret sharing with general noisy side-channel leakage,'' in \emph{Annual International Conference on the Theory and Applications of Cryptographic Techniques}.\hskip 1em plus 0.5em minus 0.4em\relax Springer, 2026, pp. 477--505.

\bibitem{sadatdiynov2023review}
K.~Sadatdiynov, L.~Cui, L.~Zhang, J.~Z. Huang, S.~Salloum, and M.~S. Mahmud, ``A review of optimization methods for computation offloading in edge computing networks,'' \emph{Digital Communications and Networks}, vol.~9, no.~2, pp. 450--461, 2023.

\bibitem{sufyan2020computation}
F.~Sufyan and A.~Banerjee, ``Computation offloading for distributed mobile edge computing network: A multiobjective approach,'' \emph{IEEE Access}, vol.~8, pp. 149\,915--149\,930, 2020.

\bibitem{jia2021energy}
M.~Jia, J.~Zhu, and H.~Huang, ``Energy and delay-ware massive task scheduling in fog-cloud computing system,'' \emph{Peer-to-Peer Networking and Applications}, vol.~14, no.~4, pp. 2139--2155, 2021.

\bibitem{niu2025pipelining}
H.~Niu, L.~Wang, K.~Du, Z.~Lu, X.~Wen, and Y.~Liu, ``A pipelining task offloading strategy via delay-aware multi-agent reinforcement learning in cybertwin-enabled 6g network,'' \emph{Digital Communications and Networks}, vol.~11, no.~1, pp. 92--105, 2025.

\bibitem{khan2023dynamic}
S.~Khan, J.~Zheng, S.~Khan, Z.~Masood, and M.~P. Akhter, ``Dynamic offloading technique for real-time edge-to-cloud computing in heterogeneous mec--mcc and iot devices,'' \emph{Internet of Things}, vol.~24, p. 100996, 2023.

\bibitem{ding2021potential}
Y.~Ding, K.~Li, C.~Liu, and K.~Li, ``A potential game theoretic approach to computation offloading strategy optimization in end-edge-cloud computing,'' \emph{IEEE Transactions on Parallel and Distributed Systems}, vol.~33, no.~6, pp. 1503--1519, 2021.

\bibitem{aggarwal2025distributed}
S.~Aggarwal, M.~Bastopcu, S.~Ulukus, T.~Ba{\c{s}}ar \emph{et~al.}, ``Distributed offloading in multi-access edge computing systems: A mean-field perspective,'' \emph{arXiv preprint arXiv:2501.18718}, 2025.

\bibitem{he2024multi}
H.~He, X.~Yang, X.~Mi, H.~Shen, and X.~Liao, ``Multi-agent deep reinforcement learning based dynamic task offloading in a device-to-device mobile-edge computing network to minimize average task delay with deadline constraints,'' \emph{Sensors}, vol.~24, no.~16, p. 5141, 2024.

\bibitem{malik2022efficient}
U.~M. Malik, M.~A. Javed, J.~Frnda, J.~Rozhon, and W.~U. Khan, ``Efficient matching-based parallel task offloading in iot networks,'' \emph{Sensors}, vol.~22, no.~18, p. 6906, 2022.

\bibitem{bitar2021private}
R.~Bitar, Y.~Xing, Y.~Keshtkarjahromi, V.~Dasari, S.~El~Rouayheb, and H.~Seferoglu, ``Private and rateless adaptive coded matrix-vector multiplication,'' \emph{EURASIP Journal on Wireless Communications and Networking}, vol. 2021, no.~1, p.~15, 2021.

\bibitem{zhang2025joint}
J.~Zhang, Y.~Hu, M.~Shao, and X.~Li, ``Joint energy-aware task offloading and privacy protection in healthcare monitoring systems via deep reinforcement learning,'' \emph{Scientific Reports}, 2025.

\bibitem{xia2024privacy}
F.~Xia, Y.~Chen, and J.~Huang, ``Privacy-preserving task offloading in mobile edge computing: A deep reinforcement learning approach,'' \emph{Software: Practice and Experience}, vol.~54, no.~9, pp. 1774--1792, 2024.

\bibitem{baek2020privacy}
H.~Baek, H.~Ko, and S.~Pack, ``Privacy-preserving and trustworthy device-to-device (d2d) offloading scheme,'' \emph{IEEE Access}, vol.~8, pp. 191\,551--191\,560, 2020.

\bibitem{li2021security}
Z.~Li, H.~Hu, H.~Hu, B.~Huang, J.~Ge, and V.~Chang, ``Security and energy-aware collaborative task offloading in d2d communication,'' \emph{Future Generation Computer Systems}, vol. 118, pp. 358--373, 2021.

\bibitem{yu2023privacy}
\BIBentryALTinterwordspacing
H.~Yu, J.~Liu, C.~Hu, and Z.~Zhu, ``Privacy-preserving task offloading strategies in {MEC},'' \emph{Sensors}, vol.~23, no.~1, p.~95, 2023. [Online]. Available: \url{https://doi.org/10.3390/s23010095}
\BIBentrySTDinterwordspacing

\bibitem{zhu2021privacy}
D.~Zhu, T.~Li, H.~Liu, J.~Sun, L.~Geng, and Y.~Liu, ``Privacy-aware online task offloading for mobile-edge computing,'' \emph{Wireless Communications and Mobile Computing}, vol. 2021, no.~1, p. 6622947, 2021.

\bibitem{razaq2021privacy}
M.~M. Razaq, B.~Tak, L.~Peng, and M.~Guizani, ``Privacy-aware collaborative task offloading in fog computing,'' \emph{IEEE Transactions on Computational Social Systems}, vol.~9, no.~1, pp. 88--96, 2021.

\bibitem{chen2007secure}
H.~Chen, R.~Cramer, S.~Goldwasser, R.~De~Haan, and V.~Vaikuntanathan, ``Secure computation from random error correcting codes,'' in \emph{Annual International Conference on the Theory and Applications of Cryptographic Techniques}.\hskip 1em plus 0.5em minus 0.4em\relax Springer, 2007, pp. 291--310.

\bibitem{zhao2017energy}
P.~Zhao, H.~Tian, C.~Qin, and G.~Nie, ``Energy-saving offloading by jointly allocating radio and computational resources for mobile edge computing,'' \emph{IEEE access}, vol.~5, pp. 11\,255--11\,268, 2017.

\bibitem{ETSI_TS_138_101_1_V16_12_1_2022}
\BIBentryALTinterwordspacing
{ETSI}, ``{5G; NR; User Equipment (UE) radio transmission and reception; Part 1: Range 1 Standalone},'' {European Telecommunications Standards Institute}, {Sophia Antipolis, France}, {Technical Specification} {ETSI TS 138 101-1 V16.12.1}, Jul. 2022, {3GPP TS 38.101-1 version 16.12.1 Release 16}. [Online]. Available: \url{https://www.etsi.org/standards-search}
\BIBentrySTDinterwordspacing

\bibitem{benhamouda2021local}
F.~Benhamouda, A.~Degwekar, Y.~Ishai, and T.~Rabin, ``On the local leakage resilience of linear secret sharing schemes: F. benhamouda et al.'' \emph{Journal of Cryptology}, vol.~34, no.~2, p.~10, 2021.

\bibitem{bruno1974scheduling}
J.~Bruno, E.~G. Coffman~Jr, and R.~Sethi, ``Scheduling independent tasks to reduce mean finishing time,'' \emph{Communications of the ACM}, vol.~17, no.~7, pp. 382--387, 1974.

\end{thebibliography}
